\newif\ifextended 
\newif\ifauthor 
\newif\iforcid 
\newif\ifcolor 
\newif\ifthesis 
\newif\iftprint 
\newcommand{\gfivewidth}{169mm}
\newcommand{\gfiveheight}{239mm}
\newlength{\evenmargin}
\lstdefinestyle{alg}{%
  basicstyle=\sffamily\footnotesize,
  columns=fullflexible,
  morekeywords={let,match,if,else,in,then,with,function,for,return},
  numbersep=5pt,
  morecomment=[l]{//},
  commentstyle = \rmfamily\ifcolor\color{olive}\else\color{gray}\fi,
}
\lstdefinelanguage{calc}{%
  morecomment=[l]{//},
  columns=fullflexible,
  commentstyle = \ifcolor\color{olive}\else\color{gray}\fi,
}
\lstdefinelanguage{CorePPL}{%
  morekeywords={mexpr,let,assume,observe,true,false,include,type,con,in,lam,match,with,then,else,never,recursive,weight,resample,if},
  morecomment=[l]{--},
  commentstyle = \ifcolor\color{olive}\else\color{gray}\fi,
  numbers=left,
  xleftmargin=2em,
  numbersep=3pt,
  columns=fullflexible,
  basicstyle=\ttfamily\scriptsize,
}
\pgfplotsset{compat=1.16}
\newcommand{\ttt}[1]{\texttt{\upshape #1}}
\newcommand{\tsf}[1]{\textsf{\upshape #1}}
\newcommand{\tsc}{\textsc}
\newcommand{\tbf}{\textbf}
\newcommand{\mi}{\mathit}
\newcommand{\term}{\textbf{\tsf{t}}}
\newcommand{\termv}{\textbf{\tsf{v}}}
\newcommand{\unaligned}{\mi{unaligned}}
\newcommand{\s}{\enspace}
\newcommand\concat{\mathbin\Vert}
\newcommand{\false}{\text{\upshape false}}
\newcommand{\true}{\text{\upshape true}}
\newcommand{\termanf}{\term_\textrm{\normalfont ANF}}
\newcommand{\Termanf}{T_\textrm{\normalfont ANF}}
\newcommand\restt{{\widehat{A}_\term}}
\newcommand{\absval}{\textbf{\tsf{a}}}
\newcommand{\cstr}{\textbf{\tsf{c}}}
\newcommand{\pat}{\tsf{\textbf{p}}}
\newcommand\termgeo{\term_\mi{geo}}
\newcommand\termanfex{\term_\mi{example}}
\newcommand\sem[3]{{\displaystyle\hspace{1mm}\prescript{#2\vphantom{#3}}{\vphantom{#1}}{\Downarrow}^{#3\vphantom{#2}}_{#1}\hspace{1mm}}}
\newcommand\alignedcolor{gray}
\newcommand\unalignedcolor{white}
\def\orcidID#1{\smash{\href{http://orcid.org/#1}{\protect\raisebox{-1.25pt}{\protect\includegraphics{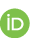}}}}}
\def\orcidID#1{}
\begin{document}

%
\title{%
  Automatic Alignment in Higher-Order Probabilistic Programming Languages%
  \thanks{This project is financially supported by the Swedish Foundation for Strategic Research (FFL15-0032 and RIT15-0012), and also partially supported by the Wallenberg Al, Autonomous Systems and Software Program (WASP) funded by the Knut and Alice Wallenberg Foundation, and the Swedish Research Council (grants 2018-04620 and 2021-04830). The research has also been carried out as part of the Vinnova Competence Center for Trustworthy Edge Computing Systems and Applications at KTH Royal Institute of Technology.}
}

\titlerunning{Automatic Alignment in Higher-Order PPLs}

\author{%
  Daniel Lundén\inst{1}(\raisebox{-2.4pt}{\Envelope})\orcidID{0000-0003-3127-5640} \and%
  Gizem Çaylak\inst{1}\orcidID{0000-0001-9703-6912} \and %
  Fredrik Ronquist\inst{2,3}\orcidID{0000-0002-3929-251X} \and \\ %
  David Broman\inst{1}\orcidID{0000-0001-8457-4105}
}

\authorrunning{D. Lundén et al.}

\institute{%
  EECS and Digital Futures, KTH Royal Institute of Technology, Stockholm, Sweden, \email{\{dlunde,caylak,dbro\}@kth.se} \and
  Department of Bioinformatics and Genetics, Swedish Museum of Natural History, Stockholm, Sweden, \email{fredrik.ronquist@nrm.se} \and
  Department of Zoology, Stockholm University, Stockholm, Sweden
}
\maketitle              

\ifauthor
\begin{textblock*}{1.2\textwidth}(1in + \oddsidemargin - 0.1\textwidth,10pt)
  \noindent
  \scriptsize
  This is an author-prepared version\ifextended, extended with appendices and minor additions to the main text\fi.
  © The Author(s) 2023.
  This version of the contribution\ifextended, except the extensions,\fi\ has been accepted for publication at ESOP 2023, after peer review, but is not the Version of Record (the version published by Springer) and does not reflect post-acceptance improvements, or any corrections.
  The Version of Record is available online at: \url{https://doi.org/10.1007/978-3-031-30044-8_20}.
\end{textblock*}
\fi

\begin{abstract}
  Probabilistic Programming Languages (PPLs) allow users to encode statistical inference problems and automatically apply an \emph{inference algorithm} to solve them.
  Popular inference algorithms for PPLs, such as sequential Monte Carlo (SMC) and Markov chain Monte Carlo (MCMC), are built around \emph{checkpoints}---relevant events for the inference algorithm during the execution of a probabilistic program.
  Deciding the location of checkpoints is, in current PPLs, not done optimally.
  To solve this problem, we present a static analysis technique that automatically determines
  checkpoints in programs, relieving PPL users of this task.
  The analysis identifies a set of checkpoints that execute in the same order in every program run---they are \emph{aligned}.
  We formalize alignment, prove the correctness of the analysis, and implement the analysis as part of the higher-order functional PPL Miking CorePPL.
  By utilizing the alignment analysis, we design two novel inference algorithm variants: \emph{aligned SMC} and \emph{aligned lightweight MCMC}. We show, through real-world experiments, that they significantly improve inference execution time and accuracy compared to standard PPL versions of SMC and MCMC.


  \keywords{Probabilistic programming \and Operational semantics \and Static analysis.}
\end{abstract}


\section{Introduction}

Probabilistic programming languages (PPLs) are languages used to encode statistical inference problems, common in research fields such as phylogenetics~\cite{ronquist2021universal}, computer vision~\cite{gothoskar20213dp3}, topic modeling~\cite{blei2003latent}, data cleaning~\cite{lew2021pclean}, and cognitive science~\cite{goodman2016probabilistic}.
PPL implementations automatically solve encoded problems by applying an \emph{inference algorithm}.
In particular, automatic inference allows users to solve inference problems without having in-depth knowledge of inference algorithms and how to apply them.
Some examples of PPLs are WebPPL~\cite{goodman2014design}, Birch~\cite{murray2018automated}, Anglican~\cite{wood2014new}, Miking CorePPL~\cite{lunden2022compiling}, Turing~\cite{ge2018turing}, and Pyro~\cite{bingham2019pyro}.

Sequential Monte Carlo (SMC) and Markov chain Monte Carlo (MCMC) are general-purpose families of inference algorithms often used for PPL implementations.
These algorithms share the concept of \emph{checkpoints}: relevant execution events for the inference algorithm.
For SMC, the checkpoints are \emph{likelihood updates}~\cite{wood2014new,goodman2014design} and determine the \emph{resampling} of executions.
Alternatively, users must sometimes manually annotate or write the probabilistic program in a certain way to make resampling explicit~\cite{lunden2022compiling,murray2018automated}.
For MCMC, checkpoints are instead \emph{random draws}, which allow the inference algorithm to manipulate these draws to construct a Markov chain over program executions~\cite{wingate2011lightweight,ritchie2016c3}.
When designing SMC and MCMC algorithms for \emph{universal PPLs}%
\footnote{%
  A term coined by Goodman et al.~\cite{goodman2008church}.
  Essentially, it means that the types and numbers of random variables cannot be determined statically.
}%
, both the \emph{placement} and \emph{handling} of checkpoints are critical to making the inference both efficient and accurate.


For SMC, a standard inference approach is to resample at \emph{all} likelihood updates~\cite{goodman2014design,wood2014new}.
This approach produces correct results asymptotically~\cite{lunden2021correctness} but is highly problematic for certain models~\cite{ronquist2021universal}.
Such models require non-trivial and SMC-specific manual program rewrites to force good resampling locations and make SMC tractable.
Overall, choosing the likelihood updates at which to resample significantly affects SMC execution time and accuracy.

For MCMC, a standard approach for inference in universal PPLs is \emph{lightweight MCMC}~\cite{wingate2011lightweight}, which constructs a Markov chain over random draws in programs.
The key idea is to use an \emph{addressing transformation} and a \emph{runtime database} of random draws.
Specifically, the database enables matching and reusing random draws between executions according to their \emph{stack traces}, even if the random draws may or may not occur due to randomness during execution.
However, the dynamic approach of looking up random draws in the database through their stack traces is expensive and introduces significant runtime overhead.

To overcome the SMC and MCMC problems in universal PPLs, we present a static analysis technique for higher-order functional PPLs that \emph{automatically} determines checkpoints in a probabilistic program that always occur in the same order in every program execution---they are \emph{aligned}.
We formally define alignment, formalize the alignment analysis, and prove the soundness of the analysis with respect to the alignment definition.
The novelty and challenge in developing the static analysis technique is to capture alignment properties through the identification of expressions in programs that may evaluate to \emph{stochastic values} and expressions that may evaluate due to \emph{stochastic branching}.
Stochastic branching results from \ttt{if} expressions with stochastic values as conditions or function applications where the function itself is stochastic.
Stochastic values and branches pose a significant challenge when proving the soundness of the analysis.

We design two new inference algorithms that improve accuracy and execution time compared to current approaches.
Unlike the standard SMC algorithm for PPLs~\cite{wood2014new,goodman2014design}, \emph{aligned SMC} only resamples at aligned likelihood updates.
Resampling only at aligned likelihood updates guarantees that each SMC execution resamples the same number of times, which makes expensive global termination checks redundant~\cite{lunden2022compiling}.
We evaluate aligned SMC on two diversification models from Ronquist et al.~\cite{ronquist2021universal} and a state-space model for aircraft localization, demonstrating significantly improved inference accuracy and execution time compared to traditional SMC.
Both models---constant rate birth-death (CRBD) and cladogenetic diversification rate shift (ClaDS)---are used in real-world settings and are of considerable interest to evolutionary biologists~\cite{nee2006birth,maliet2019model}.
The documentations of both Anglican~\cite{wood2014new} and Turing~\cite{ge2018turing} acknowledge the importance of alignment for SMC and state that all likelihood updates must be aligned.
However, Turing and Anglican neither formalize nor enforce this property---it is up to the users to \emph{manually} guarantee it, often requiring non-standard program rewrites~\cite{ronquist2021universal}.

We also design \emph{aligned lightweight MCMC}, a new version of lightweight MCMC~\cite{wingate2011lightweight}.
Aligned lightweight MCMC constructs a Markov chain over the program using the aligned random draws as \emph{synchronization points} to match and reuse aligned random draws and a subset of unaligned draws between executions.
Aligned lightweight MCMC does not require a runtime database of random draws and therefore reduces runtime overhead.
We evaluate aligned lightweight MCMC for latent Dirichlet allocation (LDA)~\cite{blei2003latent} and CRBD~\cite{ronquist2021universal}, demonstrating significantly reduced execution times and no decrease in inference accuracy.
Furthermore, automatic alignment is orthogonal to and easily combines with the lightweight MCMC optimizations introduced by Ritchie et al.~\cite{ritchie2016c3}.

We implement the analysis, aligned SMC, and aligned lightweight MCMC in Miking CorePPL~\cite{lunden2022compiling,broman2019vision}.
In addition to analyzing stochastic \ttt{if}-branching, the implementation analyzes stochastic branching at a standard pattern-matching construct.
Compared to \ttt{if} expressions, the pattern-matching construct requires a more sophisticated analysis of the pattern and the value matched against it to determine if the pattern-matching causes a stochastic branch.

In summary, we make the following contributions.
\begin{itemize}
  \item We invent and formalize alignment for PPLs.
    Aligned parts of a program occur in the same order in every execution (Section~\ref{sec:anf}).
  \item We formalize and prove the soundness of a novel static analysis technique that determines stochastic value flow and stochastic branching, and in turn alignment, in higher-order probabilistic programs (Section~\ref{sec:analysis}).
  \item We design aligned SMC inference that only resamples at aligned likelihood updates, improving execution time and inference accuracy (Section~\ref{sec:smcalign}).
  \item We design aligned lightweight MCMC inference that only reuses aligned random draws, improving execution time (Section~\ref{sec:mcmcalign}).
  \item We implement the analysis and inference algorithms in Miking CorePPL.
  The implementation extends the alignment analysis to identify stochastic branching resulting from pattern matching (Section~\ref{sec:implementation}).
\end{itemize}
Section~\ref{sec:eval} describes the evaluation and discusses its results.
The paper also has an accompanying artifact that supports the evaluation~\cite{lunden2023automaticartifact}.
Section~\ref{sec:relatedwork} discusses related work and Section~\ref{sec:conclusion} concludes.
Next, Section~\ref{sec:motivating} considers a simple motivating example to illustrate the key ideas.
Section~\ref{sec:syntaxsemantics} introduces syntax and semantics for the calculus used to formalize the alignment analysis.

\ifextended
\else
An extended version of the paper is also available at arXiv~\cite{lunden2023automatic}.
We use the symbol $^\dagger$ in the text to indicate that more information (e.g., proofs) is available in the extended version.
\fi

\section{A Motivating Example}\label{sec:motivating}
This section presents a motivating example that illustrates the key alignment ideas in relation to aligned SMC (Section~\ref{sec:alignedsmcmot}) and aligned lightweight MCMC (Section~\ref{sec:alignedlwmot}).
We assume basic knowledge of probability theory.
Knowledge of PPLs is helpful, but not a strict requirement.
The book by van de Meent et al.~\cite{vandemeent2018introduction} provides a good introduction to PPLs.

Probabilistic programs encode Bayesian statistical inference problems with two fundamental constructs: \texttt{assume} and \texttt{weight}.
The \texttt{assume} construct defines random variables, which make execution nondeterministic.
Intuitively, a probabilistic program then encodes a probability distribution over program executions (the prior distribution), and it is possible to sample from this distribution by executing the program with random sampling at \texttt{assume}s.
The \texttt{weight} construct updates the \emph{likelihood} of individual executions.
Updating likelihoods for executions modifies the probability distribution induced by \texttt{assume}s, and the inference problem encoded by the program is to determine or approximate this modified distribution (the posterior distribution).
The main purpose of \texttt{weight} in real-world models is to condition executions on observed data.%
\footnote{A number of more specialized constructs for likelihood updating are also available in various PPLs, for example \emph{observe}~\cite{wood2014new,goodman2014design} and \emph{condition}~\cite{goodman2014design}.}


\begin{figure}[tb]
  \centering
  \begin{subfigure}{0.4\columnwidth}
    \lstset{%
      basicstyle=\ttfamily\scriptsize,
      numbers=left,
      showlines=true,
      numbersep=3pt,
      framexleftmargin=-2pt,
      xleftmargin=2em,
      language=calc
    }
    \centering
    \hspace*{2mm}%
    \begin{tabular}{c}
      \begin{lstlisting}
let $\mi{rate}$ = assume $\textrm{Gamma}(2,2)$ in$\label{line:mot:gamma}$
let rec $\mi{survives}$ = $\lambda n$.$\label{line:mot:survives}$
  if $n = 0$ then $()$ else$\label{line:mot:stochbranch}$
    if assume $\textrm{Bernoulli}(0.9)$ then$\label{line:mot:bern}$
      weight $0.5$;$\label{line:mot:weight1}$
      $\mi{survives}$ ($n - 1$)$\label{line:mot:survrec}$
    else
      weight $0$$\label{line:mot:weight2}$
in
let rec $\mi{iter}$ = $\lambda i$.$\label{line:mot:iter}$
  if $i = 0$ then $()$ else$\label{line:mot:branch}$
    weight $\mi{rate}$;$\label{line:mot:weight3}$
    let $n$ = assume $\textrm{Poisson}(\mi{rate})$ in$\label{line:mot:poisson}$
    $\mi{survives}$ $n$;$\label{line:mot:scall}$
    $\mi{iter}$ ($i - 1$)$\label{line:mot:iterrec}$
in
$\mi{iter}$ $3$;$\label{line:mot:iterinit}$
$\mi{rate}$$\label{line:mot:return}$
      \end{lstlisting}
    \end{tabular}
    \caption{Probabilistic program.}
    \label{fig:mot:prog}
  \end{subfigure}
  \begin{minipage}{0.58\columnwidth}
    \centering
    \begin{subfigure}{0.49\textwidth}
      \centering
      \begin{tikzpicture}[
        declare function={gamma(\z)=
        (2.506628274631*sqrt(1/\z) + 0.20888568*(1/\z)^(1.5) + 0.00870357*(1/\z)^(2.5) - (174.2106599*(1/\z)^(3.5))/25920 - (715.6423511*(1/\z)^(4.5))/1244160)*exp((-ln(1/\z)-1)*\z);},
        declare function={gammapdf(\x,\k,\theta) = \x^(\k-1)*exp(-\x/\theta) / (\theta^\k*gamma(\k));}
        ]
        \scriptsize
        \begin{axis}[
          axis x line*=bottom,
          y axis line style={draw=none},
          tick style={draw=none},
          ytick=\empty,
          xmin=0, xmax=15,
          width=1.2\columnwidth,
          height=24mm,
          ymin=0,ymax=0.42,
          ]
          \addplot [smooth,domain=0:15] {gammapdf(x,2,2)};
        \end{axis}
      \end{tikzpicture}
      \caption{Gamma$(2,2)$.}
      \label{fig:mot:prior}
    \end{subfigure}
    \begin{subfigure}{0.49\textwidth}
      \centering
      \begin{tikzpicture}
        \scriptsize
        \begin{axis}[
          ymax=0.42,
          axis line style={draw=none},
          xtick style={draw=none},
          ytick=\empty,
          xmin=0, xmax=15,
          enlargelimits=false,
          width=1.2\columnwidth,
          height=24mm,
          ]
          \addplot+[
            black, fill=gray, mark=none, ybar interval
            ] table [header=false] {examples/motivating/example.dat};
        \end{axis}
      \end{tikzpicture}
      \caption{Histogram.}
      \label{fig:mot:post}
    \end{subfigure}\\[2mm]
    \begin{subfigure}{\textwidth}
      \hypersetup{hidelinks}
      \scriptsize
      \centering
      \begin{tikzpicture}[
        scale=0.40,
        minimum width=3.7mm,
        minimum height=2.5mm,
        inner sep=0
        ]

        \node at (-1,1.7) {$w_1$};
        \node[draw] at (0,1.7) {$\ref{line:mot:weight3}$};
        \node[draw] at (1,1.7) {$\ref{line:mot:weight3}$};
        \node[draw] at (2,1.7) {$\ref{line:mot:weight1}$};
        \node[draw] at (3,1.7) {$\ref{line:mot:weight1}$};
        \node[draw] at (4,1.7) {$\ref{line:mot:weight1}$};
        \node[draw] at (5,1.7) {$\ref{line:mot:weight3}$};
        \node[draw] at (6,1.7) {$\ref{line:mot:weight1}$};
        \node[draw] at (7,1.7) {$\ref{line:mot:weight1}$};

        \node at (-1,1.0) {$w_2$};
        \node[draw] at (0,1.0) {$\ref{line:mot:weight3}$};
        \node[draw] at (1,1.0) {$\ref{line:mot:weight1}$};
        \node[draw] at (2,1.0) {$\ref{line:mot:weight1}$};
        \node[draw] at (3,1.0) {$\ref{line:mot:weight3}$};
        \node[draw] at (4,1.0) {$\ref{line:mot:weight2}$};
        \node[draw] at (5,1.0) {$\ref{line:mot:weight3}$};
        \node[draw] at (6,1.0) {$\ref{line:mot:weight1}$};

        \node at                 (-1,0.2) {$w_1$};
        \node[draw,fill=gray] at (0,0.2) {$\ref{line:mot:weight3}$};
        \node[draw,fill=gray] at (3,0.2) {$\ref{line:mot:weight3}$};
        \node[draw] at           (4,0.2) {$\ref{line:mot:weight1}$};
        \node[draw] at           (5,0.2) {$\ref{line:mot:weight1}$};
        \node[draw] at           (6,0.2) {$\ref{line:mot:weight1}$};
        \node[draw,fill=gray] at (7,0.2) {$\ref{line:mot:weight3}$};
        \node[draw] at           (8,0.2) {$\ref{line:mot:weight1}$};
        \node[draw] at           (9,0.2) {$\ref{line:mot:weight1}$};

        \node at                 (-1,-0.5) {$w_2$};
        \node[draw,fill=gray] at (0,-0.5) {$\ref{line:mot:weight3}$};
        \node[draw] at           (1,-0.5) {$\ref{line:mot:weight1}$};
        \node[draw] at           (2,-0.5) {$\ref{line:mot:weight1}$};
        \node[draw,fill=gray] at (3,-0.5) {$\ref{line:mot:weight3}$};
        \node[draw] at           (4,-0.5) {$\ref{line:mot:weight2}$};
        \node[draw,fill=gray] at (7,-0.5) {$\ref{line:mot:weight3}$};
        \node[draw] at           (8,-0.5) {$\ref{line:mot:weight1}$};

      \end{tikzpicture}
      \caption{Aligning \ttt{weight}.}
      \label{fig:mot:weight}
    \end{subfigure}\\[2mm]
    \begin{subfigure}{\textwidth}
      \hypersetup{hidelinks}
      \scriptsize
      \centering
      \begin{tikzpicture}[
        scale=0.40,
        minimum width=3.7mm,
        minimum height=2.5mm,
        inner sep=0
        ]

        \node at (-1,1.7) {$s_1$};
        \node[draw] at (0,1.7)  {$\ref{line:mot:gamma}$};
        \node[draw] at (1,1.7)  {$\ref{line:mot:poisson}$};
        \node[draw] at (2,1.7)  {$\ref{line:mot:bern}$};
        \node[draw] at (3,1.7)  {$\ref{line:mot:poisson}$};
        \node[draw] at (4,1.7)  {$\ref{line:mot:bern}$};
        \node[draw] at (5,1.7)  {$\ref{line:mot:bern}$};
        \node[draw] at (6,1.7)  {$\ref{line:mot:bern}$};
        \node[draw] at (7,1.7)  {$\ref{line:mot:poisson}$};

        \node at (-1,1.0) {$s_2$};
        \node[draw] at (0,1.0) {$\ref{line:mot:gamma}$};
        \node[draw] at (1,1.0) {$\ref{line:mot:poisson}$};
        \node[draw] at (2,1.0) {$\ref{line:mot:bern}$};
        \node[draw] at (3,1.0) {$\ref{line:mot:bern}$};
        \node[draw] at (4,1.0) {$\ref{line:mot:poisson}$};
        \node[draw] at (5,1.0) {$\ref{line:mot:bern}$};
        \node[draw] at (6,1.0) {$\ref{line:mot:bern}$};
        \node[draw] at (7,1.0) {$\ref{line:mot:poisson}$};
        \node[draw] at (8,1.0) {$\ref{line:mot:bern}$};

        \node at (-1,0.2) {$s_1$};
        \node[draw,fill=gray] at (0,0.2) {$\ref{line:mot:gamma}$};
        \node[draw,fill=gray] at (1,0.2) {$\ref{line:mot:poisson}$};
        \node[draw] at           (2,0.2) {$\ref{line:mot:bern}$};
        \node[draw,fill=gray] at (4,0.2) {$\ref{line:mot:poisson}$};
        \node[draw] at           (5,0.2) {$\ref{line:mot:bern}$};
        \node[draw] at           (6,0.2) {$\ref{line:mot:bern}$};
        \node[draw] at           (7,0.2) {$\ref{line:mot:bern}$};
        \node[draw,fill=gray] at (8,0.2) {$\ref{line:mot:poisson}$};

        \node at (-1,-0.5) {$s_2$};
        \node[draw,fill=gray] at (0,-0.5) {$\ref{line:mot:gamma}$};
        \node[draw,fill=gray] at (1,-0.5) {$\ref{line:mot:poisson}$};
        \node[draw] at           (2,-0.5) {$\ref{line:mot:bern}$};
        \node[draw] at           (3,-0.5) {$\ref{line:mot:bern}$};
        \node[draw,fill=gray] at (4,-0.5) {$\ref{line:mot:poisson}$};
        \node[draw] at           (5,-0.5) {$\ref{line:mot:bern}$};
        \node[draw] at           (6,-0.5) {$\ref{line:mot:bern}$};
        \node[draw,fill=gray] at (8,-0.5) {$\ref{line:mot:poisson}$};
        \node[draw] at           (9,-0.5) {$\ref{line:mot:bern}$};

      \end{tikzpicture}
      \caption{Aligning \ttt{assume}.}
      \label{fig:mot:assume}
    \end{subfigure}
  \end{minipage}

  \caption{%
    A simple example illustrating alignment.
    Fig. (a) gives a probabilistic program using functional-style PPL pseudocode.
    Fig. (b) illustrates the Gamma$(2,2)$ probability density function.
    Fig. (c) illustrates a histogram over weighted $\mi{rate}$ samples produced by running the program in (a) a large number of times.
    Fig. (d) shows two line number sequences $w_1$ and $w_2$ of \ttt{weight}s encountered in two program runs (top) and how to align them (bottom).
    Fig. (e) shows two line number sequences $s_1$ and $s_2$ of \ttt{assume}s encountered in two program runs (top) and how to align them (bottom).
  }
  \label{fig:mot}
\end{figure}%
Consider the probabilistic program in Fig.~\ref{fig:mot:prog}.
The program is contrived and purposefully constructed to compactly illustrate alignment, but the real-world diversification models in Ronquist et al.~\cite{ronquist2021universal} that we also consider in Section~\ref{sec:eval} inspired the program's general structure.
The program defines (line~\ref{line:mot:gamma}) and returns (line~\ref{line:mot:return}) a Gamma-distributed random variable $\mi{rate}$.
Figure~\ref{fig:mot:prior} illustrates the Gamma distribution.
To modify the likelihood for values of $\mi{rate}$, the program executes the $\mi{iter}$ function (line~\ref{line:mot:iter}) three times, and the $\mi{survives}$ function (line~\ref{line:mot:survives}) a random number of times $n$ (line~\ref{line:mot:poisson}) within each $\mi{iter}$ call.

Conceptually, to infer the posterior distribution of the program, we execute the program infinitely many times.
In each execution, we draw samples for the random variables defined at \ttt{assume}, and accumulate the likelihood at \texttt{weight}.
The return value of the execution, weighted by the accumulated likelihood, represents one sample from the posterior distribution.
Fig.~\ref{fig:mot:post} shows a histogram of such weighted samples of $\mi{rate}$ resulting from a large number of executions of Fig.~\ref{fig:mot:prog}.
The fundamental inference algorithm that produces such weighted samples is called \emph{likelihood weighting} (a type of \emph{importance sampling}~\cite{naesseth2019elements}).
We see that, compared to the prior distribution for $\mi{rate}$ in Fig.~\ref{fig:mot:prior}, the posterior is more sharply peaked due to the likelihood modifications.

\subsection{Aligned SMC}\label{sec:alignedsmcmot}
Likelihood weighting can only handle the simplest of programs.
In Fig.~\ref{fig:mot:prog}, a problem with likelihood weighting is that we assign the weight 0 to many executions at line~\ref{line:mot:weight2}.
These executions contribute nothing to the final distribution.
SMC solves this by executing many program instances \emph{concurrently} and occasionally \emph{resampling} them (with replacement) based on their current likelihoods.
Resampling discards executions with lower weights (in the worst case, 0) and replaces them with executions with higher weights.
The most common approach in popular PPLs is to resample just after likelihood updates (i.e., calls to \ttt{weight}).

Resampling at all calls to \ttt{weight} in Fig.~\ref{fig:mot:prog} is suboptimal.
The best option is instead to \emph{only} resample at line~\ref{line:mot:weight3}.
This is because executions encounter lines~\ref{line:mot:weight1} and \ref{line:mot:weight2} a \emph{random} number of times due to the stochastic branch at line~\ref{line:mot:stochbranch}, while they encounter line \ref{line:mot:weight3} a fixed number of times.
As a result of resampling at lines~\ref{line:mot:weight1} and \ref{line:mot:weight2}, executions become \emph{unaligned}; in each resampling, executions can have reached either line~\ref{line:mot:weight1}, line~\ref{line:mot:weight2}, or line~\ref{line:mot:weight3}.
On the other hand, if we resample only at line~\ref{line:mot:weight3}, all executions will always have reached line~\ref{line:mot:weight3} for the same iteration of $\mi{iter}$ in every resampling.
Intuitively, this is a sensible approach since, when resampling, executions have progressed the same distance through the program.
We say that the \ttt{weight} at line~\ref{line:mot:weight3} is \emph{aligned}, and resampling only at aligned \texttt{weight}s results in our new inference approach called \emph{aligned SMC}.
Fig.~\ref{fig:mot:weight} visualizes the \texttt{weight} alignment for two sample executions of Fig.~\ref{fig:mot:prog}.

\subsection{Aligned Lightweight MCMC}\label{sec:alignedlwmot}
Another improvement over likelihood weighting is to construct a Markov chain over program executions.
It is beneficial to propose new executions in the Markov chain by making small, rather than large, modifications to the previous execution.
The lightweight MCMC \cite{wingate2011lightweight} algorithm does this by redrawing a single random draw in the previous execution, and then reusing as many other random draws as possible.
Random draws in the current and previous executions match through \emph{stack traces}---the sequences of applications leading up to a random draw.
Consider the random draw at line~\ref{line:mot:poisson} in Fig.~\ref{fig:mot:prog}.
It is called exactly three times in every execution.
If we identify applications and \texttt{assume}s by line numbers, we get the stack traces%
{\hypersetup{hidelinks}
$[\ref{line:mot:iterinit},\ref{line:mot:poisson}]$,
$[\ref{line:mot:iterinit},\ref{line:mot:iterrec},\ref{line:mot:poisson}]$, and
$[\ref{line:mot:iterinit},\ref{line:mot:iterrec},\ref{line:mot:iterrec},\ref{line:mot:poisson}]$}
for these three \texttt{assume}s in every execution.
Consequently, lightweight MCMC can reuse these draws by storing them in a database indexed by stack traces.

The stack trace indexing in lightweight MCMC is overly complicated when reusing aligned random draws.
Note that the \ttt{assume}s at lines~\ref{line:mot:gamma} and \ref{line:mot:poisson} in Fig~\ref{fig:mot:prog} are aligned, while the \ttt{assume} at line~\ref{line:mot:bern} is unaligned.
Fig.~\ref{fig:mot:assume} visualizes the \texttt{assume} alignment for two sample executions of Fig.~\ref{fig:mot:prog}.
Aligned random draws occur in the same same order in every execution, and are therefore trivial to match and reuse between executions through indexing by counting.
The appeal with stack trace indexing is to additionally allow reusing a subset of \emph{unaligned} draws.

A key insight in this paper is that aligned random draws can also act as \emph{synchronization points} in the program to allow reusing unaligned draws \emph{without} a stack trace database.
After an aligned draw, we reuse unaligned draws occurring up until the next aligned draw, as long as they syntactically originate at the same \ttt{assume} as the corresponding unaligned draws in the previous execution.
As soon as an unaligned draw does not originate from the same \ttt{assume} as in the previous execution, we redraw all remaining unaligned draws up until the next aligned draw.
Instead of a trace-indexed database, this approach requires storing a list of unaligned draws (tagged with identifiers of the \ttt{assume}s at which they originated) for each execution segment in between aligned random draws.
For example, for the execution $s_1$ in Fig.~\ref{fig:mot:assume}, we store lists of unaligned Bernoulli random draws from line~\ref{line:mot:bern} for each execution segment in between the three aligned random draws at line~\ref{line:mot:poisson}.
If a Poisson random draw $n$ at line~\ref{line:mot:poisson} does not change or decreases, we can reuse the stored unaligned Bernoulli draws up until the next Poisson random draw as $\mi{survives}$ executes $n$ or fewer times.
If the drawn $n$ instead increases to $n'$, we can again reuse all stored Bernoulli draws, but must supplement them with new Bernoulli draws to reach $n'$ draws in total.

As we show in Section~\ref{sec:eval}, using aligned draws as synchronization points works very well in practice and avoids the runtime overhead of the lightweight MCMC database.
However, manually identifying aligned parts of programs and rewriting them so that inference can make use of alignment is, if even possible, tedious, error-prone, and impractical for large programs.
This paper presents an automated approach to identifying aligned parts of programs.
Combining static alignment analysis and using aligned random draws as synchronization points form the key ideas of the new algorithm that we call \emph{aligned lightweight MCMC}.

\section{Syntax and Semantics}\label{sec:syntaxsemantics}

In preparation for the alignment analysis in Section~\ref{sec:align}, we require an idealized base calculus capturing the key features of expressive PPLs.
This section introduces such a calculus with a formal syntax (Section~\ref{sec:syntax}) and semantics (Section~\ref{sec:semantics}).
We assume a basic understanding of the lambda calculus (see, e.g., Pierce~\cite{pierce2002types} for a complete introduction).
Section~\ref{sec:implementation} further describes extending the idealized calculus and the analysis in Section~\ref{sec:align} to a full-featured PPL.

\subsection{Syntax}\label{sec:syntax}
We use the untyped lambda calculus as the base for our calculus.
We also add \ttt{let} expressions for convenience, and \ttt{if} expressions to allow intrinsic booleans to affect control flow.
The calculus is a subset of the language used in Fig.~\ref{fig:mot:prog}.
We inductively define terms $\term$ and values $\termv$ as follows.
\begin{definition}[Terms and values]
  \begin{equation}\label{eq:ast}
    \begin{gathered}
      \begin{aligned}
        \term \Coloneqq& \s
        x
        \s | \s
        c
        \s | \s
        \lambda x. \s \term
        \s | \s
        \term \s \term
        \s | \s
        \ttt{let } x = \term \ttt{ in } \term
        & \termv{} \Coloneqq& \s
        c
        \s | \s
        \langle\lambda x. \s \term,\rho\rangle
        \\ |& \s
        \ttt{if } \term \ttt{ then } \term \ttt{ else } \term
        \s | \s
        \ttt{assume } \term
        \s | \s
        \ttt{weight } \term &&\\
      \end{aligned} \\
      x,y \in X \quad
      \rho \in P \quad
      c \in C \quad
      \{ \false, \true, () \} \cup \mathbb{R} \cup D \subseteq C.
    \end{gathered}
  \end{equation}
  $X$ is a countable set of variable names, $C$ a set of intrinsic values and operations, and $D \subset C$ a set of probability distributions.
  The set $P$ contains all evaluation environments $\rho$, that is, partial functions mapping names in $X$ to values $\termv$.
  We use $T$ and $V$ to denote the set of all terms and values, respectively.
\end{definition}
\noindent
Values $\termv$ are intrinsics or closures, where closures are abstractions with an environment binding free variables in the abstraction body.
We require that $C$ include booleans, the unit value $()$, and real numbers.
The reason is that \ttt{weight} takes real numbers as argument and returns $()$ and that \ttt{if} expression conditions are booleans.
Furthermore, probability distributions are often over booleans and real numbers.
For example, we can include the normal distribution constructor $\mathcal{N} \in C$ that takes real numbers as arguments and produces normal distributions over real numbers.
For example, $\mathcal{N} \enspace 0 \enspace 1 \in D$, the standard normal distribution.
We often write functions in $C$ in infix position or with standard function application syntax for readability.
For example, $1 + 2$ with $+ \in C$ means $+$ 1 2, and $\mathcal N(0,1)$ means $\mathcal{N} \enspace 0 \enspace 1$.
Additionally, we use the shorthand $\term_1; \term_2$ for \ttt{let \textrm{\_} = $\term_1$ in $\term_2$}, where \_ is the do-not-care symbol.
That is, $\term_1; \term_2$ evaluates $\term_1$ for side effects only before evaluating $\term_2$.
Finally, the untyped lambda calculus supports recursion through fixed-point combinators.
We encapsulate this in the shorthand \ttt{let~rec~$f$~=~$\lambda x.\term_1$~in~$\term_2$} to conveniently define recursive functions.

The \ttt{assume} and \ttt{weight} constructs are PPL-specific.
We define random variables from intrinsic probability distributions with \ttt{assume}
(also known as \emph{sample} in PPLs with sampling-based inference).
For example, the term \ttt{let $x$ = assume~$\mathcal{N}(0,1)$~in~$\term$} defines $x$ as a random variable with a standard normal distribution in \term.
Boolean random variables combined with \ttt{if} expressions result in \emph{stochastic branching}---causing the alignment problem.
Lastly, \ttt{weight} (also known as \emph{factor} or \emph{score}) is a standard construct for likelihood updating (see, e.g., Borgström et al.~\cite{borgstrom2016lambda}).
Next, we illustrate and formalize a semantics for~\eqref{eq:ast}.

\subsection{Semantics}\label{sec:semantics}
\begin{figure}[bt]
  \centering
  \begin{subfigure}[b]{0.5\columnwidth}
    \lstset{%
      basicstyle=\ttfamily\scriptsize,
      numbers=left,
      showlines=true,
      numbersep=3pt,
      framexleftmargin=-2pt,
      xleftmargin=2em,
      language=calc
    }
    \centering
    \begin{tikzpicture}[node distance=4mm]
      \footnotesize
      \node[fill=black!30,minimum width=15mm,minimum height=3.1mm,anchor=north west]
        at (-17.5mm,1.3mm) {};
      \node (program) {%
        \begin{lstlisting}
let rec $\mi{geometric}$ = $\lambda \textrm{\_}$.
  let $x$ = assume $\!\textrm{Bernoulli}(0.5)\!$ in $\label{line:flip}$
  if $x$ then
    weight $1.5$; $\label{line:weight}$
    $1 + \mi{geometric}$ $()$
  else $1$
in $\mi{geometric}$ $()$
          \end{lstlisting}
        };
    \end{tikzpicture}
    \caption{Probabilistic program $\termgeo$.}
    \label{fig:geo:a}
  \end{subfigure}
  \begin{subfigure}[b]{0.44\columnwidth}
    \centering
    \begin{tikzpicture}[trim axis left, trim axis right]
      \pgfplotsset{title style={at={(0.58,0.20)}}}
      \begin{axis}[%
        axis x line*=bottom,
        y axis line style={draw=none},
        title=Standard geometric,
        width=53mm,
        height=23mm,
        tick style={draw=none},
        ymin=0,
        ymax=0.51,
        xmax=10,
        ytick=\empty,
        xtick=\empty,
        bar width=9pt,
        ]
        \def\p{0.5}
        \def\w{1.0}
        \def\Z{(1/\w*((1/(1-\p*\w))-1))}
        \def\sumNine{(1/\w*(((1-(\p*\w)^10)/(1-\p*\w))-1))}
        \addplot [
          fill=white,
          ybar,
          ] coordinates {%
            (1,  \p            / \Z)
            (2,  \p^2   * \w^1 / \Z)
            (3,  \p^3   * \w^2 / \Z)
            (4,  \p^4   * \w^3 / \Z)
            (5,  \p^5   * \w^4 / \Z)
            (6,  \p^6   * \w^5 / \Z)
            (7,  \p^7   * \w^6 / \Z)
            (8,  \p^8   * \w^7 / \Z)
            (9,  \p^9   * \w^8 / \Z)
          };
        \Z
      \end{axis}
      \begin{axis}[%
        axis x line*=bottom,
        y axis line style={draw=none},
        title=Weighted geometric,
        yshift=-9mm,
        width=53mm,
        height=23mm,
        tick style={draw=none},
        ymin=0,
        ymax=0.51,
        xmax=10,
        ytick=\empty,
        xtick={1,2,3,4,5,6,7,8,9},
        bar width=9pt,
        ]
        \def\p{0.5}
        \def\w{1.5}
        \def\Z{(1/\w*((1/(1-\p*\w))-1))}
        \def\sumNine{(1/\w*(((1-(\p*\w)^10)/(1-\p*\w))-1))}
        \addplot [
          fill=gray,
          ybar,
          ] coordinates {%
            (1,  \p            / \Z)
            (2,  \p^2   * \w^1 / \Z)
            (3,  \p^3   * \w^2 / \Z)
            (4,  \p^4   * \w^3 / \Z)
            (5,  \p^5   * \w^4 / \Z)
            (6,  \p^6   * \w^5 / \Z)
            (7,  \p^7   * \w^6 / \Z)
            (8,  \p^8   * \w^7 / \Z)
            (9,  \p^9   * \w^8 / \Z)
          };
        \Z
      \end{axis}
    \end{tikzpicture}
    \caption{Probability distributions.}
    \label{fig:geo:b}
  \end{subfigure}
  \caption{%
    A probabilistic program $\termgeo$~\cite{lunden2022compiling}, illustrating~\eqref{eq:ast}.
    Fig. (a) gives the program, and (b) the corresponding probability distributions.
    In (b), the $y$-axis gives the probability, and the $x$-axis gives the outcome (the number of coin flips).
    The upper part of (b) excludes the shaded \ttt{weight} at line~\ref{line:weight} in (a).
  }
  \label{fig:geo}
\end{figure}
Consider the small probabilistic program $\termgeo \in T$ in Fig.~\ref{fig:geo:a}.
The program encodes the standard geometric distribution via a function $\mi{geometric}$, which recursively flips a fair coin (a Bernoulli$(0.5)$ distribution) at line~\ref{line:flip} until the outcome is false (i.e., tails).
At that point, the program returns the total number of coin flips, including the last tails flip.
The upper part of Fig.~\ref{fig:geo:b} illustrates the result distribution for an infinite number of program runs with line~\ref{line:weight} ignored.

To illustrate the effect of \ttt{weight}, consider $\termgeo$ with line~\ref{line:weight} included.
This \ttt{weight} modifies the likelihood with a factor $1.5$ each time the flip outcome is true (or, heads).
Intuitively, this emphasizes larger return values, illustrated in the lower part of Fig.~\ref{fig:geo:b}.
Specifically, the (unnormalized) probability of seeing $n$ coin flips is $0.5^n\cdot1.5^{n-1}$, compared to $0.5^n$ for the unweighted version.
The factor $1.5^{n-1}$ is the result of the calls to \ttt{weight}.

\begin{figure}[tb]
  \renewcommand{\columnwidth}{\textwidth} 
  \[\footnotesize
    \begin{gathered}
      \frac{}
      { \rho \vdash x \sem{[]}{[]}{1} \rho(x) }
      (\textsc{Var}) \qquad
      \frac{}
      { \rho \vdash c \sem{[]}{[]}{1} c }
      (\textsc{Const}) \qquad
      \frac{}
      { \rho \vdash \lambda x. \term \sem{[]}{[]}{1} \langle\lambda x. \term,\rho\rangle }
      (\textsc{Lam}) \\
      \frac{ \rho \vdash \term_1 \sem{l_1}{s_1}{w_1} \langle\lambda x. \term,\rho'\rangle \quad \rho \vdash \term_2 \sem{l_2}{s_2}{w_2} \termv_2 \quad \rho' ,x \mapsto \termv_2 \vdash \term \sem{l_3}{s_3}{w_3} \termv}
      { \rho \vdash \term_1 \s \term_2 \sem{l_1 \concat l_2 \concat l_3}{s_1 \concat s_2 \concat s_3}{w_1 \cdot w_2 \cdot w_3} \termv{} }
      (\textsc{App}) \\
      \frac{ \rho \vdash \term_1 \sem{l_1}{s_1}{w_1} c_1 \quad \rho \vdash \term_2 \sem{l_2}{s_2}{w_2} c_2}
      { \rho \vdash \term_1 \s \term_2 \sem{l_1 \concat l_2}{s_1 \concat s_2}{w_1 \cdot w_2} \delta(c_1,c_2) }
      (\textsc{Const-App}) \quad
      \frac{\rho \vdash \term \sem{l}{s}{w} d \quad w' = f_d(c) }
      {\rho \vdash \ttt{assume } \term \sem{l}{s \concat [c]}{w \cdot w'} c}
      (\textsc{Assume}) \\
      \frac{ \rho \vdash \term_1 \sem{l_1}{s_1}{w_1} \termv_1 \quad \rho , x \mapsto \termv_1 \vdash \term_2 \sem{l_2}{s_2}{w_2} \termv}
      { \rho \vdash \ttt{let } x = \term_1 \ttt{ in } \term_2 \sem{l_1 \concat [x] \concat l_2}{s_1 \concat s_2}{w_1 \cdot w_2} \termv{} }
      (\textsc{Let}) \quad
      \frac{\rho \vdash \term \sem{l}{s}{w} w'}
      {\rho \vdash \ttt{weight } \term \sem{l}{s}{w \cdot w'} ()}
      (\textsc{Weight}) \\
      \frac{ \rho \vdash \term_1 \sem{l_1}{s_1}{w_1} \true \quad \rho \vdash \term_2 \sem{l_2}{s_2}{w_2} \termv_2 }
      {\rho \vdash \ttt{if } \term_1 \ttt{ then } \term_2 \ttt{ else } \term_3 \sem{l_1 \concat l_2}{s_1 \concat s_2}{w_1 \cdot w_2} \termv_2}
      (\textsc{If-True}) \\
      \frac{ \rho \vdash \term_1 \sem{l_1}{s_1}{w_1} \false \quad \rho \vdash \term_3 \sem{l_3}{s_3}{w_3} \termv_3 }
      {\rho \vdash \ttt{if } \term_1 \ttt{ then } \term_2 \ttt{ else } \term_3 \sem{l_1 \concat l_3}{s_1 \concat s_3}{w_1 \cdot w_3} \termv_3}
      (\textsc{If-False}) \\
    \end{gathered}
  \]
  \caption{%
    A big-step operational semantics for terms, formalizing single runs of programs $\term \in T$.
    The operation $\rho, x \mapsto \termv$ produces a new environment extending $\rho$ with a binding \termv{} for $x$.
    For each distribution $d \in D$, $f_d$ is its \emph{probability density} or \emph{probability mass} function---encoding the relative probability of drawing particular values from the distribution.
    For example, $f_{\text{Bernoulli}(0.3)}(\true) = 0.3$ and $f_{\text{Bernoulli}(0.3)}(\false) = 1-0.3 = 0.7$.
    We use $\cdot$ to denote multiplication.
  }
  \label{fig:semantics}
\end{figure}
We now introduce a big-step operational semantics for single runs of programs $\term$.
Such a semantics is essential to formalize the probability distributions encoded by probabilistic programs (e.g., Fig.~\ref{fig:geo:b} for Fig.~\ref{fig:geo:a}) and to prove the correctness of PPL inference algorithms.
For example, Borgström et al.~\cite{borgstrom2016lambda} define a PPL calculus and semantics similar to this paper and formally proves the correctness of an MCMC algorithm.
Another example is Lundén et al.~\cite{lunden2021correctness}, who also define a similar calculus and semantics and prove the correctness of PPL SMC algorithms.
In particular, the correctness of our aligned SMC algorithm (Section~\ref{sec:smcalign}) follows from this proof.
The purpose of the semantics in this paper is to formalize alignment and prove the soundness of our analysis in Section~\ref{sec:align}.
We use a big-step semantics as the finer granularity in a small-step semantics is redundant.
We begin with a definition for intrinsics.
\begin{definition}[Intrinsic functions]\label{def:const}
  For every $c \in C$, we attach an \emph{arity} $|c| \in \mathbb{N}$.
  We define a partial function $\delta : C \times C \rightarrow C$ such that $\delta(c, c_1) = c_2$ is defined for $|c| > 0$.
  For all $c$, $c_1$, and $c_2$, such that $\delta(c,c_1) = c_2$, $|c_2|$ = $|c| - 1$.
\end{definition}
\noindent
Intrinsic functions are curried and produce intrinsic or intrinsic functions of one arity less through $\delta$.
For example, for $+ \in C$, we have $\delta(\delta(+, 1),2) = 3$, $|\!+\!| = 2$, $|\delta(+,1)| = 1$, and $|\delta(\delta(+, 1),2)| = 0$.
Next, randomness in our semantics is deterministic via a \emph{trace} of random draws in the style of Kozen~\cite{kozen1981semantics}.
\begin{definition}[Traces]\label{def:trace}
  The set $S$ of traces is the set such that, for all $s \in S$, $s$ is a sequence of intrinsics from $C$ with arity 0.
\end{definition}
\noindent
In the following, we use the notation $[c_1,c_2,\ldots,c_n]$ for sequences and $\concat$ for sequence concatenation.
For example, $[c_1,c_2]\concat[c_2,c_4] = [c_1,c_2,c_3,c_4]$.
We also use subscripts to select elements in a sequence, e.g., $[c_1,c_2,c_3,c_4]_2 = c_2$.
In practice, traces are often sequences of real numbers, e.g., $[1.1,3.2,8.4] \in S$.

Fig.~\ref{fig:semantics} presents the semantics as a relation $\rho \vdash \term \sem{l}{s}{w} \termv$ over $P \times T \times S \times \mathbb{R} \times L \times V$.
$L$ is the set of sequences over $X$, i.e., sequences of names.
For example, $[x,y,z] \in L$, where $x,y,z \in X$.
We use $l \in L$ to track the sequence of \ttt{let}-bindings during evaluation.
For example, evaluating \ttt{let $x$ = $1$ in let $y$ = $2$ in $x + y$} results in $l = [x,y]$.
In Section~\ref{sec:align}, we use the sequence of encountered \ttt{let}-bindings to define alignment.
For simplicity, from now on we assume that bound variables are always unique (i.e., variable shadowing is impossible).

It is helpful to think of $\rho$, \term, and $s$ as the input to $\Downarrow$, and $l$, $w$ and $\termv$ as the output.
In the environment $\rho$, \term, with trace $s$, evaluates to \termv{}, encounters the sequence of \ttt{let} bindings $l$, and accumulates the weight $w$.
The trace $s$ is the sequence of all random draws, and each random draw in (\tsc{Assume}) consumes precisely one element of $s$.
The rule (\tsc{Let}) tracks the sequence of bindings by adding $x$ at the correct position in $l$.
The number $w$ is the likelihood of the execution---the probability density of all draws in the program, registered at (\tsc{Assume}), combined with direct likelihood modifications, registered at (\tsc{Weight}).
The remaining aspects of the semantics are standard (see, e.g., Kahn~\cite{kahn1987semantics}).
To give an example of the semantics, we have
\ifextended
\begin{equation}
  \varnothing \vdash \termgeo \sem{[\mi{geometric},x,x,x,x]}{[\true,\true,\true,\false]}{0.5\cdot1.5\cdot0.5\cdot1.5\cdot0.5\cdot1.5\cdot0.5} 4
\end{equation}
\else
$
  \varnothing \vdash \termgeo \sem{[\mi{geometric},x,x,x,x]}{[\true,\true,\true,\false]}{0.5\cdot1.5\cdot0.5\cdot1.5\cdot0.5\cdot1.5\cdot0.5} 4
$
\fi
for the particular execution of $\termgeo$ making three recursive calls.
Next, we formalize and apply the alignment analysis to~\eqref{eq:ast}.

\section{Alignment Analysis}\label{sec:align}
This section presents the main contribution of this paper: automatic alignment in PPLs.
Section~\ref{sec:anf} introduces A-normal form and gives a precise definition of alignment.
Section~\ref{sec:analysis} formalizes and proves the correctness of the alignment analysis.
Lastly, Section~\ref{sec:dynamic} discusses a dynamic version of alignment.

\subsection{A-Normal Form and Alignment}\label{sec:anf}
To reason about all subterms $\term'$ of a program $\term$ and to enable the analysis in Section~\ref{sec:analysis}, we need to uniquely label all subterms.
A straightforward approach is to use variable names within the program itself as labels (remember that we assume bound variables are always unique).
This leads us to the standard A-normal form (ANF) representation of programs~\cite{flanagan1993essence}.
\begin{definition}[A-normal form]
  \begin{equation}\label{eq:anf}
    \begin{aligned}
      \termanf \Coloneqq& \s
      x
      \s | \s
      \ttt{let } x = \termanf' \ttt{ in } \termanf
      \\
      \termanf' \Coloneqq& \s
      x
      \s | \s
      c
      \s | \s
      \lambda x. \s \termanf
      \s | \s
      x \s y
      \\ |& \s
      \ttt{if } x \ttt{ then } \termanf \ttt{ else } \termanf
      \s | \s
      \ttt{assume } x
      \s | \s
      \ttt{weight } x
    \end{aligned}
  \end{equation}
\end{definition}
We use $\Termanf$ to denote the set of all terms $\termanf$.
\noindent Unlike $\term \in T$, $\termanf \in \Termanf$ enforces that a variable bound by a \ttt{let} labels each subterm in the program.
Furthermore, we can automatically transform any program in $T$ to a semantically equivalent $\Termanf$ program, and $\Termanf \subset T$.
Therefore, we assume in the remainder of the paper that all terms are in ANF.

Given the importance of alignment in universal PPLs, it is somewhat surprising that there are no previous attempts to give a formal definition of its meaning. Here, we give a first such formal definition, but before defining alignment, we require a way to restrict, or filter, sequences.
\begin{definition}[Restriction of sequences]\label{def:seqrestr}
  For all $l \in L$ and $Y \subseteq X$, $l|_Y$ (the restriction of $l$ to $Y$) is the subsequence of $l$ with all elements not in $Y$ removed.
\end{definition}
\noindent
For example, $[x,y,z,y,x]|_{\{x,z\}} = [x,z,x]$.
We now formally define alignment.
\begin{definition}[Alignment]\label{def:align}
  For $\term \in \Termanf$, let $X_\term$ denote all variables that occur in $\term$.
  The sets $A_\term \in \mathcal{A}_\term$, $A_\term \subseteq X_\term$, are the largest sets such that, for arbitrary $\varnothing \vdash \term \sem{l_1}{s_1}{w_1} \termv_1$ and $\varnothing \vdash \term \sem{l_2}{s_2}{w_2} \termv_2$, $l_1|_{A_\term} = l_2|_{A_\term}$.
\end{definition}
\noindent For a given $A_\term$, the \emph{aligned expressions}---expressions bound by a \texttt{let} to a variable name in $A_\term$---are those that occur in the same order in every execution, regardless of random draws.
We seek the largest sets, as $A_\term = \varnothing$ is always a trivial solution.
Assume we have a program with $X_\term = \{x,y,z\}$ and such that $l = [x,y,x,z,x]$ and $l = [x ,y,x,z,x,y]$ are the only possible sequences of \texttt{let} bindings.
Then, $A_\term = \{x,z\}$ is the only possibility.
It is also possible to have multiple choices for $A_\term$.
For example, if $l = [x,y,z]$ and $l = [x,z,y]$ are the only possibilities, then $\mathcal{A}_\term = \{\{x,z\}, \{x,y\}\}$.
Next, assume that we transform the programs in Fig.~\ref{fig:geo:a} and Fig.~\ref{fig:mot:prog} to ANF.
The expression labeled by $x$ in Fig.~\ref{fig:geo:a} is then clearly not aligned, as random draws determine how many times it executes ($l$ could be, e.g., $[x,x]$ or $[x,x,x,x]$).
Conversely, the expression $n$ (line~\ref{line:mot:poisson}) in Fig.~\ref{fig:mot:prog} is aligned, as its number and order of evaluations do not depend on any random draws.

Definition~\ref{def:align} is \emph{context insensitive}: for a given $A_\term$, each $x$ is either aligned or unaligned.
One could also consider a context-sensitive definition of alignment in which $x$ can be aligned in some contexts and unaligned in others.
A context could, for example, be the sequence of function applications (i.e., the call stack) leading up to an expression.
Considering different contexts for $x$ is complicated and difficult to take full advantage of.
We justify the choice of context-insensitive alignment with the real-world models in Section~\ref{sec:eval}, neither of which requires a context-sensitive alignment.

With alignment defined, we now move on to the static alignment analysis.

\subsection{Alignment Analysis}\label{sec:analysis}
The basis for the alignment analysis is 0-CFA~\cite{nielson1999principles,shivers1991control}---a static analysis framework for higher-order functional programs.
The prefix 0 indicates that 0-CFA is context insensitive.
There is also a set of analyses $k$-CFA~\cite{midtgaard2021control} that adds increasing amounts (with $k \in \mathbb{N}$) of context sensitivity to 0-CFA.
We could use such analyses with a context-sensitive version of Definition~\ref{def:align}.
However, the potential benefit of $k$-CFA is also offset by the worst-case exponential time complexity, already at $k = 1$.
In contrast, the time complexity of 0-CFA is polynomial (cubic in the worst-case).
The alignment analysis for the models in Section~\ref{sec:eval} runs instantaneously, justifying that the time complexity is not a problem in practice.

The extensions to 0-CFA required to analyze alignment are non-trivial to design, but the resulting formalization is surprisingly simple.
The challenge is instead to prove that the extensions correctly capture the alignment property from Definition~\ref{def:align}.
We extend 0-CFA to analyze stochastic values and alignment in programs $\term \in \Termanf$.
As with most static analyses, our analysis is sound but conservative (i.e., sound but incomplete)---the analysis may mark aligned expressions of programs as unaligned, but not vice versa.
That the analysis is conservative does not degrade the alignment analysis results for any model in Section~\ref{sec:eval}, which justifies the approach.
We divide the formal analysis into two algorithms.
Algorithm~\ref{alg:gencstr} generates \emph{constraints} for $\term$ that a valid analysis solution must satisfy.
This section describes Algorithm~\ref{alg:gencstr} and the generated constraints.
\ifextended
Appendix~\ref{sec:analysisalg} provides the second algorithm, Algorithm~\ref{alg:flow}, that computes a solution satisfying the generated constraints.
We provide examples of applying Algorithm~\ref{alg:flow} here, but defer the complete description to Appendix~\ref{sec:analysisalg}.
\else
The second algorithm computes a solution that satisfies the generated constraints.
We describe the algorithm at a high level, but omit a full formalization.$^\dagger$
\fi

For soundness of the analysis, we require $\langle\lambda x. \s \term, \rho \rangle \not\in C$ (recall that $C$ is the set of intrinsics).
That is, closures are \emph{not} in $C$.
By Definition~\ref{def:trace}, this implies that closures are not in the sample space of probability distributions in $D$ and that evaluating intrinsics never produces closures (this would unnecessarily complicate the analysis without any benefit).

\begin{figure}[tb]
  \lstset{%
    basicstyle=\ttfamily\scriptsize,
    numbers=left,
    showlines=true,
    numbersep=3pt,
    framexleftmargin=-2pt,
    xleftmargin=2em,
    language=calc
  }
  \centering
  \begin{multicols}{2}
    \begin{lstlisting}[]
let $n_1$$\label{line:n1}$ = $\neg$ in let $n_2$ = $\neg$ in
let $\mi{one}$ = 1 in
let $\mi{half}$ = 0.5 in let $c$ = $\textrm{true}$ in
let $f_1$ = $\lambda x_1\!$.$\!$ let $\!t_1\!$ = weight $\!\mi{one}\!$ in $x_1\!$ in$\label{line:lambdaf1}$
let $f_2$ = $\lambda x_2\!$.$\!$ let $\!t_2\!$ = weight $\!\mi{one}\!$ in $t_2$ in
let $f_3$ = $\lambda x_3\!$.$\!$ let $\!t_3\!$ = weight $\!\mi{one}\!$ in $t_3$ in
let $f_4$ = $\lambda x_4\!$.$\!$ let $\!t_4\!$ = weight $\!\mi{one}\!$ in $t_4$ in
let $\mi{bern}$ = $\textrm{Bernoulli}$ in
let $d_1$ = $\mi{bern}$ $\mi{half}$ in
let $a_1$ = assume $d_1$$\label{line:a1}$
let $v_1$ = $f_1$ $\mi{one}$ in (*@ \columnbreak @*)
let $v_2$ = $n_1$ $a_1$ in$\label{line:v2}$
let $v_3$ = $n_2$ $c$ in
let $f_5$ =$\label{line:f5}$
  if $a_1$ then let $t_5$ = $f_4$ $\mi{one}$ in $f_2$$\label{line:if2}$
  else $f_3$
in
let $v_4$ = $f_5$ $\mi{one}$ in$\label{line:appf5}$$\label{line:v4}$
let $i_1$ =
  if $c$ then let $t_6$ = $f_1$ $\mi{one}$ in $t_6$
  else $\mi{one}$
in $i_1$
    \end{lstlisting}
  \end{multicols}
  \caption{%
    A program $\termanfex \in \Termanf$ illustrating the analysis.
  }
  \label{fig:running}
\end{figure}
In addition to standard 0-CFA constraints, Algorithm~\ref{alg:gencstr} generates new constraints for \emph{stochastic values} and \emph{unalignment}.
We use the contrived but illustrative program in Fig.~\ref{fig:running} as an example.
Note that, while omitted from Fig.~\ref{fig:running} for ease of presentation, the analysis also supports recursion introduced through \texttt{let~rec}.
Stochastic values are values in the program affected by random variables.
Stochastic values initially originate at \ttt{assume} and then propagate through programs via function applications and \ttt{if} expressions.
For example, $a_1$ (line~\ref{line:a1}) is stochastic because of \ttt{assume}.
We subsequently use $a_1$ to define $v_2$ via $n_1$ (line~\ref{line:v2}), which is then also stochastic.
Similarly, $a_1$ is the condition for the \ttt{if} resulting in $f_5$ (line~\ref{line:f5}), and the function $f_5$ is therefore also stochastic.
When we apply $f_5$, it results in yet another stochastic value, $v_4$ (line~\ref{line:v4}).
In conclusion, the stochastic values are $a_1$, $v_2$, $f_5$, and $v_4$.

Consider the flow of unalignment in Fig.~\ref{fig:running}.
We mark expressions that may execute due to stochastic branching as unaligned.
From our analysis of stochastic values, the program's only stochastic \ttt{if} condition is at line~\ref{line:if2}, and we determine that all expressions directly within the branches are unaligned.
That is, the expression labeled by $t_5$ is unaligned.
Furthermore, we apply the variable $f_4$ when defining $t_5$.
Thus, \emph{all} expressions in bodies of lambdas that flow to $f_4$ are unaligned.
Here, it implies that $t_4$ is unaligned.
Finally, we established that the function $f_5$ produced at line~\ref{line:if2} is stochastic.
Due to the application at line~\ref{line:appf5}, all names bound by \ttt{let}s in bodies of lambdas that flow to $f_5$ are unaligned.
Here, it implies that $t_2$ and $t_3$ are unaligned.
In conclusion, the unaligned expressions are named by $t_2$, $t_3$, $t_4$, and $t_5$.
For example, aligned SMC therefore resamples at the \ttt{weight} at $t_1$, but not at the \ttt{weight}s at $t_2$, $t_3$, and $t_4$.

Consider the program in Fig.~\ref{fig:mot:prog} again, and assume it is transformed to ANF.
The alignment analysis must mark all names bound within the stochastic \ttt{if} at line~\ref{line:mot:stochbranch} as unaligned because a stochastic value flows to its condition.
In particular, the \ttt{weight} expressions at lines~\ref{line:mot:weight1} and \ref{line:mot:weight2} are unaligned (and the \ttt{weight} at line~\ref{line:mot:weight3} is aligned).
Thus, aligned SMC resamples only at line~\ref{line:mot:weight3}.

\ifextended
To formalize the flow of stochastic values, we define \emph{abstract values} $\absval \in A$, that flow within the program, as follows.
\begin{definition}[Abstract values]
  $
  \absval \Coloneqq
  \lambda x. y
  \s | \s
  \ttt{stoch}
  \s | \s
  \ttt{const} \s n
  $
  where
  $x,y \in X$ and $n \in \mathbb{N}$.
\end{definition}
\noindent
\else
To formalize the flow of stochastic values, we define \emph{abstract values}
  $
  \absval \Coloneqq
  \lambda x. y
  \s | \s
  \ttt{stoch}
  \s | \s
  \ttt{const} \s n,
  $
where $x,y \in X$ and $n \in \mathbb{N}$.
We use $A$ to denote the set of all abstract values.
\fi
The \ttt{stoch} abstract value is new and represents stochastic values.
The $\lambda x. y$ and \ttt{const}~$n$ abstract values are standard and represent abstract closures and intrinsics, respectively.
For each variable name $x$ in the program, we define a set $S_x$ containing abstract values that may occur at $x$.
For example, in Fig.~\ref{fig:running}, we have $\ttt{stoch} \in S_{a_1}$, $(\lambda x_2. t_2) \in S_{f_2}$, and $(\ttt{const}$~$1) \in S_{n_1}$.
The abstract value $\lambda x_2. t_2$ represents all closures originating at $\lambda x_2$, and $\ttt{const}$~$1$ represents intrinsic functions in $C$ of arity 1 (in our example, $\neg$).
The body of the abstract lambda is the variable name labeling the body, not the body itself.
For example, $t_2$ labels the body \ttt{let $t_2$ = $\mi{one}$ in $t_2$} of $\lambda x_2$.
Due to ANF, all terms have a label, which the function \textsc{name} in Algorithm~\ref{alg:gencstr} formalizes.

\begin{algorithm}[tb]
  \renewcommand{\s}{\hphantom{|}}
  \caption{%
    Constraint generation function for $\term \in \Termanf$.
    We denote the power set of a set $E$ with $\mathcal P(E)$.
  }\label{alg:gencstr}
  \raggedright
  \lstinline[style=alg]!function $\tsc{generateConstraints}$($\term$): $\Termanf \rightarrow \mathcal P(R)$ =!\\
  \vspace{-1mm}
  \hspace{4.0mm}%
  \begin{minipage}{0.96\textwidth}
    \begin{multicols}{2}
      \begin{lstlisting}[
          style=alg,
          basicstyle=\sffamily\scriptsize,
          numbers=left,
          showlines=true,
          numbersep=3pt
        ]
match $\term$ with
| $x \rightarrow$ $\varnothing$$\label{line:genvar}$
| $\ttt{let } x = \term_1 \s \ttt{in} \s \term_2 \rightarrow$$\label{line:genlet}$
$\s$ $\tsc{generateConstraints}(\term_2) \s \cup$
  $\s$ match $\term_1$ with
  $\s$ | $y \rightarrow \{S_y \subseteq S_x\}$$\label{line:genalias}$
  $\s$ | $c \rightarrow$ if $|c| > 0$ then $\{\ttt{const} \s |c| \in S_x\}$$\label{line:genconst}$
  $\s$ $\s$ $\hspace{5mm}$ else $\varnothing$
  $\s$ | $\lambda y. \s \term_y \rightarrow$ $\tsc{generateConstraints}(\term_y)$$\label{line:genabs}$
  $\s$ $\s$ $\s$ $\cup$ $\{\lambda y. \s \tsc{name}(\term_y) \in S_x \}$
  $\s$ $\s$ $\s$ $\cup$ $\{ \unaligned_y \Rightarrow \unaligned_n$
  $\s$ $\s$ $\s$ $\s$ $\s$ $\s$ $\mid n \in \tsc{names}(t_y)\}$
  $\s$ | $\mi{lhs} \s \mi{rhs} \rightarrow$ $\{$$\label{line:genapp}$
    $\s$ $\s$ $\forall z \forall y \s \lambda z. y \in S_{\mi{lhs}}$
    $\s$ $\s$ $\s$ $\Rightarrow (S_\mi{rhs} \subseteq S_z) \land (S_y \subseteq S_x)$,
    $\s$ $\s$ $\forall n \s (\ttt{const} \s n \in S_\mi{lhs}) \land (n > 1)$
    $\s$ $\s$ $\s$ $\Rightarrow \ttt{const} \s n-1 \in S_x$,
    $\s$ $\s$ $\ttt{stoch} \in S_\mi{lhs} \Rightarrow \ttt{stoch} \in S_x$,
    $\s$ $\s$ $\ttt{const} \s \_ \in S_\mi{lhs}$
    $\s$ $\s$ $\s$ $\Rightarrow (\ttt{stoch} \in S_\mi{rhs} \Rightarrow \ttt{stoch} \in S_x)$,
    $\s$ $\s$ $\unaligned_x$
    $\s$ $\s$ $\s$ $\Rightarrow (\forall y \s \lambda y. \_ \in S_{\mi{lhs}} \Rightarrow \unaligned_y)$,
    $\s$ $\s$ $\ttt{stoch} \in S_{\mi{lhs}}$
    $\s$ $\s$ $\s$ $\Rightarrow (\forall y \s \lambda y. \_ \in S_{\mi{lhs}} \Rightarrow \unaligned_y)$
  $\s$ $\s$ $\}$(*@ \columnbreak @*)
  $\s$ | $\ttt{if } y \ttt{ then } \term_t \ttt{ else } \term_e \rightarrow$ $\label{line:genif}$
  $\s$ $\s$ $\tsc{generateConstraints}(\term_t)$
    $\s$ $\s$ $\cup \s \tsc{generateConstraints}(\term_e)$
    $\s$ $\s$ $\cup \s \{ S_{\tsc{name}(\term_t)} \subseteq S_x, S_{\tsc{name}(\term_e)} \subseteq S_x,$
    $\s$ $\s$ $\s$ $\s$ $\s$ $\ttt{stoch} \in S_y \Rightarrow \ttt{stoch} \in S_x \}$
    $\s$ $\s$ $\cup \s \{\unaligned_x \Rightarrow \unaligned_n$
           $\mid n \in \tsc{names}(\term_t) \cup \tsc{names}(\term_e)\}$
    $\s$ $\s$ $\cup \s \{\ttt{stoch} \in S_y \Rightarrow \unaligned_n$
           $\mid n \in \tsc{names}(\term_t) \cup \tsc{names}(\term_e)\}$
  $\s$ | $\ttt{assume } \textrm{\_} \rightarrow \{\ttt{stoch} \in S_x\}$$\label{line:genassume}$
  $\s$ | $\ttt{weight } \textrm{\_} \rightarrow \varnothing$$\label{line:genweight}$$\label{line:genletend}$

function $\tsc{name}$($\term$): $\Termanf \rightarrow X$ =
  match $\term$ with
  | $x \rightarrow x$ $\qquad$
  | $\ttt{let } x = \term_1 \s \ttt{in} \s \term_2 \rightarrow$ $\tsc{name}$(t$_2$)

function $\tsc{names}$($\term$): $\Termanf \rightarrow \mathcal P (X)$ =$\label{line:names}$
  match $\term$ with
  | $x \rightarrow \varnothing$ $\qquad$
  | $\ttt{let } x = \_ \s \ttt{in} \s \term_2 \rightarrow$ $\{x\} \cup \tsc{names}$(t$_2$)




      \end{lstlisting}
    \end{multicols}
  \end{minipage}
\end{algorithm}%
We also define booleans $\unaligned_x$ that state whether or not the expression labeled by $x$ is unaligned.
For example, we previously reasoned that $\unaligned_x = \true$ for $x \in \{ t_2, t_3, t_4, t_5 \}$ in Fig.~\ref{fig:running}.
The alignment analysis aims to determine \emph{minimal} sets $S_x$ and boolean assignments of $\unaligned_x$ for every program variable $x \in X$.
A trivial solution is that all abstract values (there is a finite number of them in the program) flow to each program variable and that $\unaligned_x = \true$ for all $x \in X$.
This solution is sound but useless.
To compute a more precise solution, we follow the rules given by \emph{constraints}
$\cstr \in R$\ifextended\ (see Appendix~\ref{sec:aligncont} for a formal definition).\else.$^\dagger$\fi

We present the constraints through the \textsc{generateConstraints} function in Algorithm~\ref{alg:gencstr} and for the example in Fig.~\ref{fig:running}.
There are no constraints for variables that occur at the end of ANF \texttt{let} sequences (line~\ref{line:genvar} in Algorithm~\ref{alg:gencstr}), and the case for \texttt{let} expressions (lines~\ref{line:genlet}--\ref{line:genletend}) instead produces all constraints.
The cases for aliases (line~\ref{line:genalias}), intrinsics (line~\ref{line:genconst}), \ttt{assume} (line~\ref{line:genassume}), and \ttt{weight} (line~\ref{line:genweight}) are the most simple.
Aliases of the form \texttt{let $x$ = $y$ in $\term_2$} establish $S_y \subseteq S_x$.
That is, all abstract values at $y$ are also in $x$.
Intrinsic operations results in a \texttt{const} abstract value.
For example, the definition of $n_1$ at line~\ref{line:n1} in Fig.~\ref{fig:running} results in the constraint \texttt{const $1$ $\in$ $S_{n_1}$}.
Applications of \texttt{assume} are the source of stochastic values.
For example, the definition of $a_1$ at line~\ref{line:a1} results in the constraint \texttt{stoch $\in$ $S_{a_1}$}.
Note that \texttt{assume} cannot produce any other abstract values, as we only allow distributions over intrinsics with arity 0 (see Definition~\ref{def:trace}).
Finally, we use \ttt{weight} only for its side effect (likelihood updating), and therefore \texttt{weight}s do not produce any abstract values and consequently no constraints.

The cases for abstractions (line~\ref{line:genabs}), applications (line~\ref{line:genapp}), and \ttt{if}s (line~\ref{line:genif}) are more complex.
The abstraction at line~\ref{line:lambdaf1} in Fig.~\ref{fig:running} generates (omitting the recursively generated constraints for the abstraction body $\term_y$) the constraints
$
\{\lambda x_1. x_1 \in S_{f_1} \}
\cup \{ \unaligned_{x_1} \Rightarrow \unaligned_{t_1} \}
$.
The first constraint is standard: the abstract lambda $\lambda x_1. x_1$ flows to $S_{f_1}$.
The second constraint states that if the abstraction is unaligned, all expressions in its body (here, only $t_1$) are unaligned.
We define the sets of expressions within abstraction bodies and \texttt{if} branches through the \tsc{names} function in Algorithm~\ref{alg:gencstr} (line~\ref{line:names}).

The application $f_5 \, \mi{one}$ at line~\ref{line:appf5} in Fig.~\ref{fig:running} generates the constraints
\begin{equation}
  \begin{aligned}
    \{
      &\forall z \forall y \s \lambda z. y \in S_{f_5} \Rightarrow (S_\mi{one} \subseteq S_z) \land (S_y \subseteq S_{v_4}),\\
      &\forall n \s (\ttt{const} \s n \in S_{f_5}) \land (n > 1) \Rightarrow \ttt{const} \s n-1 \in S_{v_4},\\
      &\ttt{stoch} \in S_{f_5} \Rightarrow \ttt{stoch} \in S_{v_4},\\
      &\ttt{const} \s \_ \in S_{f_5} \Rightarrow (\ttt{stoch} \in S_\mi{one} \Rightarrow \ttt{stoch} \in S_{v_4}),\\
      &\unaligned_{v_4} \Rightarrow (\forall y \s \lambda y. \_ \in S_{{f_5}} \Rightarrow \unaligned_y),\\
      &\ttt{stoch} \in S_{{f_5}} \Rightarrow (\forall y \s \lambda y. \_ \in S_{\mi{lhs}} \Rightarrow \unaligned_y)
    \}
  \end{aligned}
\end{equation}
The first constraint is standard: if an abstract value $\lambda z. y$ flows to $f_5$, the abstract values of $\mi{one}$ (the right-hand side) flow to $z$.
Furthermore, the result of the application, given by the body name $y$, must flow to the result $v_4$ of the application.
The second constraint is also relatively standard: if an intrinsic function of arity $n$ is applied, it produces a const of arity $n-1$.
The other constraints are new and specific for stochastic values and unalignment.
The third constraint states that if the function is stochastic, the result is stochastic.
The fourth constraint states that if we apply an intrinsic function to a stochastic argument, the result is stochastic.
We could also make the analysis of intrinsic applications less conservative through intrinsic-specific constraints.
The fifth and sixth constraints state that if the expression (labeled by $v_4$) is unaligned or the function is stochastic, all abstract lambdas that flow to the function are unaligned.

The \ttt{if} resulting in $f_5$ at line~\ref{line:f5} in Fig.~\ref{fig:running} generates (omitting the recursively generated constraints for the branches $\term_t$ and $\term_e$) the constraints
\begin{equation}
  \begin{aligned}
    \{
      &S_{\tsc{name}(f_2)} \subseteq S_{f_5}, S_{\tsc{name}(f_3)} \subseteq S_{f_5},
      \ttt{stoch} \in S_{a_1} \Rightarrow \ttt{stoch} \in S_{f_5}
    \} \\
    &\cup \{ \unaligned_{f_5} \Rightarrow \unaligned_{t_5} \}
    \cup \{\ttt{stoch} \in S_{a_1} \Rightarrow \unaligned_{t_5}\}
  \end{aligned}
\end{equation}
The first two constraints are standard and state that the result of the branches flows to the result of the \ttt{if} expression.
The remaining constraints are new.
The third constraint states that if the condition is stochastic, the result is stochastic.
The last two constraints state that if the \ttt{if} is unaligned or if the condition is stochastic, all names in the branches (here, only $t_5$) are unaligned.

Given constraints for a program, we need to compute a solution satisfying all constraints.
We do this by repeatedly iterating through all the constraints and propagating abstract values accordingly.
We terminate when we reach a fixed point, i.e., when no constraint results in an update of either $S_x$ or $\unaligned_x$ for any $x$ in the program.
\ifextended
Algorithm~\ref{alg:flow} in Appendix~\ref{sec:analysisalg} formalizes our extension of the 0-CFA constraint propagation algorithm that also handles the constraints generated for tracking stochastic values and unalignment.
The analysis function \tsc{analyzeAlign}: $\Termanf \rightarrow ((X \rightarrow \mathcal{P}(A)) \times \mathcal P(X))$ returns a map associating each variable to a set of abstract values and a set of unaligned variables.
\else
We extend the 0-CFA constraint propagation algorithm to also handle the constraints generated for tracking stochastic values and unalignment.$^\dagger$
Specifically, the algorithm is a function \tsc{analyzeAlign}: $\Termanf \rightarrow ((X \rightarrow \mathcal{P}(A)) \times \mathcal P(X))$ that returns a map associating each variable to a set of abstract values and a set of unaligned variables.
\fi
In other words, \tsc{analyzeAlign} computes a solution to $S_x$ and $\unaligned_x$ for each $x$ in the analyzed program.
For example, \tsc{analyzeAlign}$(\termanfex)$ results in
\begin{equation}\label{eq:vres}
  \begin{gathered}
    S_{n_1} = \{ \ttt{const } 1 \} \enspace
    S_{n_2} = \{ \ttt{const } 1 \} \enspace
    S_{f_1} = \{ \lambda x_1. x_1 \} \enspace
    S_{f_2} = \{ \lambda x_2. t_2 \} \\
    S_{f_3} = \{ \lambda x_3. t_3 \} \enspace
    S_{f_4} = \{ \lambda x_4. t_4 \} \enspace
    S_{a_1} = \{ \ttt{stoch} \} \enspace
    S_{v_2} = \{ \ttt{stoch} \} \\
    S_{f_5} = \{ \lambda x_2. t_2, \lambda x_3. t_3, \ttt{stoch} \} \enspace
    S_{v_4} = \{ \ttt{stoch} \} \enspace
    S_n = \varnothing \mid \text{other $n \in X$} \\
    \unaligned_n = \true \mid n \in \{ t_2, t_3, t_4, t_5 \} \enspace
    \unaligned_n = \false \mid \text{other $n \in X$}.
  \end{gathered}
\end{equation}
The example confirms our earlier intuition: an intrinsic ($\neg$) flows to $n_1$, \ttt{stoch} flows to $a_1$, $f_5$ is stochastic and originates at either $(\lambda x_2. t_2)$ or $(\lambda x_3. t_3)$, and the unaligned variables are $t_2$, $t_3$, $t_4$, and $t_5$.
We now give soundness results.
\begin{lemma}[0-CFA soundness]\label{lemma:cfa}
  For every $\term \in \Termanf$, the solution produced by \tsc{analyzeAlign}$(\term)$ satisfies the constraints \tsc{generateConstraints}$(\term)$.
\end{lemma}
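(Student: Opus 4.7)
The plan is to prove the lemma by exhibiting \tsc{analyzeAlign} as a least-fixed-point computation of a monotone consequence operator derived from \tsc{generateConstraints}$(\term)$, and then observing that any fixed point of that operator is, by definition, a solution. The proof has three stages. First, I would package the constraints in $R$ uniformly as monotone rules over the lattice $\mathcal{P}(A)^{X_\term} \times \mathcal{P}(X_\term)$ ordered pointwise by componentwise inclusion of the $S_x$ and by implication on the $\unaligned_x$ flags. Each constraint in Algorithm~\ref{alg:gencstr} has one of two shapes: a direct fact (e.g.\ $\ttt{const}~|c| \in S_x$, $\ttt{stoch} \in S_x$, $\lambda y.\tsc{name}(\term_y) \in S_x$) or a guarded implication whose antecedent is a finite conjunction of membership/flag assertions and whose consequent is itself a finite set of such assertions. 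Second, I would argue that the propagation loop in \tsc{analyzeAlign} only ever grows $S_x$ and only ever sets $\unaligned_x$ from \false\ to \true, so the sequence of assignments produced is monotone increasing; because $A$ restricted to the variables of $\term$ is finite and $X_\term$ is finite, the lattice has finite height, so the iteration terminates.

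Third, I would show that the terminating assignment satisfies every constraint in \tsc{generateConstraints}$(\term)$. For unconditional facts this is immediate: the propagation pass corresponding to the generating \ttt{let} node (Algorithm~\ref{alg:gencstr}, lines~\ref{line:genconst}, \ref{line:genabs}, \ref{line:genassume}) inserts the fact once, and monotonicity preserves it. For guarded implications the argument is by contradiction: suppose at termination some constraint has its antecedent satisfied but its consequent not. Then applying the propagation rule for that constraint would strictly enlarge either some $S_x$ or the set of unaligned variables, contradicting the hypothesis that the algorithm has reached a fixed point. Hence every enabled implication has fired, which is exactly what it means to satisfy the constraint.

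The hard part is the bookkeeping around implications whose antecedents can become true \emph{after} the first visit to the generating program point. Constraints like
\[
  \forall z \forall y\; \lambda z.y \in S_{\mi{lhs}} \Rightarrow (S_{\mi{rhs}} \subseteq S_z) \land (S_y \subseteq S_x)
\]
and the stochastic-application rule $\ttt{stoch} \in S_{\mi{lhs}} \Rightarrow (\forall y\; \lambda y.\_ \in S_{\mi{lhs}} \Rightarrow \unaligned_y)$ may be enabled only after several rounds of propagation, and new abstract lambdas can arrive in $S_{\mi{lhs}}$ at any later point. To make the contradiction argument in the previous paragraph watertight, I would state as an invariant that Algorithm~\ref{alg:flow} maintains a worklist (or, equivalently, re-scans constraints whenever any relevant $S_y$ or $\unaligned_y$ changes), and I would prove by induction on propagation steps that whenever the antecedent of an implication becomes true, the consequence is scheduled for propagation before the loop is allowed to declare a fixed point. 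With this invariant in hand, the lemma reduces to the observation that a terminating run of \tsc{analyzeAlign} is a fixed point of the consequence operator, and every such fixed point satisfies each constraint by definition.
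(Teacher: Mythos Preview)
Your proposal is correct and is precisely the standard 0-CFA worklist/fixed-point argument that the paper defers to by citation: the paper's own proof consists only of the sentence ``The well-known soundness of 0-CFA extends to the new alignment constraints'' together with references to Nielson et al.\ and Shivers, and what you have written is exactly that well-known argument spelled out for the particular constraint forms generated by Algorithm~\ref{alg:gencstr}. There is no divergence in approach---you have simply reconstructed the textbook proof that the paper cites.
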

\begin{proof}
  The well-known soundness of 0-CFA extends to the new alignment constraints. See, e.g., Nielson et al.~\cite[Chapter 3]{nielson1999principles} and Shivers~\cite{shivers1991control}.
  \qed
\end{proof}
\begin{theorem}[Alignment analysis soundness]\label{thm:main}
  Assume $\term \in \Termanf$, $\mathcal{A}_\term$ from Definition~\ref{def:align}, and an assignment to $S_x$ and $\unaligned_x$ for $x \in X$ according to \tsc{analyzeAlign}$(\term)$.
  Let $\restt = \{ x \mid \neg\unaligned_x\}$ and take arbitrary
  $\varnothing \vdash \term \sem{l_1}{s_1}{w_1} \termv_1$ and
  $\varnothing \vdash \term \sem{l_2}{s_2}{w_2} \termv_2$.
  Then, $l_1|_\restt = l_2|_\restt$ and consequently $\restt \subseteq A_\term$ for at least one $A_\term \in \mathcal{A}_\term$.
\end{theorem}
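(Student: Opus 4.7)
The plan is to prove a strengthened inductive invariant relating the concrete big-step semantics to the abstract analysis, and then specialize it to the two top-level derivations in the statement. First I would define a compatibility relation $\rho \sim S$ between an environment $\rho$ and the analysis result $S$: for every $x \in \mathrm{dom}(\rho)$, the value $\rho(x)$ must be correctly abstracted by $S_x$, meaning that a closure $\langle\lambda y.\,\term_b,\rho'\rangle$ contributes $\lambda y.\,\tsc{name}(\term_b) \in S_x$ with $\rho' \sim S$, that an intrinsic of arity $n$ contributes $\ttt{const}\,n \in S_x$, and that any value whose production involved a stochastic draw carries $\ttt{stoch} \in S_x$. Lemma~\ref{lemma:cfa} together with the constraints generated by Algorithm~\ref{alg:gencstr} makes $\sim$ stable under the evaluation rules of Fig.~\ref{fig:semantics}.

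The central lemma, proved by induction on the sum of heights of two evaluation derivations, reads: for any subterm $\term'$ of $\term$ and any two derivations $\rho_i \vdash \term' \sem{l_i}{s_i}{w_i} \termv_i$ with $\rho_i \sim S$, one has $l_1|_{\restt} = l_2|_{\restt}$; moreover, if $\ttt{stoch} \notin S_{\tsc{name}(\term')}$ then $\termv_1 = \termv_2$, and in both cases the returned values are compatibly abstracted by $S$ with the correct $\ttt{stoch}$ tagging. The (\textsc{Var}), (\textsc{Const}), (\textsc{Lam}), (\textsc{Assume}) and (\textsc{Weight}) cases are immediate since $l_i$ is empty and the value-side obligations follow directly from $\rho_i \sim S$ and the constraints at lines~\ref{line:genalias},~\ref{line:genconst},~\ref{line:genassume}, and~\ref{line:genweight}. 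The (\textsc{Let}) case splits $l_i = l_i^{(1)} \concat [x] \concat l_i^{(2)}$ and concatenates two applications of the induction hypothesis, noting that the inclusion of $x$ in $l_i|_{\restt}$ depends only on $\unaligned_x$, not on the run, and that extending $\rho_i$ with the value bound to $x$ preserves $\sim$.

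The two genuinely interesting cases are (\textsc{App}) and (\textsc{If}-\textsc{True}/\textsc{False}), each of which I would split into a deterministic and a stochastic subcase. In the deterministic subcase of (\textsc{If}), $\ttt{stoch} \notin S_y$ for the guard, so the induction hypothesis on the condition yields the same boolean in both runs, and the induction hypothesis on the chosen branch finishes the argument; the (\textsc{App}) deterministic subcase is analogous, where the hypothesis on the left-hand side forces both runs to invoke the same abstract closure, whose body is then evaluated in $\sim$-compatible environments. In the stochastic subcase, the constraints emitted at lines~\ref{line:genapp} and~\ref{line:genif} mark every bound name in every possibly-taken branch or callee body as unaligned, so $l_1|_{\restt} = l_2|_{\restt} = []$ regardless of the runs' actual choices, and the result value inherits $\ttt{stoch}$ from the analysis' propagation constraints, so the weaker half of the invariant is vacuously satisfied.

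The main technical obstacle is precisely this higher-order, context-insensitive situation: at an aligned application site, the two runs might still select different concrete closures. The constraints ``if the function is stochastic or the application is unaligned, then every bound name inside every possible callee body is unaligned'' are exactly what is needed, because whenever $S_{\mi{lhs}}$ is genuinely ambiguous or carries $\ttt{stoch}$, these constraints eliminate every name in every callee body from $\restt$, making the $l$-invariant vacuous on that subtree; conversely, when $S_{\mi{lhs}}$ is an unambiguous, non-stochastic singleton, the induction hypothesis forces identical control flow. Once the invariant is established, instantiating it with the two top-level derivations of $\term$ yields $l_1|_{\restt} = l_2|_{\restt}$, and maximality in Definition~\ref{def:align} gives $\restt \subseteq A_\term$ for at least one $A_\term \in \mathcal{A}_\term$, completing the theorem.
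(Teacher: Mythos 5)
Your overall strategy matches the paper's: a simultaneous induction over two derivations of the same subterm under environments compatible with the analysis, with (\textsc{App}) and (\textsc{If}) split into a deterministic subcase (where the induction hypothesis forces identical control flow) and a stochastic subcase (where the constraints mark every name in every possible branch or callee body as unaligned, so both restricted \ttt{let}-sequences are empty). The paper packages exactly this as Lemma~\ref{lemma:aligned}, supported by a single-derivation Lemma~\ref{lemma:unaligned}. However, your inductive invariant as stated has genuine gaps that would make the induction fail.

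First, your hypothesis on environments is per-run ($\rho_1 \sim S$ and $\rho_2 \sim S$ separately), but your conclusion ``if $\ttt{stoch} \notin S_{\tsc{name}(\term')}$ then $\termv_1 = \termv_2$'' relates the two runs. Already in the (\textsc{Var}) case you must derive $\rho_1(x) = \rho_2(x)$ from $\ttt{stoch} \notin S_x$, and no single-run ``tainting'' condition yields that; you need an explicit pairwise invariant on $(\rho_1,\rho_2)$ --- the paper's condition that whenever $\rho_1(x)$ and $\rho_2(x)$ differ, $\ttt{stoch} \in S_x$ --- preserved through every environment extension. Second, literal equality $\termv_1 = \termv_2$ is too strong for closures: in the (\textsc{Lam}) case the runs return $\langle\lambda y.\,\term_y,\rho_1\rangle$ and $\langle\lambda y.\,\term_y,\rho_2\rangle$ with $\rho_1 \neq \rho_2$ whenever the captured environments contain stochastic entries, yet the analysis does not place \ttt{stoch} in $S_x$ for a lambda binding. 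The paper therefore compares values up to an equality that ignores closure environments, and compensates with a recursive condition stating that two paired non-stochastic closures over the same lambda have captured environments that again satisfy the pairwise invariant; this is precisely what lets the deterministic (\textsc{App}) subcase re-enter the induction. Finally, the stochastic subcase cannot be discharged by the two-derivation induction hypothesis at all, since the two runs may there evaluate entirely different subterms (different branches, different callees); it requires a separate single-derivation lemma --- if every name in $\term'$ is unaligned then $l|_{\restt} = []$ --- which is the paper's Lemma~\ref{lemma:unaligned}. With these repairs your argument coincides with the paper's proof.
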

\ifextended
\begin{proof}
  Follows by Lemma~\ref{lemma:aligned} in Appendix~\ref{sec:proof} with $\term' = \term$ and $\rho_1 = \rho_2 = \varnothing$.
  The proof uses simultaneous structural induction over the derivations $\varnothing \vdash \term \sem{l_1}{s_1}{w_1} \termv_1$ and
  $\varnothing \vdash \term \sem{l_2}{s_2}{w_2} \termv_2$.
  At corresponding stochastic branches or stochastic function applications in the two derivations, a separate structural induction argument shows that, for the \ttt{let}-sequences $l_1'$ and $l_2'$ of the two stochastic subderivations, $l_{1}'|_\restt = l_{2}'|_\restt = []$.
  Combined, the two arguments give the result.
  \qed
\end{proof}
\else
The proof$^\dagger$ uses simultaneous structural induction over the derivations $\varnothing \vdash \term \sem{l_1}{s_1}{w_1} \termv_1$ and
$\varnothing \vdash \term \sem{l_2}{s_2}{w_2} \termv_2$.
At corresponding stochastic branches or stochastic function applications in the two derivations, a separate structural induction argument shows that, for the \ttt{let}-sequences $l_1'$ and $l_2'$ of the two stochastic subderivations, $l_{1}'|_\restt = l_{2}'|_\restt = []$.
Combined, the two arguments give the result.
\fi

The result $\restt \subseteq A_\term$ (cf. Definition~\ref{def:align}) shows that the analysis is conservative.

\subsection{Dynamic Alignment}\label{sec:dynamic}
An alternative to static alignment is \emph{dynamic} alignment, which we explored in early stages when developing the alignment analysis.
Dynamic alignment is fully context sensitive and amounts to introducing variables in programs that track (at runtime) when evaluation enters stochastic branching.
To identify these stochastic branches, dynamic alignment also requires a runtime data structure that keeps track of the stochastic values.
Similarly to $k$-CFA, dynamic alignment is potentially more precise than the 0-CFA approach.
However, we discovered that dynamic alignment introduces significant runtime overhead.
Again, we note that the models in Section~\ref{sec:eval} do not require a context-sensitive analysis, justifying the choice of 0-CFA over dynamic alignment and $k$-CFA.

\section{Aligned SMC and MCMC}\label{sec:mcalign}
This section presents detailed algorithms for aligned SMC (Section~\ref{sec:smcalign}) and aligned lightweight MCMC (Section~\ref{sec:mcmcalign}).
For a more pedagogical introduction to the algorithms, see Section~\ref{sec:motivating}.
We assume a basic understanding of SMC and Metropolis--Hastings MCMC algorithms (see, e.g., Bishop~\cite{bishop2006pattern}).

\subsection{Aligned SMC}\label{sec:smcalign}
\begin{algorithm}[tb]
  \caption{%
    Aligned SMC. The input is a program $\term \in \Termanf$ and the number of execution instances $n$.
  }\label{alg:smc}
  \scriptsize
  \vspace{-3mm}
  \begin{enumerate}
    \item Run the alignment analysis on $\term$, resulting in $\restt$ (see Theorem~\ref{thm:main}).
    \item Initiate $n$ execution instances $\{e_i \mid i \in \mathbb{N}, 1 \leq i \leq n\}$ of $\term$.
    \item\label{alg:smc:start}
      Execute all $e_i$ and suspend execution upon reaching an aligned weight (i.e., \ttt{let $x$ = weight $w$ in \term} and $x \in \restt$) or when the execution terminates naturally.
      The result is a new set of execution instances $e_i'$ with weights $w_i'$ accumulated from unaligned \ttt{weight}s and the single final aligned \ttt{weight} during execution.
    \item\label{alg:smc:terminate}
      If all $e_i' = \termv_i'$ (i.e., all executions have terminated and returned a value), terminate inference and return the set of weighted samples $(\termv_i',w_i')$.
      The samples approximate the posterior probability distribution encoded by $\term$.
    \item
      Resample the $e_i'$ according to their weights $w_i'$.
      The result is a new set of unweighted execution instances $e_i''$.
      Set $e_i \gets e_i''$.
      Go to \ref{alg:smc:start}.
  \end{enumerate}
  \vspace{-3mm}
\end{algorithm}%
We saw in Section~\ref{sec:alignedsmcmot} that SMC operates by executing many instances of $\term$ concurrently, and resampling them at calls to \ttt{weight}.
Critically, resampling requires that the inference algorithm can both suspend and resume executions.
Here, we assume that we can create execution instances $e$ of the probabilistic program \term, and that we can arbitrarily suspend and resume the instances.
The technical details of suspension are beyond the scope of this paper.
See Goodman and Stuhlmüller~\cite{goodman2014design}, Wood et al.~\cite{wood2014new}, and Lundén et al.~\cite{lunden2022compiling} for further details.

Algorithm~\ref{alg:smc} presents all steps for the aligned SMC inference algorithm.
After running the alignment analysis and setting up the $n$ execution instances, the algorithm iteratively executes and resamples the instances.
Note that the algorithm resamples only at aligned weights (see Section~\ref{sec:alignedsmcmot}).

\begin{figure}[tb]
  \lstset{%
    basicstyle=\ttfamily\scriptsize,
    numbers=left,
    showlines=true,
    numbersep=3pt,
    framexleftmargin=-2pt,
    xleftmargin=2em,
    language=calc
  }
  \centering
  \begin{subfigure}{0.49\textwidth}
    \centering
    \begin{tabular}{c}
      \begin{lstlisting}
if assume $\textrm{Bernoulli}(0.5)$ then
  weight $1$; weight $10$; $\true$
else
  weight $10$; weight $1$; $\false$
      \end{lstlisting}
    \end{tabular}
    \caption{Aligned better than unaligned.}
    \label{fig:smclim:align}
  \end{subfigure}
  \begin{subfigure}{0.49\textwidth}
    \centering
    \begin{tabular}{c}
      \begin{lstlisting}
if assume $\textrm{Bernoulli}(0.1)$ then
  weight $9$;
  if assume $\textrm{Bernoulli}(0.5)$
  then weight $1.5$ else weight $0.5$;
  $\true$
else (weight $1$; $\false$)
      \end{lstlisting}
    \end{tabular}
    \caption{Unaligned better than aligned.}
    \label{fig:smclim:unalign}
  \end{subfigure}
  \caption{%
    Programs illustrating properties of aligned and unaligned SMC.
    Fig.~(a) shows a program better suited for aligned SMC.
    Fig.~(b) shows a program better suited for unaligned SMC.
  }
  \label{fig:smclim}
\end{figure}
We conjecture that aligned SMC is preferable over unaligned SMC for all practically relevant models, as the evaluation in Section~\ref{sec:eval} justifies.
However, it is possible to construct contrived programs in which unaligned SMC has the advantage.
Consider the programs in Fig.~\ref{fig:smclim}, both encoding Bernoulli$(0.5)$ distributions in a contrived way using \texttt{weight}s.
Fig.~\ref{fig:smclim:align} takes one of two branches with equal probability.
Unaligned SMC resamples at the first \texttt{weight}s in each branch, while aligned SMC does not because the branch is stochastic.
Due to the difference in likelihood, many more \texttt{else} executions survive resampling compared to \texttt{then} executions.
However, due to the final \texttt{weight}s in each branch, the branch likelihoods even out.
That is, resampling at the first \texttt{weight}s is detrimental, and unaligned SMC performs worse than aligned SMC.
Fig.~\ref{fig:smclim:unalign} also takes one of two branches, but now with unequal probabilities.
However, the two branches still have equal posterior probability due to the \texttt{weight}s.
The nested if in the \texttt{then} branch does not modify the overall branch likelihood, but adds variance.
Aligned SMC does not resample for any \texttt{weight} within the branches, as the branch is stochastic.
Consequently, only $10\%$ of the executions in aligned SMC take the \texttt{then} branch, while half of the executions take the \texttt{then} branch in unaligned SMC (after resampling at the first \texttt{weight}).
Therefore, unaligned SMC better explores the \texttt{then} branch and reduces the variance due to the nested \texttt{if}, which results in overall better inference accuracy.
We are not aware of any real model with the property in Fig.~\ref{fig:smclim:unalign}.
In practice, it seems best to always resample when using \texttt{weight} to condition on observed data.
Such conditioning is, in practice, always done outside of stochastic branches, justifying the benefit of aligned SMC.

\subsection{Aligned Lightweight MCMC}\label{sec:mcmcalign}
\begin{algorithm}[tb]
  \caption{%
    Aligned lightweight MCMC. The input is a program $\term \in \Termanf$, the number of steps $n$, and the global step probability $g > 0$.
  }\label{alg:mcmc}
  \scriptsize
  \vspace{-3mm}
  \begin{enumerate}
    \item Run the alignment analysis on $\term$, resulting in $\restt$ (see Theorem~\ref{thm:main}).
    \item Set $i \gets 0$, $k \gets 1$, and $l \gets 1$. Call \textsc{Run}.
    \item\label{alg:mcmc:loop}
      Set $i \gets i + 1$.
      If $i = n$, terminate inference and return the samples $\{\termv_j \mid j \in \mathbb{N}, 0 \leq j < n\}$. They approximate the probability distribution encoded by \term.
    \item Uniformly draw an index $1 \leq j \leq |s_{i-1}|$ at random. Set $\mi{global} \gets \true$ with probability $g$, and $\mi{global} \gets \false$ otherwise. Set $w'_{-1} \gets 1$, $w' \gets 1$, $k \gets 1$, $l \gets 1$, and $\mi{reuse} \gets \true$. Call \textsc{Run}.
    \item\label{alg:mcmc:metropolis}
      Compute the Metropolis--Hastings acceptance ratio
      $\displaystyle
      A = \min\left(1,\frac{w_i}{w_{i-1}}\frac{w'}{w'_{-1}}\right)
      $.
    \item
      With probability $A$, accept $\termv_i$ and go to \ref{alg:mcmc:loop}.
      Otherwise, set $\termv_i \gets \termv_{i-1}$, $w_i \gets w_{i-1}$, $s_i \gets s_{i-1}$, $p_i \gets p_{i-1}$, $s'_i \gets s'_{i-1}$, $p'_i \gets p'_{i-1}$, and $n'_i \gets n'_{i-1}$. Go to \ref{alg:mcmc:loop}.
  \end{enumerate}
  \vspace{-1mm}
  \lstinline[style=alg,basicstyle=\sffamily\scriptsize]!function $\tsc{run}$() =! Run $\term$ and do the following:
  \vspace{-2mm}
  \begin{itemize}
    \item Record the total weight $w_i$ accumulated from calls to \ttt{weight}.
    \item Record the final value $\termv_i$.
    \item At \emph{unaligned} terms \ttt{let $c$ = assume $d$ in \term} ($c \not\in \restt$), do the following.
      \begin{enumerate}
        \item
          If $\mi{reuse} = \false$, $\mi{global} = \true$, $n'_{i-1,k,l} \neq c$, or if $s'_{i-1,k,l}$ does not exist, sample a value $x$ from $d$ and set $\mi{reuse} \gets \false$.
          Otherwise, reuse the sample $x = s'_{i-1,k,l}$ and set \mbox{$w'_{-1} \gets w'_{-1} \cdot p'_{i-1,k,l}$} and \mbox{$w' \gets w' \cdot f_d(c)$}.
        \item
          Set $s'_{i,k,l} \gets x$, $p'_{i,k,l} \gets f_d(x)$, and $n'_{i,k,l} \gets c$.
        \item
          Set $l \gets l + 1$.
          In the program, bind $c$ to the value $x$ and resume execution.
      \end{enumerate}
    \item
      At \emph{aligned} terms \ttt{let $c$ = assume $d$ in \term} ($c \in \restt$), do the following.
      \begin{enumerate}
        \item
          If $j = k$, $\mi{global} = \true$, or if $s_{i-1,k}$ does not exist, sample a value $x$ from $d$ normally.
          Otherwise, reuse the sample $x = s_{i-1,k}$. Set \mbox{$w'_{-1} \gets w'_{-1} \cdot p_{i-1,k}$} and \mbox{$w' \gets w' \cdot f_d(x)$}.
        \item Set $s_{i,k} \gets x$ and $p_{i,k} \gets f_d(x)$.
        \item Set $k \gets k + 1$, $l \gets 1$, and $\mi{reuse} \gets \true$.
          In the program, bind $c$ to the value $x$ and resume execution.
      \end{enumerate}
  \end{itemize}
  \vspace{-2mm}
\end{algorithm}
Aligned lightweight MCMC is a version of lightweight MCMC~\cite{wingate2011lightweight}, where the alignment analysis provides information about how to reuse random draws between executions.
Algorithm~\ref{alg:mcmc}, a Metropolis--Hastings algorithm in the context of PPLs, presents the details.
Essentially, the algorithm executes the program repeatedly using the \tsc{Run} function, and redraws one aligned random draw in each step, while reusing all other aligned draws and as many unaligned draws as possible (illustrated in Section~\ref{sec:alignedlwmot}).
\ifextended
We provide a derivation of the Metropolis--Hastings acceptance ratio in step \ref{alg:mcmc:metropolis} in Appendix~\ref{sec:mcmcaligncont}.
\else
It is possible to formally derive the Metropolis--Hastings acceptance ratio in step \ref{alg:mcmc:metropolis}.$^\dagger$
\fi
A key property in Algorithm~\ref{alg:mcmc} due to alignment (Definition~\ref{def:align}) is that the length of $s_i$ (and $p_i$) is constant, as executing $\term$ always results in the same number of aligned random draws.

In addition to redrawing only one aligned random draw, each step has a probability $g > 0$ of being \emph{global}---meaning that inference redraws \emph{every} random draw in the program.
Occasional global steps fix problems related to slow mixing and ergodicity of lightweight MCMC identified by Kiselyov~\cite{kiselyov2016problems}.
In a global step, the Metropolis--Hastings acceptance ratio reduces to
$
A = \min\left(1,\frac{w_i}{w_{i-1}}\right)
$.

\section{Implementation}\label{sec:implementation}
We implement the alignment analysis (Section~\ref{sec:align}), aligned SMC (Section~\ref{sec:smcalign}), and aligned lightweight MCMC (Section~\ref{sec:mcmcalign}) for the functional PPL \emph{Miking CorePPL}~\cite{lunden2022compiling}, implemented as part of the \emph{Miking} framework~\cite{broman2019vision}.
We implement the alignment analysis as a core component in the Miking CorePPL compiler, and then use the analysis when compiling to two Miking CorePPL backends: RootPPL and Miking Core.
RootPPL is a low-level PPL with built-in highly efficient SMC inference~\cite{lunden2022compiling}, and we extend the CorePPL to RootPPL compiler introduced by Lundén et al.~\cite{lunden2022compiling} to support aligned SMC inference.
Furthermore, we implement aligned lightweight MCMC inference standalone as a translation from Miking CorePPL to Miking Core.
Miking Core is the general-purpose programming language of the Miking framework, currently compiling to OCaml.

The idealized calculus in~\eqref{eq:ast} does not capture all features of Miking CorePPL.
In particular, the alignment analysis implementation must support records, variants, sequences, and pattern matching over these.
Extending 0-CFA to such language features is not new, but it does introduce a critical challenge for the alignment analysis: identifying all possible stochastic branches.
Determining stochastic \ttt{if}s is straightforward, as we simply check if \ttt{stoch} flows to the condition.
However, complications arise when we add a \ttt{match} construct (and, in general, any type of branching construct).
Consider the extension
\begin{equation}
  \begin{aligned}
    &\begin{aligned}
      \term \Coloneqq& \s
      \ldots
      \s | \s
      \ttt{match } \term \ttt{ with } \pat \ttt{ then } \term \ttt{ else } \term
      \s | \s
      \ttt{\{$k_1 = x_1$, $\ldots$, $k_n = x_n$\}}
      \\
      \pat \Coloneqq& \s
      x
      \s | \s
      \textrm{true}
      \s | \s
      \textrm{false}
      \s | \s
      \ttt{\{$k_1 = \pat$, $\ldots$, $k_n = \pat$\}}
    \end{aligned} \\
    &
    x, x_1, \ldots, x_n \in X \quad k_1, \ldots, k_n \in K \quad n \in \mathbb{N}
  \end{aligned}
\end{equation}
of~\eqref{eq:ast}, adding records and simple pattern matching.
$K$ is a set of record keys.
Assume we also extend the abstract values as
$
\absval \Coloneqq \ldots \s | \s \ttt{\{}k_1 = X_1, \ldots, k_n = X_n\ttt{\}},
$
where $X_1, \ldots, X_n \subseteq X$.
That is, we add an abstract record tracking the names in the program that flow to its entries.
Consider the program
\texttt{match $t_1$ with \{ $a = x_1$, $b = \textrm{false}$ \} then $t_2$ else $t_3$}.
This \ttt{match} is, similar to \ttt{if}s, stochastic if $\ttt{stoch} \in S_{t_1}$.
It is also, however, stochastic in other cases.
Assume we have two program variables, $x$ and $y$, such that $\ttt{stoch} \in S_x$ and $\ttt{stoch} \not\in S_y$.
Now, the \ttt{match} is stochastic if, e.g.,
$
  \ttt{\{$a = \{y\}$, $b = \{x\}$\}} \in  S_{t_1},
$
because the random value flowing from $x$ to the pattern false may not match because of randomness.
However, it is \emph{not} stochastic if, instead,
$
  S_{t_1} = \{\ttt{\{$a = \{x\}$, $b = \{y\}$\}}\}
$.
The randomness of $x$ does \emph{not} influence whether or not the branch is stochastic---the variable pattern $x_1$ for label $a$ always matches.

Our alignment analysis implementation handles the intricacies of identifying stochastic \ttt{match} cases for nested record, variant, and sequence patterns.
In total, the alignment analysis, aligned SMC, and aligned lightweight MCMC implementations consist of approximately 1000 lines of code directly contributed as part of this paper.
The code is available on GitHub~\cite{mikingdpplgithub}.

\section{Evaluation}\label{sec:eval}
This section evaluates aligned SMC and aligned lightweight MCMC on a set of models encoded in Miking CorePPL: CRBD~\cite{nee2006birth,ronquist2021universal} in Sections~\ref{sec:crbd} and \ref{sec:crbdmcmc}, ClaDS~\cite{maliet2019model,ronquist2021universal} in Section~\ref{sec:clads}, state-space aircraft localization in Section~\ref{sec:ssm}, and latent Dirichlet allocation in Section~\ref{sec:lda}.
CRBD and ClaDS are non-trivial models of considerable interest in evolutionary biology and phylogenetics~\cite{ronquist2021universal}.
Similarly, LDA is a non-trivial topic model~\cite{blei2003latent}.
Running the alignment analysis took approximately $5$ ms--$30$ ms for all models considered in the experiment, justifying that the time complexity is not a problem in practice.

We compare aligned SMC with standard unaligned SMC~\cite{goodman2014design}, which is identical to Algorithm~\ref{alg:smc}, except that it resamples at \emph{every} call to \ttt{weight}\ifextended\ (see Appendix~\ref{sec:smcunaligned}).\else.$^\dagger$\fi{}
We carefully checked that automatic alignment corresponds to previous manual alignments of each model.
For all SMC experiments, we estimate the \emph{normalizing constant} produced as a by-product of SMC inference rather than the complete posterior distributions.
The normalizing constant, also known as marginal likelihood or model evidence, frequently appears in Bayesian inference and gives the probability of the observed data averaged over the prior.
The normalizing constant is useful for model comparison as it measures how well different probabilistic models fit the data (a larger normalizing constant indicates a better fit).

We ran aligned and unaligned SMC with Miking CorePPL and the RootPPL backend configured for a single-core (compiled with GCC 7.5.0).
Lundén et al.~\cite{lunden2022compiling} shows that the RootPPL backend is significantly more efficient than other state-of-the-art PPL SMC implementations.
We ran aligned and unaligned SMC inference 300 times (and with 3 warmup runs) for each experiment for $10^4$, $10^5$, and $10^6$ executions (also known as \emph{particles} in SMC literature).

We compare aligned lightweight MCMC to lightweight MCMC\ifextended\ (see Appendix~\ref{sec:mcmcunaligned}).\else.$^\dagger$\fi{}
We implement both versions as compilers from Miking CorePPL to Miking Core, which in turn compiles to OCaml (version 4.12).
The lightweight MCMC databases are functional-style maps from the OCaml \texttt{Map} library.
We set the global step probability to $0.1$ for both aligned lightweight MCMC and lightweight MCMC.
We ran aligned lightweight and lightweight MCMC inference 300 times for each experiment.
We burned $10\%$ of samples in all MCMC runs.

For all experiments, we used an Intel Xeon 656 Gold 6136 CPU (12 cores) and 64 GB of memory running Ubuntu 18.04.5.

\subsection{SMC: Constant Rate Birth-Death (CRBD)}\label{sec:crbd}
This experiment considers the CRBD diversification model from~\cite{ronquist2021universal} applied to the Alcedinidae phylogeny (Kingfisher birds, 54 extant species)~\cite{jetz2012global}.
We use fixed diversification rates to simplify the model, as unaligned SMC inference accuracy is too poor for the full model with priors over diversification rates.
Aligned SMC is accurate for both the full and simplified models.
\ifextended
We provide the source code for the complete model in Listing~\ref{lst:crbd} of Appendix~\ref{sec:crbdcont} (130 lines of code).
\else
The source code consists of 130 lines of code.$^\dagger$
\fi
The total experiment execution time was $16$ hours.

\pgfplotsset{exprun/.style={%
    hide y axis,
    axis x line*=bottom,
    ybar=2mm,
    width=6cm,
    height=30mm,
    enlarge x limits=0.3,
    ymin=0,
    xtick=data,
    xtick style={draw=none},
    xtick align=inside,
    nodes near coords,
    bar width=12pt,
    ytick=\empty,
}}
\newcommand{\exprun}[3]{%
  \addplot+[black, fill=\alignedcolor,
    error bars/.cd,
    y dir=both,y explicit
    ] table [x=particles,y=runtime,y error=std] {evaluation/#1-#2-bar.dat};
  \addplot+[black, fill=\unalignedcolor,
    error bars/.cd,
    y dir=both,y explicit
    ] table [x=particles,y=runtime,y error=std] {evaluation/#1-#3-bar.dat};
}

\pgfplotsset{expbox/.style={%
    axis line style={draw=none},
    ytick style={draw=none},
    boxplot/draw direction=y,
    width=6cm,
    height=30mm,
    xtick style={draw=none},
}}
\newcommand{\expbox}[3]{%
  \addplot [
    fill=\alignedcolor,boxplot={draw position=0.8,box extend=0.38}
  ] table [y index=0] {evaluation/#1-#2-1-box.dat};
  \addplot [
    fill=\unalignedcolor,boxplot={draw position=1.2,box extend=0.38}
  ] table [y index=0] {evaluation/#1-#3-1-box.dat};
  \pgfplotsset{cycle list shift=-2}
  \addplot [
    fill=\alignedcolor,boxplot={draw position=1.8,box extend=0.38}
  ] table [y index=0] {evaluation/#1-#2-2-box.dat};
  \addplot [
    fill=\unalignedcolor,boxplot={draw position=2.2,box extend=0.38}
  ] table [y index=0] {evaluation/#1-#3-2-box.dat};
  \pgfplotsset{cycle list shift=-4}
  \addplot [
    fill=\alignedcolor,boxplot={draw position=2.8,box extend=0.38}
  ] table [y index=0] {evaluation/#1-#2-3-box.dat};
  \addplot [
    fill=\unalignedcolor,boxplot={draw position=3.2,box extend=0.38}
  ] table [y index=0] {evaluation/#1-#3-3-box.dat};
}

\begin{figure}[tb]
  \centering
  \begin{subfigure}{0.4\columnwidth}
    \centering
    \begin{tikzpicture}[trim axis left, trim axis right]
      \scriptsize
      \begin{axis}[
        exprun,
        symbolic x coords={10000,100000,1000000},
        xticklabels={$10^6$,$10^5$,$10^4$},
        ymax=160,
        ]
        \exprun{smc/crbd}{align}{likelihood}
      \end{axis}
    \end{tikzpicture}
    \caption{Execution times.}
  \end{subfigure}
  \hfill
  \begin{subfigure}{0.5\columnwidth}
    \centering
    \begin{tikzpicture}[trim axis left, trim axis right]
      \scriptsize
      \begin{axis}[
        expbox,
        xtick={1,2,3},
        xticklabels={$10^4$,$10^5$,$10^6$},
        ytick={-315,-330},
        extra y ticks={-304.746805},
        ]
        \expbox{smc/crbd}{align}{likelihood}
      \end{axis}
    \end{tikzpicture}
    \caption{Log normalizing constant estimates.}
  \end{subfigure}%
  \caption{%
    SMC experiment results for CRBD.
    The x-axes give the number of particles.
    Fig. (a) shows execution times (in seconds) for aligned (\alignedcolor) and unaligned (\unalignedcolor) SMC.
    Error bars show one standard deviation.
    Fig. (b) shows box plot log normalizing constant estimates for aligned (\alignedcolor) and unaligned (\unalignedcolor) SMC.
    The analytically computed log normalizing constant is $-304.75$.
  }
  \label{fig:crbdres}
\end{figure}
Fig.~\ref{fig:crbdres} presents the experiment results.
Aligned SMC is roughly twice as fast and produces superior estimates of the normalizing constant.
Unaligned SMC has not yet converged to the correct value $-304.75$ (available for this particular model due to the fixing the diversification rates) for $10^6$ particles, while aligned SMC produces precise estimates already at $10^4$ particles.
Excess resampling is a significant factor in the increase in execution time for unaligned SMC, as each execution encounters far more resampling checkpoints than in aligned SMC.

\subsection{SMC: Cladogenetic Diversification Rate Shift (ClaDS)}\label{sec:clads}
A limitation of CRBD is that the diversification rates are constant.
ClaDS~\cite{maliet2019model,ronquist2021universal} is a set of diversification models that allow \emph{shifting} rates over phylogenies.
We evaluate the ClaDS2 model for the Alcedinidae phylogeny.
As in CRBD, we use fixed (initial) diversification rates to simplify the model on account of unaligned SMC.
\ifextended
The source code for the complete model is available in Listing~\ref{lst:clads} of Appendix~\ref{sec:cladscont} (147 lines of code).
\else
The source code consists of 147 lines of code.$^\dagger$
\fi
Automatic alignment simplifies the ClaDS2 model significantly, as manual alignment requires collecting and passing weights around in unaligned parts of the program, which are later consumed by aligned \ttt{weight}s.
The total experiment execution time was $67$ hours.

\begin{figure}[tb]
  \centering
  \begin{subfigure}{0.4\columnwidth}
    \centering
    \begin{tikzpicture}[trim axis left, trim axis right]
      \scriptsize
      \begin{axis}[
        exprun,
        symbolic x coords={10000,100000,1000000},
        xticklabels={$10^6$,$10^5$,$10^4$},
        ]
        \exprun{smc/clads}{align}{likelihood}
      \end{axis}
    \end{tikzpicture}
    \caption{Execution times.}
  \end{subfigure}
  \hfill
  \begin{subfigure}{0.5\columnwidth}
    \centering
    \begin{tikzpicture}[trim axis left, trim axis right]
      \scriptsize
      \begin{axis}[
        expbox,
        xtick={1,2,3},
        xticklabels={$10^4$,$10^5$,$10^6$},
        extra y ticks={-314.35},
        ytick={-400,-500}
        ]
        \expbox{smc/clads}{align}{likelihood}
      \end{axis}
    \end{tikzpicture}
    \caption{Log normalizing constant estimates.}
  \end{subfigure}%
  \caption{%
    SMC experiment results for ClaDS.
    The x-axes give the number of particles.
    Fig. (a) shows execution times (in seconds) for aligned (\alignedcolor) and unaligned (\unalignedcolor) SMC.
    Error bars show one standard deviation.
    Fig. (b) shows box plot log normalizing constant estimates for aligned (\alignedcolor) and unaligned (\unalignedcolor) SMC.
    The average estimate for aligned SMC with $10^6$ particles is $-314.35$.
  }
  \label{fig:clads}
\end{figure}

Fig.~\ref{fig:clads} presents the experiment results.
12 unaligned runs for $10^6$ particles and nine runs for $10^5$ particles ran out of the preallocated stack memory for each particle ($10$ kB).
We omit these runs from Fig.~\ref{fig:clads}.
The consequence of not aligning SMC is more severe than for CRBD.
Aligned SMC is now almost seven times faster than unaligned SMC and the unaligned SMC normalizing constant estimates are significantly worse compared to the aligned SMC estimates.
The unaligned SMC estimates do not even improve when moving from $10^4$ to $10^6$ particles (we need even more particles to see improvements).
Again, aligned SMC produces precise estimates already at $10^4$ particles.

\subsection{SMC: State-Space Aircraft Localization}\label{sec:ssm}
This experiment considers an artificial but non-trivial state-space model for aircraft localization.
\ifextended
Appendix~\ref{sec:ssmcont} presents the model as well as the source code in Listing~\ref{lst:ssm} (62 lines of code).
\else
The source code consists of 62 lines of code.$^\dagger$
\fi
The total experiment execution time was $1$ hour.

\begin{figure}[tb]
  \centering
  \begin{subfigure}{0.4\columnwidth}
    \centering
    \begin{tikzpicture}[trim axis left, trim axis right]
      \scriptsize
      \begin{axis}[
        ymax=7,
        exprun,
        symbolic x coords={10000,100000,1000000},
        xticklabels={$10^6$,$10^5$,$10^4$},
        nodes near coords style={/pgf/number format/.cd,fixed}
        ]
        \exprun{smc/ssm}{align}{likelihood}
      \end{axis}
    \end{tikzpicture}
    \caption{Execution times.}
  \end{subfigure}
  \hfill
  \begin{subfigure}{0.5\columnwidth}
    \centering
    \begin{tikzpicture}[trim axis left, trim axis right]
      \scriptsize
      \begin{axis}[
        expbox,
        xtick={1,2,3},
        xticklabels={$10^4$,$10^5$,$10^6$},
        ytick={-55,-65},
        extra y ticks={-61.26},
        ]
        \expbox{smc/ssm}{align}{likelihood}
      \end{axis}
    \end{tikzpicture}
    \caption{Log normalizing constant estimates.}
  \end{subfigure}%
  \caption{%
    SMC experiment results for the state-space aircraft localization model.
    The x-axes give the number of particles.
    Fig. (a) shows execution times (in seconds) for aligned (\alignedcolor) and unaligned (\unalignedcolor) SMC.
    Error bars show one standard deviation.
    Fig. (b) shows box plot log normalizing constant estimates on the y-axis for aligned (\alignedcolor) and unaligned (\unalignedcolor) SMC.
    The average estimate for aligned SMC with $10^6$ particles is $-61.26$.
  }
  \label{fig:ssm}
\end{figure}

Fig.~\ref{fig:ssm} presents the experiment results.
The execution time difference is not as significant as for CRBD and ClaDS.
However, the unaligned SMC normalizing constant estimates are again much less precise.
Aligned SMC is accurate (centered at approximately $-61.26$) already at $10^4$ particles.
The model's straightforward control flow explains the less dramatic difference in execution time---there are at most ten unaligned likelihood updates in the aircraft model, while the number is, in theory, unbounded for CRBD and ClaDS.
Therefore, the cost of extra resampling compared to aligned SMC is not as significant.

\subsection{MCMC: Latent Dirichlet Allocation (LDA)}\label{sec:lda}
This experiment considers latent Dirichlet allocation (LDA), a topic model used in the evaluations by Wingate et al.~\cite{wingate2011lightweight} and Ritchie et al.~\cite{ritchie2016c3}.
We use a synthetic data set, comparable in size to the data set used by Ritchie et al.~\cite{ritchie2016c3}, with a vocabulary of 100 words, 10 topics, and 25 documents each containing 30 words. Note that we are not using methods based on collapsed Gibbs sampling~\cite{griffiths2004finding}, and the inference task is therefore computationally challenging even with a rather small number of words and documents.
\ifextended
The source code for the complete model is available in Listing~\ref{lst:lda} of Appendix~\ref{sec:ldacont} (31 lines of code).
\else
The source code consists of 31 lines of code.$^\dagger$
\fi
The total experiment execution time was 41 hours.

The LDA model consists of only aligned random draws.
As a consequence, aligned lightweight and lightweight MCMC reduces to the same inference algorithm, and we can compare the algorithms by just considering the execution times.
\ifextended
We justify the correctness of both algorithms in Appendix~\ref{sec:ldacont}.
\else
The experiment also justifies the correctness of both algorithms.$^\dagger$
\fi

Fig.~\ref{fig:lda} presents the experiment results.
Aligned lightweight MCMC is almost three times faster than lightweight MCMC.
To justify the execution times with our implementations, we also implemented and ran the experiment with lightweight MCMC in WebPPL~\cite{goodman2014design} for $10^5$ iterations, repeated $50$ times (and with 3 warmup runs).
The mean execution time was $383$ s with standard deviation $5$ s.
We used WebPPL version 0.9.15 and Node version 16.18.0.

\pgfplotsset{exphist/.style={%
    axis line style={draw=none},
    xtick style={draw=none},
    ytick=\empty,
    enlargelimits=false,
}}
\newcommand{\exphist}[3]{%
  \addplot+[
      black, fill=#3, mark=none, ybar interval
    ] table [header=false] {evaluation/#1-#2-hist.dat};
}
\begin{figure}[tb]
  \centering
  \begin{tikzpicture}[trim axis left, trim axis right]
    \scriptsize
    \begin{axis}[
      exprun,
      ymax=410,
      symbolic x coords={1000,10000,100000},
      xticklabels={$10^5$,$10^4$,$10^3$},
      nodes near coords style={/pgf/number format/.cd,fixed}
      ]
      \exprun{mcmc/lda}{align}{lightweight}
    \end{axis}
  \end{tikzpicture}
  \caption{%
    MCMC experiment results for LDA showing execution time (in seconds) for aligned lightweight MCMC (\alignedcolor) and lightweight MCMC (\unalignedcolor).
    Error bars show one standard deviation and the x-axis the number of MCMC iterations.
  }
  \label{fig:lda}
\end{figure}

\subsection{MCMC: Constant Rate Birth-Death (CRBD)}\label{sec:crbdmcmc}
This experiment again considers CRBD.
MCMC is not as suitable for CRBD as SMC, and therefore we use a simple synthetic phylogeny with six leaves and an age span of 5 age units (Alcedinidae used for the SMC experiment has 54 leaves and an age span of 35 age units).
The source code for the complete model is the same as in Section~\ref{sec:crbd}, but we now allow the use of proper prior distributions for the diversification rates.
The total experiment execution time was 7 hours.

Unlike LDA, the CRBD model contains both unaligned and aligned random draws.
Because of this, aligned lightweight MCMC and standard lightweight MCMC do \emph{not} reduce to the same algorithm.
To judge the difference in inference accuracy, we consider the mean estimates of the birth diversification rate produced by the two algorithms, in addition to execution times.
The experiment results shows that the posterior distribution over the birth rate is unimodal\ifextended\ (see Appendix~\ref{sec:crbdmcmccont})\else$^\dagger$\fi, which motivates using the posterior mean as a measure of accuracy.

Fig.~\ref{fig:crbdmcmcres} presents the experiment results.
Aligned lightweight MCMC is approximately $3.5$ times faster than lightweight MCMC.
There is no obvious difference in accuracy.
To justify the execution times and correctness of our implementations, we also implemented and ran the experiment with lightweight MCMC in WebPPL~\cite{goodman2014design} for $3\cdot10^6$ iterations, repeated $50$ times (and with 3 warmup runs).
The mean estimates agreed with Fig.~\ref{fig:crbdmcmcres}.
The mean execution time was $37.1$ s with standard deviation $0.8$ s.
The speedup compared to standard lightweight MCMC in Miking CorePPL is likely explained by the use of early termination in WebPPL, which benefits CRBD.
Early termination easily combines with alignment but relies on execution suspension, which we do not currently use in our implementations.
Note that aligned lightweight MCMC is faster than WebPPL even without early termination.

\begin{figure}[tb]
  \centering
  \begin{subfigure}{0.45\columnwidth}
    \centering
    \begin{tikzpicture}[trim axis left, trim axis right]
      \scriptsize
      \begin{axis}[
        exprun,
        symbolic x coords={30000,300000,3000000},
        xticklabels={$3\cdot10^6$,$3\cdot10^5$,$3\cdot10^4$},
        nodes near coords style={/pgf/number format/.cd,fixed}
        ]
        \exprun{mcmc/crbd}{align}{lightweight}
      \end{axis}
    \end{tikzpicture}
    \caption{Execution times.}
  \end{subfigure}
  \hfill
  \begin{subfigure}{0.45\columnwidth}
    \centering
    \begin{tikzpicture}[trim axis left, trim axis right]
      \scriptsize
      \begin{axis}[
        ymax=0.45,
        ytick={0.4,0.45},
        expbox,
        xtick={1,2,3},
        xticklabels={$3\cdot10^4$,$3\cdot10^5$,$3\cdot10^6$},
        extra y ticks={0.33},
        ]
        \expbox{mcmc/crbd}{align}{lightweight}
      \end{axis}
    \end{tikzpicture}
    \caption{Birth rate mean estimates.}
  \end{subfigure}%
  \caption{%
    MCMC experiment results for CRBD.
    The x-axes give the number of iterations.
    Fig. (a) shows execution times (in seconds) for aligned lightweight MCMC (\alignedcolor) and lightweight MCMC (\unalignedcolor).
    Error bars show one standard deviation.
    Fig. (b) shows box plot posterior mean estimates of the birth rate for aligned lightweight MCMC (\alignedcolor) and lightweight MCMC (\unalignedcolor).
    The average estimate for aligned lightweight MCMC with $3\cdot10^6$ iterations is $0.33$.
  }
  \label{fig:crbdmcmcres}
\end{figure}

In conclusion, the experiments clearly demonstrate the need for alignment.

\section{Related Work}\label{sec:relatedwork}

The approach by Wingate et al.~\cite{wingate2011lightweight} is closely related to ours.
A key similarity with alignment is that executions reaching the same aligned checkpoint also have matching stack traces according to Wingate et al.'s addressing transform.
However, Wingate et al. do not consider the separation between unaligned and aligned parts of the program, their approach is not static, and they do not generalize to other inference algorithms such as SMC.

Ronquist et al.~\cite{ronquist2021universal}, Turing~\cite{ge2018turing}, Anglican~\cite{wood2014new}, Paige and Wood~\cite{paige2014compilation}, and van de Meent et al.~\cite{vandemeent2018introduction} consider the alignment problem.
Manual alignment is critical for the models in Ronquist et al.~\cite{ronquist2021universal} to make SMC inference tractable, which strongly motivates the automatic alignment approach.
The documentation of Turing states that:
``The \texttt{observe} statements [i.e., likelihood updates] should be arranged so that every possible run traverses all of them in exactly the same order. This is equivalent to demanding that they are not placed inside stochastic control flow''~\cite{turing2021}.
Turing does not include any automatic checks for this property.
Anglican~\cite{wood2014new} checks, at runtime (resulting in overhead), that all SMC executions encounter the same number of likelihood updates, and thus resamples the same number of times.
If not, Anglican reports an error: ``some \ttt{observe} directives [i.e., likelihood updates] are not global''.
This error refers to the alignment problem, but the documentation does not explain it further.
Probabilistic C, introduced by Paige and Wood~\cite{paige2014compilation}, similarly assumes that the number of likelihood updates is the same in all executions.
Van de Meent et al. \cite{vandemeent2018introduction} state, in reference to SMC: ``Each breakpoint [i.e., checkpoint] needs to occur at an expression that is evaluated in every execution of a program''.
Again, they do not provide any formal definition of alignment nor an automatic solution to enforce it.


Lundén et al.~\cite{lunden2021correctness} briefly mention the general problem of selecting optimal resampling locations in PPLs for SMC but do not consider the alignment problem in particular.
They also acknowledge the overhead resulting from not all SMC executions resampling the same number of times, which alignment avoids.

The PPLs Birch~\cite{murray2018automated}, Pyro~\cite{bingham2019pyro}, and WebPPL~\cite{goodman2014design} support SMC inference.
Birch and Pyro enforce alignment for SMC as part of model construction.
Note that this is only true for SMC in Pyro---other Pyro inference algorithms use other modeling approaches.
The approaches in Birch and Pyro are sound but demand more of their users compared to the alignment approach.
WebPPL does not consider alignment and resamples at all likelihood updates for SMC.

Ritchie et al.~\cite{ritchie2016c3} and Nori et al.~\cite{nori2014r2} present MCMC algorithms for probabilistic programs.
Ritchie et al.~\cite{ritchie2016c3} optimize lightweight MCMC by Wingate et al.~\cite{wingate2011lightweight} through execution suspensions and callsite caching.
The optimizations are independent of and potentially combines well with aligned lightweight MCMC.
Another MCMC optimization which potentially combines well with alignment is due to Nori et al.~\cite{nori2014r2}.
They use static analysis to propagate observations backwards in programs to improve inference.

Information flow analyses~\cite{sabelfeld2003language} may determine if particular parts of a program execute as a result of different program inputs.
Specifically, if program input is random, such approaches have clear similarities to the alignment analysis.

Many other PPLs exist, such as Gen~\cite{towner2019gen}, Venture~\cite{mansinghka2018probabilistic}, Edward~\cite{tran2017deep}, Stan~\cite{carpenter2017stan}, and AugurV2~\cite{huang2017compiling}.
Gen, Venture, and Edward focus on simplifying the joint specification of a model and its inference to give users low-level control, and do not consider automatic alignment specifically.
However, the incremental inference approach~\cite{towner2018incremental} in Gen does use the addressing approach by Wingate et al.~\cite{wingate2011lightweight}.
Stan and AugurV2 have less expressive modeling languages to allow more powerful inference.
Alignment is by construction due to the reduced expressiveness.

Borgström et al.~\cite{borgstrom2016lambda}, Staton et al.~\cite{staton2016semantics}, \'{S}cibior et al.~\cite{scibior2017denotational}, and {V{\'a}k{\'a}r} et al.~\cite{vakar2019domain} treat semantics and correctness for PPLs, but do not consider alignment.

\section{Conclusion}\label{sec:conclusion}
This paper gives, for the first time, a formal definition of alignment in PPLs.
Furthermore, we introduce a static analysis technique and use it to align checkpoints in PPLs and apply it to SMC and MCMC inference.
We formalize the alignment analysis, prove its correctness, and implement it in Miking CorePPL.
We also implement aligned SMC and aligned lightweight MCMC, and evaluate the implementations on non-trivial CRBD and ClaDS models from phylogenetics, the LDA topic model, and a state-space model, demonstrating significant improvements compared to standard SMC and lightweight MCMC.

\subsubsection{Acknowledgments}
We thank Lawrence Murray, Johannes Borgström, and Jan Kudlicka for early discussions on the alignment idea, and Viktor Senderov for implementing ClaDS in Miking CorePPL.
We also thank the anonymous reviewers at ESOP for their valuable comments.

\clearpage
%
%
%
\bibliographystyle{splncs04} 
\bibliography{references}

\begin{thebibliography}{10}
\providecommand{\url}[1]{\texttt{#1}}
\providecommand{\urlprefix}{URL }
\providecommand{\doi}[1]{https://doi.org/#1}

\bibitem{turing2021}
{Turing.jl}. \url{https://turing.ml/dev/} (2022), accessed: 2022-02-24

\bibitem{mikingdpplgithub}
Miking {DPPL}. \url{https://github.com/miking-lang/miking-dppl} (2023),
  accessed: 2023-01-02

\bibitem{bingham2019pyro}
Bingham, E., Chen, J.P., Jankowiak, M., Obermeyer, F., Pradhan, N., Karaletsos,
  T., Singh, R., Szerlip, P., Horsfall, P., Goodman, N.D.: Pyro: Deep universal
  probabilistic programming. Journal of Machine Learning Research
  \textbf{20}(28), ~1--6 (2019)

\bibitem{bishop2006pattern}
Bishop, C.M.: Pattern Recognition and Machine Learning (Information Science and
  Statistics). Springer-Verlag (2006)

\bibitem{blei2003latent}
Blei, D.M., Ng, A.Y., Jordan, M.I.: Latent {D}irichlet allocation. Journal of
  Machine Learning Research  \textbf{3},  993--1022 (2003)

\bibitem{borgstrom2016lambda}
Borgstr\"{o}m, J., Dal~Lago, U., Gordon, A.D., Szymczak, M.: A lambda-calculus
  foundation for universal probabilistic programming. In: Proceedings of the
  21st ACM SIGPLAN International Conference on Functional Programming. pp.
  33--46. Association for Computing Machinery (2016)

\bibitem{broman2019vision}
Broman, D.: A vision of {Miking}: Interactive programmatic modeling, sound
  language composition, and self-learning compilation. In: Proceedings of the
  12th ACM SIGPLAN International Conference on Software Language Engineering.
  pp. 55--60. Association for Computing Machinery (2019)

\bibitem{carpenter2017stan}
Carpenter, B., Gelman, A., Hoffman, M., Lee, D., Goodrich, B., Betancourt, M.,
  Brubaker, M., Guo, J., Li, P., Riddell, A.: Stan: A probabilistic programming
  language. Journal of Statistical Software, Articles  \textbf{76}(1),  1--32
  (2017)

\bibitem{towner2018incremental}
Cusumano-Towner, M., Bichsel, B., Gehr, T., Vechev, M., Mansinghka, V.K.:
  Incremental inference for probabilistic programs. In: Proceedings of the 39th
  ACM SIGPLAN Conference on Programming Language Design and Implementation. pp.
  571--585. Association for Computing Machinery, New York, NY, USA (2018)

\bibitem{towner2019gen}
Cusumano-Towner, M.F., Saad, F.A., Lew, A.K., Mansinghka, V.K.: Gen: A
  general-purpose probabilistic programming system with programmable inference.
  In: Proceedings of the 40th ACM SIGPLAN Conference on Programming Language
  Design and Implementation. pp. 221--236. Association for Computing Machinery
  (2019)

\bibitem{flanagan1993essence}
Flanagan, C., Sabry, A., Duba, B.F., Felleisen, M.: The essence of compiling
  with continuations. In: Proceedings of the ACM SIGPLAN 1993 Conference on
  Programming Language Design and Implementation. pp. 237--247. Association for
  Computing Machinery, New York, NY, USA (1993)

\bibitem{ge2018turing}
Ge, H., Xu, K., Ghahramani, Z.: Turing: A language for flexible probabilistic
  inference. In: Proceedings of the Twenty-First International Conference on
  Artificial Intelligence and Statistics. vol.~84, pp. 1682--1690. Proceedings
  of Machine Learning Research (2018)

\bibitem{goodman2008church}
Goodman, N.D., Mansinghka, V.K., Roy, D., Bonawitz, K., Tenenbaum, J.B.:
  Church: A language for generative models. In: Proceedings of the
  Twenty-Fourth Conference on Uncertainty in Artificial Intelligence. pp.
  220--229. AUAI Press (2008)

\bibitem{goodman2014design}
Goodman, N.D., Stuhlm\"{u}ller, A.: The design and implementation of
  probabilistic programming languages. \url{http://dippl.org} (2014), accessed:
  2022-02-24

\bibitem{goodman2016probabilistic}
Goodman, N.D., Tenenbaum, J.B., Contributors, T.P.: {Probabilistic Models of
  Cognition}. \url{http://probmods.org/v2} (2016), accessed: 2022-06-10

\bibitem{gothoskar20213dp3}
Gothoskar, N., Cusumano-Towner, M., Zinberg, B., Ghavamizadeh, M., Pollok, F.,
  Garrett, A., Tenenbaum, J., Gutfreund, D., Mansinghka, V.: {3DP3}: {3D} scene
  perception via probabilistic programming. In: Advances in Neural Information
  Processing Systems. vol.~34, pp. 9600--9612. Curran Associates, Inc. (2021)

\bibitem{griffiths2004finding}
Griffiths, T.L., Steyvers, M.: Finding scientific topics. Proceedings of the
  National academy of Sciences  \textbf{101}(suppl\_1),  5228--5235 (2004)

\bibitem{huang2017compiling}
Huang, D., Tristan, J.B., Morrisett, G.: Compiling markov chain monte carlo
  algorithms for probabilistic modeling. In: Proceedings of the 38th ACM
  SIGPLAN Conference on Programming Language Design and Implementation. p.
  111–125. Association for Computing Machinery, New York, NY, USA (2017)

\bibitem{jetz2012global}
Jetz, W., Thomas, G.H., Joy, J.B., Hartmann, K., Mooers, A.O.: The global
  diversity of birds in space and time. Nature  \textbf{491}(7424),  444--448
  (2012)

\bibitem{kahn1987semantics}
Kahn, G.: Natural semantics. In: Proceedings of the 4th Annual Symposium on
  Theoretical Aspects of Computer Science. pp. 22--39. Springer-Verlag, Berlin,
  Heidelberg (1987)

\bibitem{kiselyov2016problems}
Kiselyov, O.: Problems of the lightweight implementation of probabilistic
  programming. In: Proceedings of Workshop on Probabilistic Programming
  Semantics (2016)

\bibitem{kozen1981semantics}
Kozen, D.: Semantics of probabilistic programs. Journal of Computer and System
  Sciences  \textbf{22}(3),  328--350 (1981)

\bibitem{lew2021pclean}
Lew, A., Agrawal, M., Sontag, D., Mansinghka, V.: {PClean}: Bayesian data
  cleaning at scale with domain-specific probabilistic programming. In:
  Proceedings of The 24th International Conference on Artificial Intelligence
  and Statistics. vol.~130, pp. 1927--1935. PMLR (2021)

\bibitem{lunden2021correctness}
Lund{\'e}n, D., Borgstr{\"o}m, J., Broman, D.: Correctness of sequential monte
  carlo inference for probabilistic programming languages. In: Programming
  Languages and Systems. pp. 404--431. Springer International Publishing, Cham
  (2021)

\bibitem{lunden2022compiling}
Lund{\'e}n, D., {\"O}hman, J., Kudlicka, J., Senderov, V., Ronquist, F.,
  Broman, D.: Compiling universal probabilistic programming languages with
  efficient parallel sequential monte carlo inference. In: Programming
  Languages and Systems. pp. 29--56. Springer International Publishing, Cham
  (2022)

\bibitem{lunden2023automaticartifact}
Lundén, D., Caylak, G., Ronquist, F., Broman, D.: Artifact: Automatic
  alignment in higher-order probabilistic programming languages  (Jan 2023).
  \doi{10.5281/zenodo.7572555}

\bibitem{maliet2019model}
Maliet, O., Hartig, F., Morlon, H.: A model with many small shifts for
  estimating species-specific diversification rates. Nature Ecology {\&}
  Evolution  \textbf{3}(7),  1086--1092 (2019)

\bibitem{mansinghka2018probabilistic}
Mansinghka, V.K., Schaechtle, U., Handa, S., Radul, A., Chen, Y., Rinard, M.:
  Probabilistic programming with programmable inference. In: Proceedings of the
  39th ACM SIGPLAN Conference on Programming Language Design and
  Implementation. p. 603–616. Association for Computing Machinery, New York,
  NY, USA (2018)

\bibitem{midtgaard2021control}
Midtgaard, J.: Control-flow analysis of functional programs. {ACM Computing
  Surveys}  \textbf{44}(3) (2012)

\bibitem{murray2018automated}
Murray, L.M., Schön, T.B.: Automated learning with a probabilistic programming
  language: Birch. Annual Reviews in Control  \textbf{46},  29--43 (2018)

\bibitem{naesseth2019elements}
Naesseth, C., Lindsten, F., Sch{\"o}n, T.: Elements of Sequential Monte Carlo.
  Foundations and Trends in Machine Learning Series, Now Publishers (2019)

\bibitem{nee2006birth}
Nee, S.: Birth-death models in macroevolution. Annual Review of Ecology,
  Evolution, and Systematics  \textbf{37}(1),  1--17 (2006)

\bibitem{nielson1999principles}
Nielson, F., Nielson, H.R., Hankin, C.: Principles of Program Analysis.
  Springer-Verlag (1999)

\bibitem{nori2014r2}
Nori, A., Hur, C.K., Rajamani, S., Samuel, S.: {R2}: An efficient {MCMC}
  sampler for probabilistic programs. Proceedings of the AAAI Conference on
  Artificial Intelligence  \textbf{28}(1) (2014)

\bibitem{paige2014compilation}
Paige, B., Wood, F.: A compilation target for probabilistic programming
  languages. In: Xing, E.P., Jebara, T. (eds.) Proceedings of the 31st
  International Conference on Machine Learning. vol.~32, pp. 1935--1943. PMLR,
  Bejing, China (22--24 Jun 2014)

\bibitem{pierce2002types}
Pierce, B.C.: Types and programming languages. MIT press (2002)

\bibitem{ritchie2016c3}
Ritchie, D., Stuhlmüller, A., Goodman, N.: {C3}: Lightweight incrementalized
  {MCMC} for probabilistic programs using continuations and callsite caching.
  In: Proceedings of the 19th International Conference on Artificial
  Intelligence and Statistics. vol.~51, pp. 28--37. PMLR, Cadiz, Spain (2016)

\bibitem{ronquist2021universal}
Ronquist, F., Kudlicka, J., Senderov, V., Borgstr{\"o}m, J., Lartillot, N.,
  Lund{\'e}n, D., Murray, L., Sch{\"o}n, T.B., Broman, D.: Universal
  probabilistic programming offers a powerful approach to statistical
  phylogenetics. Communications Biology  \textbf{4}(1), ~244 (2021)

\bibitem{sabelfeld2003language}
Sabelfeld, A., Myers, A.: Language-based information-flow security. IEEE
  Journal on Selected Areas in Communications  \textbf{21}(1),  5--19 (2003)

\bibitem{scibior2017denotational}
\'{S}cibior, A., Kammar, O., V\'{a}k\'{a}r, M., Staton, S., Yang, H., Cai, Y.,
  Ostermann, K., Moss, S.K., Heunen, C., Ghahramani, Z.: Denotational
  validation of higher-order {B}ayesian inference. Proceedings of the ACM on
  Programming Languages  \textbf{2}(POPL) (2017)

\bibitem{shivers1991control}
Shivers, O.G.: Control-flow analysis of higher-order languages or taming
  lambda. Carnegie Mellon University (1991)

\bibitem{staton2016semantics}
Staton, S., Yang, H., Wood, F., Heunen, C., Kammar, O.: Semantics for
  probabilistic programming: Higher-order functions, continuous distributions,
  and soft constraints. In: Proceedings of the 31st Annual ACM/IEEE Symposium
  on Logic in Computer Science. pp. 525--534. Association for Computing
  Machinery (2016)

\bibitem{tran2017deep}
Tran, D., Hoffman, M.D., Saurous, R.A., Brevdo, E., Murphy, K., Blei, D.M.:
  Deep probabilistic programming. In: International Conference on Learning
  Representations (2017)

\bibitem{vakar2019domain}
V\'{a}k\'{a}r, M., Kammar, O., Staton, S.: A domain theory for statistical
  probabilistic programming. Proceedings of the ACM on Programming Languages
  \textbf{3}(POPL) (2019)

\bibitem{vandemeent2018introduction}
{van de Meent}, J.W., {Paige}, B., {Yang}, H., {Wood}, F.: An introduction to
  probabilistic programming. arXiv e-prints p. arXiv:1809.10756 (2018)

\bibitem{wingate2011lightweight}
Wingate, D., Stuhlmueller, A., Goodman, N.: Lightweight implementations of
  probabilistic programming languages via transformational compilation. In:
  Proceedings of the 14th International Conference on Artificial Intelligence
  and Statistics. vol.~15, pp. 770--778. PMLR (2011)

\bibitem{wood2014new}
Wood, F., Meent, J.W., Mansinghka, V.: A new approach to probabilistic
  programming inference. In: Proceedings of the 17th International Conference
  on Artificial Intelligence and Statistics. vol.~33, pp. 1024--1032. PMLR
  (2014)

\end{thebibliography}
%

\ifextended
\clearpage
\appendix

\section{Evaluation, Continued}\label{sec:evalcont}
This section presents further details related to the evaluation in Section~\ref{sec:eval}.
In particular, we attach code listings for the experiment models.
Note that these listings only give the model code.
The code for the analysis itself and all inference algorithms are available on GitHub~\cite{mikingdpplgithub}.

\subsection{SMC: Constant Rate Birth-Death (CRBD)}\label{sec:crbdcont}
Listing~\ref{lst:crbd} gives the Miking CorePPL source code used for the case study model in Section~\ref{sec:crbd}.
\lstinputlisting[language=CorePPL,caption=The source code for the experiment in Sections~\ref{sec:crbd} and~\ref{sec:crbdmcmc},label=lst:crbd]{examples/crbd.mc}

\subsection{SMC: Cladogenetic Diversification Rate Shift (ClaDS)}\label{sec:cladscont}
Listing~\ref{lst:clads} gives the Miking CorePPL source code used for the case study model in Section~\ref{sec:clads}.
\lstinputlisting[language=CorePPL,caption=The source code for the experiment in Section~\ref{sec:clads},label=lst:clads]{examples/clads.mc}

\subsection{SMC: State-Space Aircraft Localization}\label{sec:ssmcont}
Fig.~\ref{fig:ssmexample} presents the aircraft model used for the experiment in Section~\ref{sec:ssm}.
An aircraft flies along a one-dimensional axis in discrete time steps, and the crew needs to estimate the aircraft's current position using noisy satellite position data available for the ten most recent time steps (defined at line~\ref{line:databegin}).
A second model component---the aircraft's altitude---further complicates the model as the crew \emph{cannot} observe it (the altimeter is not functioning).
The aircraft's velocity and the precision of the satellite observations depend on the altitude, as dictated by the functions $\mi{velocity}$ (defined at line~\ref{line:velocity}) and $\mi{positionObsStDev}$ (defined at line~\ref{line:obsdev}).
The velocity (in meters per second) increases linearly with increasing altitude (less air resistance) but is capped to the range $[100,500]$.
On the other hand, the observation standard deviation (in meters) decreases linearly with increasing altitude (less interference between the satellites and the aircraft) but is never less than ten.

Lines~\ref{line:simbegin} to~\ref{line:simend} define the main function $\mi{simulate}$ iterating over the ten data items.
The critical component illustrating the need for alignment is the \ttt{weight $0.5$} at line~\ref{line:weightair}.
This \ttt{weight} encodes that the pilot adjusts the aircraft's pitch when air traffic control signals altitude deviations more than 100 feet from the assigned altitude of $35\,000$ feet.
Each time step where the actual altitude deviates more than 100 feet from the assigned altitude thus gives a penalty factor of $0.5$.
Unlike the \ttt{weight} at line~\ref{line:weightalignedair}, this \ttt{weight} is unaligned.

The simulation also accounts for variations in, e.g., wind resistance when updating the position at line~\ref{line:posupdate} through a standard deviation of $\mi{positionStDev}$ meters.
Similarly, the altitude varies with a standard deviation of $\mi{altitudeStDev}$ feet when updating the altitude at line~\ref{line:altupdate}.

We generated the ten data points used for the experiment in Section~\ref{sec:ssm} by running the model (ignoring line~\ref{line:weightair}) and sampling from $\mathcal N (\mi{position},\sigma^2)$ at line~\ref{line:weightalignedair}.

\begin{figure}[tbp]
  \lstset{%
    basicstyle=\ttfamily\scriptsize,
    columns=fullflexible,
    numbers=left, showlines=true,
    numbersep=3pt,
    framexleftmargin=-2pt,
    xleftmargin=2em,
    language=calc
  }
  \centering
  \begin{multicols}{2}
    \begin{lstlisting}[name=ssm]
let $\mi{data}$ = [$\label{line:databegin}$
  $603.57$, $860.42$, $1012.07$, $1163.53$,
  $1540.29$, $1818.10$, $2045.38$, $2363.49$,
  $2590.77$, $2801.91$
]$\label{line:dataend}$
let $\mi{holdingAltitude}$ = $35\,000$ in
let $\mi{altitudeRange}$ = $100$ in
let $\mi{position}$ = assume $\textrm{Uniform}(0, 1000)$ in
let $\mi{altitude}$ =
  assume $\mathcal N (\mi{holdingAltitude}, 200^2)$ in
let $\mi{positionStDev}$ = $50$ in
let $\mi{baseVelocity}$ = $250$ in
let $\mi{velocity}$ = $\lambda\mi{altitude}$.$\label{line:velocity}$
  let $k$ = $\frac{\mi{baseVelocity}}{\mi{holdingAltitude}}$ in
  $\min{(500, \max{(100, (k \cdot \mi{altitude}))})}$
in
let $\mi{basePositionObsStDev}$ = $50$ in
let $\mi{positionObsStDev:}$ = $\lambda\mi{altitude}$.$\label{line:obsdev}$
  let $m$ = $100$ in
  let $k$ = $-\frac{\mi{basePositionObsStDev}}{\mi{holdingAltitude}}$ in
  $\max{(10, m + k \cdot\mi{altitude})}$
in
let $\mi{altitudeStDev}$ = $100$ in
let rec $\mi{simulate}$ =
  $\lambda \mi{data}$. $\lambda \mi{position}$. $\lambda \mi{altitude}$.$\label{line:simbegin}$
    match $\mi{data}$ with $d::\mi{ds}$ then
      let $\sigma$ =
        $\mi{positionObsStDev} \hspace{1pt} \mi{altitude}$ in
      weight $f_{\mathcal{N}(\mi{position}, \sigma^2)}(d)$$\label{line:weightalignedair}$
      if $|\mi{altitude} - \mi{holdingAltitude}|$
           $> \mi{altitudeRange}$ then
        weight $0.5$$\label{line:weightair}$
      else $()$;
      let $\mi{position}$ =$\label{line:posupdate}$
        assume $\mathcal N ($
          $\mi{position} + \mi{velocity} \hspace{1pt} \mi{altitude},$
          $\mi{positionStDev}^2$
        $)$ in
      let $\mi{altitude}$ =$\label{line:altupdate}$
        assume $\mathcal N (\mi{altitude}, \mi{altitudeStDev}^2)$
      in
      $\mi{simulate}$ $\mi{ds}$ $\mi{position}$ $\mi{altitude}$
    else $\mi{position}$
in$\label{line:simend}$
$\mi{simulate}$ $\mi{data}$ $\mi{position}$ $\mi{altitude}$
    \end{lstlisting}
  \end{multicols}
  \caption{
    A state-space model for estimating an aircraft's position given a set of noisy position estimates.
    The text contains further details.
    The program uses the syntax~\eqref{eq:ast}, extended with sequences, pattern matching over sequences, and the pattern $::$ for sequence deconstruction.
    The function $f_{\mathcal{N}(\mu, \sigma^2)}$ is the PDF of the normal distribution at $\mu$ with variance $\sigma^2$.
  }
  \label{fig:ssmexample}
\end{figure}

Listing~\ref{lst:ssm} gives the Miking CorePPL source code used for the case study model in Section~\ref{sec:ssm}.
\lstinputlisting[language=CorePPL,caption=The source code for the experiment in Section~\ref{sec:ssm},label=lst:ssm,float]{examples/ssm.mc}

\subsection{MCMC: Latent Dirichlet Allocation (LDA)}\label{sec:ldacont}
Listing~\ref{lst:lda} gives the Miking CorePPL source code used for the case study model in Section~\ref{sec:lda}.
Furthermore, we conduct an additional LDA experiment justifying the correctness of the aligned lightweight MCMC and lightweight MCMC implementations.
The experiment uses a simplified generated data set with only two topics, a vocabulary of two words, and three documents with 10 words each.
To generate the data, we use the true values $\theta_1 = 0.95$, $\theta_2 = 0.05$, and $\theta_3 = 0.5$ for the document topic distributions, and $\phi_1 = 0.99$ and $\phi_2 = 0.01$ for the word distribution within the two topics.
Note that the true proportions above are uniquely determined by the proportion of the first topic and first word, as there are only two topics and two words in the vocabulary.
The simplicity of the model and rather extreme true values used to generate the data allows for easy visualization of the document topic posteriors and justification of their correctness.
Fig.~\ref{fig:ldacont} presents the posterior topic distributions for the three documents for a very large number of MCMC iterations.
As expected, aligned lightweight MCMC and lightweight MCMC produce identical results agreeing with the true values for $\theta_1$, $\theta_2$, and $\theta_3$.
The bimodal posteriors for $\theta_1$ and $\theta_2$ are due to the interchangeability of topics in LDA.

\begin{figure}[tb]
  \centering
    \pgfplotsset{
      ymax=7.5,
      ymin=0,
      xmin=0,
      xmax=1,
      width=0.4\textwidth,
      height=3cm,
      xtick={0,0.5,1},
      title style={yshift=-8mm}
    }
    \centering
    \begin{subfigure}{\columnwidth}
      \centering
      \begin{tikzpicture}[trim axis left]
        \scriptsize
        \begin{axis}[
          exphist,
          title=$\theta_1$,
          ]
          \exphist{mcmc/lda}{align-1}{\alignedcolor}
        \end{axis}
      \end{tikzpicture}
      \begin{tikzpicture}
        \scriptsize
        \begin{axis}[
          exphist,
          title=$\theta_2$,
          ]
          \exphist{mcmc/lda}{align-2}{\alignedcolor}
        \end{axis}
      \end{tikzpicture}
      \begin{tikzpicture}[trim axis right]
        \scriptsize
        \begin{axis}[
          exphist,
          title=$\theta_3$,
          ]
          \exphist{mcmc/lda}{align-3}{\alignedcolor}
        \end{axis}
      \end{tikzpicture}
      \caption{Aligned lightweight MCMC posteriors.}
    \end{subfigure}
    \begin{subfigure}{\columnwidth}
      \centering
      \begin{tikzpicture}[trim axis left]
        \scriptsize
        \begin{axis}[
          exphist,
          title=$\theta_1$,
          ]
          \exphist{mcmc/lda}{lightweight-1}{\unalignedcolor}
        \end{axis}
      \end{tikzpicture}
      \begin{tikzpicture}
        \scriptsize
        \begin{axis}[
          exphist,
          title=$\theta_2$,
          ]
          \exphist{mcmc/lda}{lightweight-2}{\unalignedcolor}
        \end{axis}
      \end{tikzpicture}
      \begin{tikzpicture}[trim axis right]
        \scriptsize
        \begin{axis}[
          exphist,
          title=$\theta_3$,
          ]
          \exphist{mcmc/lda}{lightweight-3}{\unalignedcolor}
        \end{axis}
      \end{tikzpicture}
      \caption{Lightweight MCMC posteriors.}
    \end{subfigure}
  \hfill
  \caption{%
    Fig. (a) and (b) plots aligned lightweight MCMC and lightweight MCMC posterior distributions for the three documents $\theta_1$, $\theta_2$, and $\theta_3$ in the simplified LDA data set in Section~\ref{sec:ldacont}.
    The posteriors are the combined samples of 300 independent MCMC runs, each with $3\cdot10^6$ iterations and $10\%$ burn.
  }
  \label{fig:ldacont}
\end{figure}

\lstinputlisting[language=CorePPL,caption=The source code for the experiment in Section~\ref{sec:lda},label=lst:lda,float]{examples/lda.mc}

\subsection{MCMC: Constant Rate Birth-Death (CRBD)}\label{sec:crbdmcmccont}
Listing~\ref{lst:crbd} gives the Miking CorePPL source code used for the case study model in Section~\ref{sec:crbdmcmc}.
Furthermore, Fig~\ref{fig:crbdplot} shows the posterior distributions over $\mi{lambda}$, justifying the use of the mean as a measure of accuracy as the posterior is clearly unimodal.
\begin{figure}
  \pgfplotsset{
     ymax=10,
     ymin=0,
     xmin=0,
     xmax=1
   }
  \centering
  \begin{subfigure}{\columnwidth}
    \centering
    \begin{tikzpicture}
      \begin{axis}[
          exphist,
          width=0.40\textwidth
        ]
        \exphist{mcmc/crbd}{align-1}{\alignedcolor}
      \end{axis}
    \end{tikzpicture}
    \begin{tikzpicture}
      \begin{axis}[
          exphist,
          width=0.40\textwidth
        ]
        \exphist{mcmc/crbd}{align-2}{\alignedcolor}
      \end{axis}
    \end{tikzpicture}
    \begin{tikzpicture}
      \begin{axis}[
          exphist,
          width=0.40\textwidth
        ]
        \exphist{mcmc/crbd}{align-3}{\alignedcolor}
      \end{axis}
    \end{tikzpicture}
    \caption{Aligned lightweight MCMC.}
  \end{subfigure}
  \\
  \begin{subfigure}{\columnwidth}
    \centering
    \begin{tikzpicture}
      \begin{axis}[
        exphist,
        width=0.40\textwidth
        ]
        \exphist{mcmc/crbd}{lightweight-1}{\unalignedcolor}
      \end{axis}
    \end{tikzpicture}
    \begin{tikzpicture}
      \begin{axis}[
        exphist,
        width=0.40\textwidth
        ]
        \exphist{mcmc/crbd}{lightweight-2}{\unalignedcolor}
      \end{axis}
    \end{tikzpicture}
    \begin{tikzpicture}
      \begin{axis}[
        exphist,
        width=0.40\textwidth
        ]
        \exphist{mcmc/crbd}{lightweight-3}{\unalignedcolor}
      \end{axis}
    \end{tikzpicture}
    \caption{Lightweight MCMC}
  \end{subfigure}
  \caption{%
    One iteration of the CRBD experiment in Section~\ref{sec:crbdmcmc}. Fig. (a) shows posteriors for aligned lightweight MCMC (\alignedcolor).
    From left to right: $3\cdot10^4$ iterations, $3\cdot10^5$ iterations, and $3\cdot10^6$ iterations.
    Fig. (b) shows the corresponding posteriors for lightweight MCMC (\unalignedcolor).
  }
  \label{fig:crbdplot}
\end{figure}

\section{Alignment Analysis, Continued}\label{sec:aligncont}
This section presents the full alignment constraint propagation algorithm (Section~\ref{sec:analysisalg}) and proof of soundness of the alignment analysis (Section~\ref{sec:proof}).

\subsection{Algorithm}\label{sec:analysisalg}
\begin{algorithm}
  \renewcommand{\s}{\hphantom{|}}
  \caption{%
    Alignment analysis.
  }\label{alg:flow}
  \raggedright
  \lstinline[style=alg]!function $\tsc{analyzeAlign}$($\term$): $\Termanf \rightarrow ((X \rightarrow \mathcal{P}(A)) \times \mathcal P(X))$ =!\\
  \vspace{-1mm}
  \hspace{4.0mm}%
  \begin{minipage}{0.96\textwidth}
    \begin{multicols}{2}
      \begin{lstlisting}[
          style=alg,
          basicstyle=\sffamily\scriptsize,
          numbers=left,
          showlines=true,
          numbersep=3pt
        ]
worklist$: [X]$ $\coloneqq$ $[]$
data$: X \rightarrow \mathcal{P}(A)$ $\coloneqq \{(x,\varnothing) \mid x \in X\}$
unaligned$: \mathcal P(X)$ $\coloneqq \varnothing$
edges$: X \rightarrow \mathcal{P}(R)$ $\coloneqq \{(x,\varnothing) \mid x \in X\}$
for $\cstr$ $\in$ $\tsc{generateConstraints}$($\term$):
  $\tsc{initializeConstraint}(\cstr)$
$\tsc{iter}$(); $\s$ return (data, unaligned)

function $\tsc{iter}$: $() \rightarrow ()$ = match worklist with
  | $[]$ $\rightarrow ()$
  | $x$ :: worklist' $\rightarrow$
  $\s$ worklist $\coloneqq$ worklist'
  $\s$ for $\cstr$ $\in$ edges(x):
  $\s$ $\s$ $\tsc{propagateConstraint}$($\cstr$)
  $\s$ $\tsc{iter}$ $()$

function $\tsc{initializeConstraint}$($\cstr$): $R \rightarrow ()$ =
  match $\cstr$ with
  | $\absval \in S_x \rightarrow$ $\tsc{addData}$($x$, $\{\absval\}$)
  | $S_x \subseteq S_y \rightarrow$ $\tsc{initializeConstraint}'$($x$, $\cstr$)
  | $\absval_1 \in S_x \Rightarrow \absval_2 \in S_y \rightarrow$
  $\s$ $\tsc{initializeConstraint}'$($x$, $\cstr$)
  | $\forall x \forall y \s \lambda x. y \in S_\mi{lhs}$
  $\s$ $\s$ $\Rightarrow (S_\mi{rhs} \subseteq S_x) \land (S_y \subseteq S_\mi{app}) \rightarrow$
  $\s$ $\tsc{initializeConstraint}'$($\mi{lhs}$, $\cstr$)
  | $\forall n \s (\ttt{const} \s n \in S_\mi{lhs}) \land (n > 1)$
  $\s$ $\s$ $\Rightarrow \ttt{const } n-1 \in S_\mi{app} \rightarrow$
  $\s$ $\tsc{initializeConstraint}'$($\mi{lhs}$, $\cstr$)
  | $\ttt{const} \s \_ \in S_{\mi{lhs}}$
  $\s$ $\s$ $\Rightarrow (\ttt{stoch} \in \mi{rhs} \Rightarrow \ttt{stoch} \in \mi{app}) \rightarrow$
  $\s$ $\tsc{initializeConstraint}'$($\mi{lhs}$, $\cstr$)
  | $\unaligned_x \Rightarrow \unaligned_y \rightarrow$
  $\s$ $\tsc{initializeConstraint}'$($x$, $\cstr$)
  | $\ttt{stoch} \in S_x \Rightarrow \unaligned_y \rightarrow$
  $\s$ $\tsc{initializeConstraint}'$($x$, $\cstr$)
  | $\forall x \s \lambda x.\_ \in S_\mi{lhs} \Rightarrow \unaligned_x \rightarrow$
  $\s$ $\tsc{initializeConstraint}'$($\mi{lhs}$, $\cstr$)
  | $\unaligned_\mi{res} \Rightarrow$
  $\s$ $\s$ $(\forall x \s \lambda x. \_ \in S_{\mi{lhs}} \Rightarrow \unaligned_x) \rightarrow$
  $\s$ $\tsc{initializeConstraint}'$($\mi{res}$, $\cstr$)
  | $\ttt{stoch} \in S_{\mi{lhs}} \Rightarrow$
  $\s$ $\s$ $ (\forall x \s \lambda x. \_ \in S_{\mi{lhs}} \Rightarrow \unaligned_x) \rightarrow$
  $\s$ $\tsc{initializeConstraint}'$($\mi{lhs}$, $\cstr$)

function $\tsc{initializeConstraint}'$($x$,$\cstr$)
    : $X \rightarrow ()$ =
  edges($x$) $\coloneqq$ edges($x$) $\cup \s \{\cstr\}$;
  $\tsc{propagateConstraint}$($\cstr$)
(*@ \columnbreak @*)
function $\tsc{addData}$($x$, A): $X \times \mathcal P(A) \rightarrow ()$ =
  if A $\not\subseteq$ data($x$) then
    data($x$) $\coloneqq$ data($x$) $\cup \s A$
    worklist $\coloneqq$ $x$ $::$ worklist

function $\tsc{addUnaligned}$($x$): $X \rightarrow ()$ =
  if $x \not \in$ unaligned then
    unaligned $\coloneqq$ unaligned $\cup \{ x\}$
    worklist $\coloneqq$ $x$ $::$ worklist

function $\tsc{propagateConstraint}$($\cstr$): $R \rightarrow ()$ =
  match c with
  | $\absval \in S_x \rightarrow ()$
  | $S_x \subseteq S_y \rightarrow$ $\tsc{addData}$($y$, data($x$))
  | $\absval_1 \in S_x \Rightarrow \absval_2 \in S_y \rightarrow$
  $\s$ if $\absval_1 \in$ data($x$) then $\tsc{addData}$($y$,$\{\absval_2\}$)
  | $\forall x \forall y \s \lambda x. y \in S_\mi{lhs}$
  $\s$ $\s$ $\Rightarrow (S_\mi{rhs} \subseteq S_x) \land (S_y \subseteq S_\mi{app}) \rightarrow$ $\label{line:proplambda}$
  $\s$ for $\lambda x. y \in$ data($\mi{lhs}$):
  $\s$ $\s$ $\tsc{initializeConstraint}$($S_\mi{rhs} \subseteq S_x$)
  $\s$ $\s$ $\tsc{initializeConstraint}$($S_y \subseteq S_\mi{app}$)
  | $\forall n \s (\ttt{const} \s n \in S_\mi{lhs}) \land (n > 1)$
  $\s$ $\s$ $\Rightarrow \ttt{const} \s n-1 \in S_\mi{app} \rightarrow$
  $\s$ for $\ttt{const} \s n \in$ data($\mi{lhs}$):
  $\s$ $\s$ if $n > 1$ then
  $\s$ $\s$ $\s$ $\tsc{addData}$($\mi{app}$, $\{\ttt{const} \s n-1\}$)
  | $\ttt{const} \s \_ \in S_{\mi{lhs}}$
  $\s$ $\s$ $\Rightarrow (\ttt{stoch} \in \mi{rhs} \Rightarrow \ttt{stoch} \in \mi{app}) \rightarrow$
  $\s$ if $\exists n \s \ttt{const } n \in S_\mi{lhs}$ then
  $\s$ $\s$ $\tsc{initializeConstraint}$(
  $\s$ $\s$ $\s$ $\ttt{stoch} \in \mi{rhs} \Rightarrow \ttt{stoch} \in \mi{app}$
  $\s$ $\s$ )
  | $\unaligned_x \Rightarrow \unaligned_y \rightarrow$
  $\s$ if $x \in$ unaligned then $\tsc{addUnaligned}$($y$)
  | $\ttt{stoch} \in S_x \Rightarrow \unaligned_y \rightarrow$
  $\s$ if $\ttt{stoch} \in$ data($x$) then $\tsc{addUnaligned}$($y$)
  | $\forall x \s \lambda x.\_ \in S_\mi{lhs} \Rightarrow \unaligned_x \rightarrow$
  $\s$ for $\lambda x. \_ \in$ data($\mi{lhs}$): $\tsc{addUnaligned}$($x$)
  | $\unaligned_\mi{res} \Rightarrow$
  $\s$ $\s$ $ (\forall x \s \lambda x. \_ \in S_{\mi{lhs}} \Rightarrow \unaligned_x) \rightarrow$
  $\s$ if $\mi{res} \in$ unaligned then
  $\s$ $\s$ $\tsc{initializeConstraint}$(
  $\s$ $\s$ $\s$ $\forall x \s \lambda x. \_ \in S_{\mi{lhs}} \Rightarrow \unaligned_x$
  $\s$ $\s$ )
  | $\ttt{stoch} \in S_{\mi{lhs}} \Rightarrow$
  $\s$ $\s$ $ (\forall x \s \lambda x. \_ \in S_{\mi{lhs}} \Rightarrow \unaligned_x) \rightarrow$
  $\s$ if $\ttt{stoch} \in$ data($\mi{lhs}$) then
  $\s$ $\s$ $\tsc{initializeConstraint}$(
  $\s$ $\s$ $\s$ $\forall x \s \lambda x. \_ \in S_{\mi{lhs}} \Rightarrow \unaligned_x$
  $\s$ $\s$ )
      \end{lstlisting}
    \end{multicols}
  \end{minipage}
\end{algorithm}
Algorithm~\ref{alg:flow} presents the full alignment algorithm that produces a solution to the constraints generated by Algorithm~\ref{alg:gencstr}.
For reference, we now also give a more formal definition of constraints $\cstr$.
\begin{definition}[Constraints]
  \begin{equation}\label{eq:cstr}
    \hspace{-2mm}
    \begin{aligned}
      &\begin{aligned}
        \cstr \Coloneqq& \s
        \absval \in S_x
        \s | \s
        S_x \subseteq S_y
        \s | \s
        \absval \in S_x \Rightarrow \absval \in S_y
        \\ |& \s
        \forall x \forall y \s \lambda x. y \in S_{\mi{lhs}} \Rightarrow (S_\mi{rhs} \subseteq S_x) \land (S_y \subseteq S_\mi{app})
        \\ |& \s
        \forall n \s (\ttt{const} \s n \in S_\mi{lhs}) \land (n > 1) \Rightarrow \ttt{const} \s n-1 \in S_\mi{app}
        \\ |& \s
        \ttt{const} \s \_ \in S_\mi{lhs} \Rightarrow (\ttt{stoch} \in S_\mi{rhs} \Rightarrow \ttt{stoch} \in S_\mi{app})
        \\ |& \s
        \unaligned_x \Rightarrow \unaligned_y
        \s | \s
        %
        \ttt{stoch} \in S_x \Rightarrow \unaligned_y
        \\ |& \s
        \forall x \s \lambda x. \_ \in S_{\mi{lhs}} \Rightarrow \unaligned_x
        \\ |& \s
        \unaligned_\mi{res} \Rightarrow (\forall x \s \lambda x. \_ \in S_{\mi{lhs}} \Rightarrow \unaligned_x)
        \\ |& \s
        \ttt{stoch} \in S_{\mi{lhs}} \Rightarrow (\forall x \s \lambda x. \_ \in S_{\mi{lhs}} \Rightarrow \unaligned_x)
      \end{aligned} \\
      &
      x,y,\mi{lhs},\mi{rhs},\mi{app},\mi{res} \in X.
    \end{aligned}
  \end{equation}%
\end{definition}
The main function \tsc{analyzeAlign} consists of two steps: initialization and iteration.
In the initialization step, \tsc{generateConstraints} provides constraints to the \tsc{initializeConstraint} function, which initializes the maps \tsf{data} and \tsf{edges}, and the set \tsf{unaligned}.
The map \tsf{data} contains the sets of abstract values for all program variables and is initially empty.
At termination, $\tsf{data}(x)$ is a sound approximation of $S_x$ for each $x$ (Lemma~\ref{lemma:cfa}).
The map \tsf{edges} associates a set of constraints with each variable in the program.
Specifically, we must propagate the constraints associated with a variable $x$ after updating $\tsf{data}(x)$ with new information.
Finally, the set \tsf{unaligned} tracks unaligned expressions and is initially empty.
At termination, \tsf{unaligned} contains the set of all unaligned variables identified by the analysis.
This set is sound according to Lemma~\ref{lemma:cfa}.

The iteration step \tsc{iter} propagates constraints with \tsc{propagateConstraint} for all variables updated with new abstract values or unalignment since their last propagation.
We store these updated variables in the sequence \tsf{worklist}, which, when empty, signals fixpoint and termination.
Note that, e.g., the lambda application constraint at line~\ref{line:proplambda} initializes new constraints \emph{dynamically} during propagation, depending on which abstract lambdas flow to the left-hand side of the application.

\subsection{Correctness Proof}\label{sec:proof}
This section presents the correctness proof that is ultimately used to prove Theorem~\ref{thm:main}.

Throughout this section, $\term_1 = \term_2$ means that the terms $\term_1$ and $\term_2$ are alpha equivalent.
For constant comparisons $c_1 = c_2$, we assume the prior existence of an equality function over constants.
We first require a specific equality relation on values.
\newcommand\eqv{\stackrel{V}{=}}
\begin{definition}[Value equality]
  $\termv_1 \eqv \termv_2$ iff
  \begin{itemize}
    \item $\termv_1 = \langle\lambda x. \term_1,\rho_1\rangle$, $\termv_2 = \langle\lambda x. \term_2,\rho_2\rangle$, and $\term_1 = \term_2$, or
    \item $\termv_1 = c_1$, $\termv_2 = c_2$, and $c_1 = c_2$.
  \end{itemize}
\end{definition}
Note, in particular, that $\eqv$ treats closures as equal even if their environments differ.
As we will see, this property is critical in the proof of Lemma~\ref{lemma:unaligned}.

Next, we formally define subterms.
\begin{definition}[Subterms]\label{def:subterm}
  We say that $\term'$ is a subterm of $\term$ iff
  \begin{itemize}
    \item[(1)] $\term' = \term$, or
    \item[(2)] either
      \[
        \begin{gathered}
          \term = \lambda x. \s \term_1,\quad
          \term = \term_1 \s \term_2,\quad
          \term = \ttt{let } x = \term_1 \ttt{ in } \term_2, \\
          \term = \ttt{if } \term_1 \ttt{ then } \term_2 \ttt{ else } \term_3,\\
          \term = \ttt{assume } \term_1,\quad
          \text{or} \s \term = \ttt{weight } \term_1,
        \end{gathered}
      \]
    and $\term'$ is a subterm of either $\term_1$, $\term_2$, or $\term_3$.
  \end{itemize}
\end{definition}
\noindent
In the below, we assume a
\begin{itemize}
  \item fixed $\term \in \Termanf$,
  \item an assignment to $S_x$ and $\unaligned_x$ for $x \in X$ from \tsc{analyzeAlign}$(\term)$, and
  \item $\restt = \{ x \mid \neg\unaligned_x\}$.
\end{itemize}

\newcommand\condenv[1]{\tbf{\upshape(C1#1)}}
\newcommand\resu[1]{\tbf{\upshape(R1#1)}}
\newcommand\resv[1]{\tbf{\upshape(R2#1)}}
\newcommand\resunalignedall[1]{\resu{#1}--\resv{#1}}
\newcommand\condenvtwo[1]{\tbf{\upshape(C2#1)}}
\newcommand\condenvrec[1]{\tbf{\upshape(C3#1)}}
\newcommand\condenvneq[1]{\tbf{\upshape(C4#1)}}
\newcommand\condalignedall[1]{\condenvtwo{#1}--\condenvneq{#1}}
\newcommand\resa[1]{\tbf{\upshape(R3#1)}}
\newcommand\resvrec[1]{\tbf{\upshape(R4#1)}}
\newcommand\resvneq[1]{\tbf{\upshape(R5#1)}}
\newcommand\resalignedall[1]{\resa{#1}--\resvneq{#1}}

\noindent We begin with a lemma concerning unaligned expressions in single evaluations of $\Downarrow$.
\begin{lemma}[Unaligned evaluations]\label{lemma:unaligned}
  Let
  \begin{itemize}
    \item $\term'$ be a subterm of $\term$, $\term' \in \Termanf$, and
    \item $\rho \vdash \term' \sem{l}{s}{w} \termv$
  \end{itemize}
  with $\rho$ such that, for each $x \in X$,
  \begin{description}
    \item[\condenv{}]
      $\rho(x) = \langle\lambda y. \term_y, \rho_y \rangle$ implies that
      $(\lambda y. \term_y)$ is a subterm of $\term$,
      $\lambda y. \tsc{name}(\term_y) \in S_x$, and
      that \condenv{} holds for $\rho_y$.
      Also,  $\rho(x) = c$ such that $|c| > 1$ implies $\ttt{const } |c| \in S_x$.
  \end{description}
  Then,
  \begin{description}
    \item[\resu{}] if $\unaligned_n$ for all $n \in \tsc{names}(\term')$, then $l|_\restt = []$, and
    \item[\resv{}]
      $\termv{} = \langle\lambda y. \term_y, \rho_y \rangle$ implies $(\lambda y. \term_y)$ is a subterm of $\term$ and $\lambda y. \tsc{name}(t_y) \in S_{\tsc{name}(\term')}$, and  that \condenv{} holds for $\rho_y$.
      Furthermore, $\termv{} = c$ such that $|c| > 1$ implies $\ttt{const } |c| \in  S_{\tsc{name}(\term')}$.
  \end{description}
\end{lemma}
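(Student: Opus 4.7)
The plan is to proceed by structural induction on the derivation $\rho \vdash \term' \sem{l}{s}{w} \termv$. Since $\term' \in \Termanf$, the root rule must be either (\tsc{Var}) or (\tsc{Let}). The (\tsc{Var}) case should be immediate: $l = []$ makes \resu{} trivial (also since $\tsc{names}(x) = \varnothing$), and \resv{} follows directly from \condenv{} applied to the looked-up variable.

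The (\tsc{Let}) case $\term' = \ttt{let } x = \term_1 \ttt{ in } \term_2$, with $l = l_1 \concat [x] \concat l_2$, is where I expect essentially all of the work. For \resu{}, the hypothesis that every $n \in \tsc{names}(\term')$ is unaligned gives $x \notin \restt$ and that every name in $\tsc{names}(\term_2)$ is unaligned, so $[x]|_\restt = []$ and, provided \condenv{} transfers to $\rho, x \mapsto \termv_1$ (discussed below for \resv{}), the IH on $\term_2$ gives $l_2|_\restt = []$. I would then show $l_1|_\restt = []$ by a case analysis on $\term_1$. For the atomic forms $y$, $c$, $\lambda y.\term_y$, $\ttt{assume } y$, and $\ttt{weight } y$, the subderivation forces $l_1 = []$. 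The non-trivial subcases are the application $y_1 \s y_2$ and the conditional $\ttt{if } y \ttt{ then } \term_t \ttt{ else } \term_e$, where an inner subderivation over a function body or a branch contributes a potentially non-empty sequence.

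In the application subcase, applying the IH to the body $\term_z \in \Termanf$ of the called closure $\langle\lambda z. \term_z, \rho_z\rangle$ requires establishing (i) that \condenv{} holds on the extended environment $\rho_z, z \mapsto \termv_2$, and (ii) that every name in $\tsc{names}(\term_z)$ is unaligned. I would justify (i) by invoking \resv{} on the $y_1$ subderivation, which yields both \condenv{} for $\rho_z$ and $\lambda z. \tsc{name}(\term_z) \in S_{y_1}$, together with \resv{} on $y_2$ and the application constraint $\forall z' \forall y' \s \lambda z'. y' \in S_{y_1} \Rightarrow S_{y_2} \subseteq S_{z'}$, which propagates the abstract-value information from $S_{y_2}$ into $S_z$. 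Item (ii) follows by combining the application constraint $\unaligned_x \Rightarrow (\forall y' \s \lambda y'. \_ \in S_{y_1} \Rightarrow \unaligned_{y'})$ with the abstraction constraint $\unaligned_z \Rightarrow \unaligned_n$ for $n \in \tsc{names}(\term_z)$. The (\tsc{Const-App}) variant is easier, as the result is a constant and contributes no inner \ttt{let} bindings. The conditional subcase is strictly simpler still, since each branch already lies in $\Termanf$ and the \ttt{if} constraints $\ttt{stoch} \in S_y \Rightarrow \unaligned_n$ and $\unaligned_x \Rightarrow \unaligned_n$ for $n$ in either branch propagate unalignment directly.

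For \resv{}, the \ttt{let}'s result equals that of $\term_2$, so the IH on $\term_2$ yields the closure or constant property with respect to $S_{\tsc{name}(\term_2)} = S_{\tsc{name}(\term')}$, as required. This step (and the \resu{} step above) needs \condenv{} on $\rho, x \mapsto \termv_1$, which I would handle subcase by subcase on $\term_1$: the alias case uses $S_y \subseteq S_x$; the abstraction case uses $\lambda y.\tsc{name}(\term_y) \in S_x$; the application case combines \resv{} on the inner body subderivation with $S_{y'} \subseteq S_x$ from the application constraint; the conditional case uses the branch-result constraints; the intrinsic application case uses $\ttt{const } n{-}1 \in S_x$ (together with the assumption that intrinsics never produce closures); and the \ttt{assume}/\ttt{weight} cases are immediate since $\termv_1$ is a constant of arity $0$. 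The main obstacle I anticipate is the careful bookkeeping of threading \condenv{} across nested closure environments under chained applications while invoking precisely the constraint generated by Algorithm~\ref{alg:gencstr} that justifies each propagation; Lemma~\ref{lemma:cfa} (soundness of the constraint solution) makes each such step valid.
\qed
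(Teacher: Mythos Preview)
Your proposal is correct and follows essentially the same approach as the paper: structural induction on the derivation, with the \textsc{Var} case immediate and the \textsc{Let} case handled by a subcase analysis on $\term_1$, establishing \condenv{} for the extended environment in each subcase before invoking the IH on $\term_2$ (and, in the application and conditional subcases, on the inner body/branch). The only cosmetic difference is that where you appeal to \resv{} on the variable subderivations $y_1$ and $y_2$, the paper invokes \condenv{} for $\rho$ directly; these are equivalent since the \textsc{Var} case reduces \resv{} to \condenv{}.
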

\begin{proof}
  We proceed by structural induction over
  $\rho \vdash \term' \sem{l}{s}{w} \termv$. \\[2mm]
  \tbf{Case} $\term' = x$:\\
  The derivation is
  \[
    \frac{}
    { \rho \vdash x \sem{[]}{[]}{1} \rho(x) }
    (\textsc{Var})
  \]
  \begin{description}
    \item[\resu{}] Immediate as $l = [] = l|\restt$.
    \item[\resv{}] By definition, $\tsc{name}(\term') = x$ and $\rho(x) = \termv$. The result follows from \condenv{}.
  \end{description}
  \noindent\tbf{Case} $\term' = (\ttt{let } x = \term_1 \ttt{ in } \term_2)$:\\
  The derivation is
  \[
    \frac{ \rho \vdash \term_1 \sem{l_{1}}{s_{1}}{w_{1}} \termv' \quad \rho, x \mapsto \termv' \vdash \term_{2} \sem{l_{2}}{s_{2}}{w_{2}} \termv}
    { \rho \vdash \ttt{let } x = \term_1 \ttt{ in } \term_2 \sem{l_{1} \concat [x] \concat l_{2}}{s_{1} \concat s_{2}}{w_{1} \cdot w_{2}} \termv{} }
    (\textsc{Let})
  \]
  Note that $\unaligned_n$ for all $n \in \tsc{names}(\term')$ and the definition of $\restt$ implies $[x]|_\restt = []$. Also,
  \[
    \begin{aligned}
      l|_\restt = (l_{1} \concat [x] \concat l_{2})|_\restt =
      l_{1}|_\restt \concat [x]|_\restt \concat l_{2}|_\restt.
    \end{aligned}
  \]
  To show \resu{}, we therefore only need $l_{1}|_\restt = l_2|_\restt = []$.
  Now, let $\rho' = \rho,x \mapsto \rho(y)$.
  To apply the induction hypothesis, we must establish \condenv{} for $\rho'$, denoted \condenv{$'$}.
  To prove \condenv{$'$}, note that we only need to consider $\rho'(x)$.
  For all other $x' \in X$, $\rho'(x') = \rho(x')$ and \condenv{$'$} follows directly as a result of \condenv{}.
  We denote the induction hypothesis results \resunalignedall{} for $\rho' \vdash t_{2} \sem{l_{2}}{s_{2}}{w_{2}} \termv$ with \resunalignedall{$'$}.
  Next, we consider each case for $\term_1$ (according to $\termanf'$ in \eqref{eq:anf}, p.~\pageref{eq:anf}).
  Note that \resv{} follows directly from \resv{$'$} as $\tsc{name}(\term_2) = \tsc{name}(\term')$.
  Thus, we only need to consider \condenv{$'$} and \resu{}.
  \\[2mm]\noindent\tbf{Subcase} $\term_1 = y$\\
  The derivation for $\term_1$ is
  \[
    \frac{}
    { \rho \vdash y \sem{[]}{[]}{1} \rho(y) }
    (\textsc{Var})
  \]
  Clearly $\rho'(x) = \rho(y)$.
  Also, $S_y \subseteq S_x$ from Lemma~\ref{lemma:cfa}.
  \begin{description}
    \item[\condenv{$'$}]
      If $\rho'(x) = \langle\lambda z. \term_z,\rho_z\rangle$, then $\lambda z. \term_z$ is a subterm of $\term$ by $\rho'(x) = \rho(y)$ and \condenv{}.
      Also, clearly \condenv{} holds for $\rho_z$ by \condenv{} for $\rho$.
      Furthermore, $\lambda z. \tsc{name}(\term_z) \in S_y$ by \condenv{} and $S_y \subseteq S_x$ implies $\lambda z. \tsc{name}(\term_z) \in S_x$.
      By a similar argument, if $\rho'(x) = c$ such that $|c| > 0$, $\text{const } |c| \in S_x$.
  \end{description}
  We now apply the induction hypothesis and get \resunalignedall{$'$}.
  \begin{description}
    \item[\resu{}] We clearly have $l_1 = []$. The result now follows immediately from \resu{$'$}.
  \end{description}
  \noindent\tbf{Subcase} $\term_1 = c$\\
  The derivation for $\term_1$ is
  \[
    \frac{}
    { \rho \vdash c \sem{[]}{[]}{1} c }
    (\textsc{Const})
  \]
  Clearly, $\rho'(x) = c$.
  \begin{description}
    \item[\condenv{$'$}]
      If $|c| > 0$, we have $\ttt{const } |c| \in S_x$ as a result of Lemma~\ref{lemma:cfa}.
  \end{description}
  We now apply the induction hypothesis and get \resunalignedall{$'$}.
  \begin{description}
    \item[\resu{}] We clearly have $l_1 = []$. The result now follows immediately from \resu{$'$}.
  \end{description}
  \noindent\tbf{Subcase} $\term_1 = \lambda y.  \term_y$\\
  The derivation for $\term_1$ is
  \[
    \frac{}
    { \rho \vdash \lambda y. \term_y \sem{[]}{[]}{1} \langle\lambda y. \term_y,\rho\rangle } (\textsc{Lam})
  \]
  Clearly, $\rho'(x) = \langle\lambda y. \term_y,\rho\rangle$.
  \begin{description}
    \item[\condenv{$'$}]
      First, it is clear that $\lambda y. \term_y$ is a subterm of $\term$ and that \condenv{} holds for $\rho$. Lastly, Lemma~\ref{lemma:cfa} also gives $\lambda y. \tsc{name}(\term_y) \in S_x$.
  \end{description}
  We now apply the induction hypothesis and get \resunalignedall{$'$}.
  \begin{description}
    \item[\resu{}] We clearly have $l_1 = []$. The result now follows immediately from \resu{$'$}.
  \end{description}
  \noindent\tbf{Subcase} $\term_1 = y \s z$\\
  The possible derivations are
  \[
    \begin{gathered}
      \frac{
        \begin{gathered}
          \rho \vdash y \sem{[]}{[]}{1} \langle\lambda y'. \term_{y'},\rho_{y'}\rangle \quad
          \rho \vdash z \sem{[]}{[]}{1} \rho(z) \\[-1mm]
          \rho_{y'},y' \mapsto \rho(z) \vdash \term_{y'} \sem{l_{1}}{s_{1}}{w_{1}} \termv'
        \end{gathered}
      }
      { \rho \vdash y \s z \sem{l_{1}}{s_{1}}{w_{1}} \termv' }
      (\textsc{App}) \\
      \frac{
        \rho \vdash y \sem{[]}{[]}{1} c \quad
        \rho \vdash z \sem{[]}{[]}{1} \rho(z) \quad
        |c| > 0
      }
      { \rho \vdash y \s z \sem{[]}{[]}{1} \delta(c, \rho(z)) }
      (\textsc{Const-App})
    \end{gathered}
  \]
  \begin{description}
    \item[\condenv{$'$}]
      We first consider the case (\tsc{App}).
      Let $\rho'' = \rho_{y'}, y' \mapsto \rho(z)$ and consider the derivation $\rho'' \vdash \term_{y'} \sem{l_{1}}{s_{1}}{w_{1}} \termv'$.
      To apply the induction hypothesis, we must establish \condenv{} for $\rho''$.
      First, \condenv{} holds for $\rho_{y'}$ by \condenv{} for $\rho$.
      \condenv{} therefore also holds for $\rho''$ by a similar argument to $\rho'$ in the subcase $\term_1 = y$ above.
      From the induction hypothesis, we then get \resunalignedall{$''$}.
      From Lemma~\ref{lemma:cfa}, we have $S_{\tsc{name}(\term_{y'})} \subseteq S_x$.
      Combined with \resv{$''$}, the result follows. \\[1mm]
      Now, consider the case (\tsc{Const-App}).
      From Lemma~\ref{lemma:cfa},
      $\forall n \s \ttt{const} \s n \in S_y \land n > 1 \Rightarrow \ttt{const} \s n-1 \in S_x$.
      From Definition~\ref{def:const}, we also have $|\delta(c,\rho(z))| = |c| - 1$.
      The result now follows.
  \end{description}
  We now apply the induction hypothesis and get \resunalignedall{$'$}.
  \begin{description}
    \item[\resu{}] The (\tsc{Const-App}) case is immediate by $l_1 = []$ and \resu{$'$}.
      Therefore, assume the derivation is (\tsc{App}) and that
      $\unaligned_n$ for all $n \in \tsc{names}(\term')$ (in particular $\unaligned_x$). By Lemma~\ref{lemma:cfa}, we have $\unaligned_{y'}$ and $\unaligned_{n'}$ for $n' \in \tsc{names}(\term_{y'})$.
      By \resu{$''$}, we then have $l_1|_\restt = []$.
      From \resu{$'$}, we also have $l_2|_\restt = []$ and the result follows.
  \end{description}
  \noindent\tbf{Subcase} $\term_1 = \ttt{if } y \ttt{ then } \term_t \ttt{ else } \term_e$\\
  The possible derivations are
  \[
    \begin{gathered}
      \frac{ \rho \vdash y \sem{[]}{[]}{1} \true \quad \rho \vdash \term_t \sem{l_1}{s_1}{w_1} \termv_t }
      {\rho \vdash \ttt{if } y \ttt{ then } \term_t \ttt{ else } \term_e \sem{l_1}{s_1}{w_1} \termv_t}
      (\textsc{If-True}) \\
      \frac{ \rho \vdash y \sem{[]}{[]}{1} \false \quad \rho \vdash \term_e \sem{l_1}{s_1}{w_1} \termv_e }
      {\rho \vdash \ttt{if } y \ttt{ then } \term_t \ttt{ else } \term_e \sem{l_1}{s_1}{w_1} \termv_e}
      (\textsc{If-False}) \\
    \end{gathered}
  \]
  Without loss of generality, we only consider (\tsc{If-True}).
  Note that for the subderivation $\rho \vdash \term_t \sem{l_1}{s_1}{w_1} \termv_t$, \resunalignedall{}, denoted \resunalignedall{$_t$} below, holds immediately by the induction hypothesis as \condenv{} holds for $\rho$.
  \begin{description}
    \item[\condenv{$'$}]
      Follows from \resv{$_t$} and $\tsc{name}(\term_t) \subseteq S_x$ from Lemma~\ref{lemma:cfa}.
  \end{description}
  We now apply the induction hypothesis and get \resunalignedall{$'$}.
  \begin{description}
    \item[\resu{}]
      Assume we have $\unaligned_n$ for all $n \in \tsc{names}(\term')$ (including $\unaligned_x$).
      Then, by Lemma~\ref{lemma:cfa}, $\unaligned_{n'}$ for $n' \in \tsc{names}(\term_t)$.
      Therefore, $l_1|_\restt = []$ by \resu{$_t$}.
      From \resu{$'$}, we also have $l_2|_\restt = []$ and the result follows.
  \end{description}
  \noindent\tbf{Subcase} $\term_1 = \ttt{assume } y$\\
  The derivation is
  \[
    \frac{\rho \vdash y \sem{[]}{[]}{1} d \quad w = f_d(c) }
    {\rho \vdash \ttt{assume } y \sem{[]}{[c]}{w} c}
    (\textsc{Assume})
  \]
  \begin{description}
    \item[\condenv{$'$}]
      By Definition~\ref{def:trace}, $|c| = 0$. The result follows immediately.
  \end{description}
  We now apply the induction hypothesis and get \resunalignedall{$'$}.
  \begin{description}
    \item[\resu{}] We clearly have $l_1 = []$. The result now follows immediately from \resu{$'$}.
  \end{description}
  \noindent\tbf{Subcase} $\term_1 = \ttt{weight } y$\\
  The derivation is
  \[
    \frac{\rho \vdash y \sem{[]}{[]}{1} w}
    {\rho \vdash \ttt{weight } y \sem{[]}{[]}{w} ()}
    (\textsc{Weight})
  \]
  \begin{description}
    \item[\condenv{$'$}]
      We have $w \in \mathbb{R}$ and $|w| = 0$. The result follows immediately.
  \end{description}
  We now apply the induction hypothesis and get \resunalignedall{$'$}.
  \begin{description}
    \item[\resu{}] We clearly have $l_1 = []$. The result now follows immediately from \resu{$'$}.
  \end{description}
  \qed
\end{proof}


\noindent With Lemma~\ref{lemma:unaligned} established, we now give the main lemma used to prove Theorem~\ref{thm:main}.
\begin{lemma}[Aligned evaluations]\label{lemma:aligned}
  Let
  \begin{itemize}
    \item $\term' \in \term$, $\term' \in \Termanf$,
    \item $\rho_1 \vdash \term' \sem{l_1}{s_1}{w_1} \termv_1$, and
    \item $\rho_2 \vdash \term' \sem{l_2}{s_2}{w_2} \termv_2$
  \end{itemize}
  with $\rho_1$ and $\rho_2$ such that, for each $x \in X$,
  \begin{description}
    \item[\condenvtwo{}] for $\rho \in \{\rho_1,\rho_2\}$, \condenv{} holds,
    \item[\condenvrec{}] if $\rho_1(x) = \langle\lambda y. \term_y, \rho_1'\rangle$, $\rho_2(x) = \langle\lambda y. \term_y, \rho_2'\rangle$, and $\ttt{stoch} \not\in S_x$, then $\rho_1'$ and $\rho_2'$ fulfill \condalignedall{}, and
    \item[\condenvneq{}] If $\rho_1(x) \not\eqv \rho_2(x)$, then $\ttt{stoch} \in S_x$.
  \end{description}
  Then,
  \begin{description}
    \item[\resa{}] $l_1|_\restt = l_2|_\restt$,
    \item[\resvrec{}]
      if $\termv_1 = \langle\lambda y. \term_y, \rho_1'\rangle$, $\termv_2 = \langle\lambda y. \term_y, \rho_2'\rangle$, and $\ttt{stoch} \not\in S_{\tsc{name}(\term')}$, then $\rho_1'$ and $\rho_2'$ fulfill \condalignedall{}, and
    \item[\resvneq{}] If $\termv_1 \not\eqv \termv_2$, then $\ttt{stoch} \in S_{\tsc{name}(\term')}$.
  \end{description}

\end{lemma}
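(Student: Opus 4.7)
The proof will proceed by simultaneous structural induction on the two derivations $\rho_1 \vdash \term' \sem{l_1}{s_1}{w_1} \termv_1$ and $\rho_2 \vdash \term' \sem{l_2}{s_2}{w_2} \termv_2$, with case analysis on the shape of $\term'$ and, in the let case, on the shape of the bound expression $\term_1$. The base case $\term' = x$ is immediate: both derivations give $l_i = []$, so \resa{} holds trivially, and \resvrec{}/\resvneq{} transfer directly from \condenvrec{}/\condenvneq{} applied at $x$, together with $\tsc{name}(\term') = x$ and \condenvtwo{}.

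For $\term' = \ttt{let } x = \term_1 \ttt{ in } \term_2$, the straightforward sub-cases are when $\term_1$ is a variable, constant, lambda, \ttt{assume}, or \ttt{weight}: both derivations at the level of $\term_1$ produce the empty let-sequence, so the contribution to $l_1|_\restt$ and $l_2|_\restt$ comes only from $[x]$ and the derivations of $\term_2$. In each sub-case we verify that the extended environments $\rho_i, x \mapsto \termv'$ still satisfy \condalignedall{} for the bound name $x$: for constants and lambdas this is trivial since $\rho_1(x) \eqv \rho_2(x)$ and we can re-use (C3)-style information from Lemma~\ref{lemma:cfa}; for \ttt{assume}, the constraint at line~\ref{line:genassume} gives $\ttt{stoch} \in S_x$, which discharges \condenvneq{} regardless of which sample each trace drew. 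Applying the induction hypothesis to $\term_2$ then yields \resalignedall{}, and since $\tsc{name}(\term') = \tsc{name}(\term_2)$, the results for $\term'$ follow (note that $[x]|_\restt$ coincides in both derivations).

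The main difficulty lies in the application case $\term_1 = y \, z$ and the \ttt{if} case. For an application, if $\rho_1(y) \eqv \rho_2(y)$ with both values being the same closure $\langle\lambda y'. \term_{y'}, \rho_i'\rangle$ and $\ttt{stoch} \notin S_y$, then \condenvrec{} applied at $y$ together with \condalignedall{} for $\rho_i(z)$ shows that the extended closure environments satisfy \condalignedall{}; the induction hypothesis on the body yields \resalignedall{} for the body, which propagates to $x$ via the constraints $S_\mi{rhs} \subseteq S_{y'}$ and $S_{\tsc{name}(\term_{y'})} \subseteq S_x$. The hard sub-case is when $\ttt{stoch} \in S_y$; this covers both the explicitly stochastic function and, by \condenvneq{}, the case $\rho_1(y) \neq \rho_2(y)$. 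Here the last constraint of Algorithm~\ref{alg:gencstr}'s application clause marks every body of every abstract lambda flowing to $y$ as unaligned, and a further constraint gives $\ttt{stoch} \in S_x$. An independent application of Lemma~\ref{lemma:unaligned} to each of the two function-body subderivations then yields $l_1'|_\restt = l_2'|_\restt = []$; combined with $\ttt{stoch} \in S_x$ discharging \condenvneq{} for $x$, the induction hypothesis on $\term_2$ completes the case. The \ttt{if} case follows the same pattern: a non-stochastic condition forces both derivations into the same branch (by \condenvneq{} at $y$) and we apply the IH directly; a stochastic condition invokes the ``$\ttt{stoch} \in S_y \Rightarrow \unaligned_n$'' constraints together with Lemma~\ref{lemma:unaligned} on both branches.

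The principal obstacle is the delicate bookkeeping in the stochastic sub-cases: although the two traces may diverge arbitrarily through nested function bodies, we must verify that no let-binding they pass through is aligned. This forces a careful threading of \resvrec{} and \resvneq{} through the recursion, ensuring that whenever abstract values in $S_{(\cdot)}$ fail to contain $\ttt{stoch}$ the two runtime values truly coincide up to $\eqv$, and conversely that $\ttt{stoch}$ is present on every path where they can differ. Getting the propagation of \condenvrec{} through closure capture and rebinding exactly right, so that Lemma~\ref{lemma:unaligned}'s hypothesis \condenv{} is available at every invocation, is the subtlest part of the argument.
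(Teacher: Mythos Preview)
Your proposal is correct and follows essentially the same approach as the paper: simultaneous structural induction on the two derivations, with the let-case split by the shape of the bound expression, the non-stochastic application/\ttt{if} sub-cases handled by the induction hypothesis after propagating \condalignedall{} into the body/branch environments, and the stochastic sub-cases discharged by invoking Lemma~\ref{lemma:unaligned} twice to show both inner let-sequences restrict to~$[]$. The paper's argument differs only in minor presentational details (it also spells out the \textsc{Const-App} sub-case explicitly, which you fold into the ``easy'' direction), and your identification of the delicate point---threading \condenvrec{} through captured closure environments so that Lemma~\ref{lemma:unaligned}'s hypothesis is always available---matches exactly where the paper's bookkeeping is concentrated.
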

\begin{proof}
  We proceed by simultaneous structural induction over
  $\rho_1 \vdash \term' \sem{l_1}{s_1}{w_1} \termv_1$ and
  $\rho_2 \vdash \term' \sem{l_2}{s_2}{w_2} \termv_2$.\\[2mm]
  \tbf{Case} $\term' = x$:\\
  The possible derivations are
  \[
    \frac{}
    { \rho_1 \vdash x \sem{[]}{[]}{1} \rho_1(x) }
    (\textsc{Var})
    \quad
    \frac{}
    { \rho_2 \vdash x \sem{[]}{[]}{1} \rho_2(x) }
    (\textsc{Var})
  \]
  \begin{description}
    \item[\resa{}] We have $l_1 = l_2 = [] = l_1|_\restt = l_2|_\restt$.
    \item[\resvrec{}] By $\tsc{name}(\term') = x$ and \condenvrec{}.
    \item[\resvneq{}] By $\tsc{name}(\term') = x$ and \condenvneq{}.
  \end{description}
  \noindent\tbf{Case} $\term' = (\ttt{let } x = \term_1 \ttt{ in } \term_2)$:\\
  The possible derivations are
  \[
    \frac{ \rho_1 \vdash \term_1 \sem{l_{11}}{s_{11}}{w_{11}} \termv_1' \quad \rho_1, x \mapsto \termv_1' \vdash \term_{2} \sem{l_{12}}{s_{12}}{w_{12}} \termv_1}
    { \rho_1 \vdash \ttt{let } x = \term_1 \ttt{ in } \term_2 \sem{l_{11} \concat [x] \concat l_{12}}{s_{11} \concat s_{12}}{w_{11} \cdot w_{12}} \termv_1 }
    (\textsc{Let})
  \]
  \[
    \frac{ \rho_2 \vdash \term_1 \sem{l_{21}}{s_{21}}{w_{21}} \termv_2' \quad \rho_2, x \mapsto \termv_2' \vdash \term_{2} \sem{l_{22}}{s_{22}}{w_{22}} \termv_2}
    { \rho_2 \vdash \ttt{let } x = \term_1 \ttt{ in } \term_2 \sem{l_{21} \concat [x] \concat l_{22}}{s_{21} \concat s_{22}}{w_{21} \cdot w_{22}} \termv_2 }
    (\textsc{Let})
  \]
  Assume $l_{11}|_\restt = l_{21}|_\restt$ and $l_{12}|_\restt = l_{22}|_\restt$.
  Then,
  \[
    \begin{aligned}
      l_1|_\restt
      &= (l_{11} \concat [x] \concat l_{12})|_\restt \\
      &= l_{11}|_\restt \concat [x]|_\restt \concat l_{12}|_\restt
      = l_{21}|_\restt \concat [x]|_\restt \concat l_{22}|_\restt \\
      &= (l_{21} \concat [x] \concat l_{22})|_\restt
      = l_2|_\restt.
    \end{aligned}
  \]
  That is, for \resa{}, we only need $l_{11}|_\restt = l_{21}|_\restt$ and $l_{12}|_\restt = l_{22}|_\restt$.
  Let $\rho_1' = \rho_1, x \mapsto \termv'_1)$ and $\rho_2' = \rho_2, x \mapsto \termv'_2$.
  To apply the induction hypothesis, we must establish \condalignedall{} for $\rho_1'$ and $\rho_2'$.
  To avoid confusion with the original assumptions \condalignedall{} for $\rho_1$ and $\rho_2$, we use the notation \condalignedall{$'$} for the $\rho_1'$ and $\rho_2'$ conditions.
  To prove \condalignedall{$'$}, note that we only need to consider $\rho'_1(x)$ and $\rho'_2(x)$.
  For all other $x' \in X$, $\rho'_1(x') = \rho_1(x')$ and $\rho'_2(x') = \rho_2(x')$, and \condalignedall{$'$} follow directly as a result of \condalignedall{}.
  We denote the induction hypothesis results \resalignedall{} for $\rho_1' \vdash t_{2} \sem{l_{12}}{s_{12}}{w_{12}} \termv_1$ and $\rho_2' \vdash t_{2} \sem{l_{22}}{s_{22}}{w_{22}} \termv_2$ with \resalignedall{$'$}.
  Next, we consider each case for $\term_1$ (according to $\termanf'$ in \eqref{eq:anf}, p.~\pageref{eq:anf}).
  Note that \resvrec{} and \resvneq{} follow directly from \resvrec{$'$} and \resvneq{$'$} as $\tsc{name}(\term_2) = \tsc{name}(\term')$.
  Thus, we only need to consider \condalignedall{$'$} and \resa{}.
  \\[2mm]\noindent\tbf{Subcase} $\term_1 = y$\\
  The derivations for $\term_1$ are
  \[
    \frac{}
    { \rho_1 \vdash y \sem{[]}{[]}{1} \rho_1(y) }
    (\textsc{Var})
    \quad
    \frac{}
    { \rho_2 \vdash y \sem{[]}{[]}{1} \rho_2(y) }
    (\textsc{Var})
  \]
  We first establish \condalignedall{$'$}.
  Clearly $\rho_1'(x) = \rho_1(y)$ and $\rho_2'(x) = \rho_2(y)$.
  Also, $S_y \subseteq S_x$ from Lemma~\ref{lemma:cfa}.
  \begin{description}
    \item[\condenvtwo{$'$}]
      By repeating the corresponding argument for \condenv{$'$} in Lemma~\ref{lemma:unaligned} for both $\rho_1'(x)$ and $\rho_2'(x)$.
    \item[\condenvrec{$'$}]
      Assume $\rho_1'(x) = \langle\lambda z. \term_z, \rho_1''\rangle$, $\rho_2'(x) = \langle\lambda z. \term_z, \rho_2''\rangle$, and $\ttt{stoch} \not\in S_x$.
      By $S_y \subseteq S_x$, $\ttt{stoch} \not\in S_y$.
      Because $\rho_1'(x) = \rho_1(y)$ and $\rho_2'(x) = \rho_2(y)$, the result follows from \condenvrec{}.
    \item[\condenvneq{$'$}]
      If $\rho_1'(x) \not\eqv \rho_2'(x)$, then clearly $\rho_1(y) \not\eqv \rho_2(y)$.
      Hence, $\ttt{stoch} \in S_y$ by \condenvneq{} and the result follows by $S_y \subseteq S_x$.
  \end{description}
  We now apply the induction hypothesis and get \resalignedall{$'$}.
  \begin{description}
    \item[\resa{}] The result follows from $l_{11} = l_{21} = []$ and \resa{$'$}.
  \end{description}
  \noindent\tbf{Subcase} $\term_1 = c$\\
  The derivations for $\term_1$ are
  \[
      \frac{}
      { \rho_1 \vdash c \sem{[]}{[]}{1} c }
      (\textsc{Const})
      \quad
      \frac{}
      { \rho_2 \vdash c \sem{[]}{[]}{1} c }
      (\textsc{Const})
  \]
  We first establish \condalignedall{$'$}.
  \begin{description}
    \item[\condenvtwo{$'$}]
      By repeating the corresponding argument for \condenv{$'$} in Lemma~\ref{lemma:unaligned} for $\rho_1'(x) = \rho_2'(x)$.
    \item[\condenvrec{$'$}]
      Follows directly as $\rho_1'(x) = \rho_2'(x) = c$.
    \item[\condenvneq{$'$}]
      Follows directly because $\rho_1'(x) = \rho_2'(x) = c$.
  \end{description}
  We now apply the induction hypothesis and get \resalignedall{$'$}.
  \begin{description}
    \item[\resa{}] The result follows from $l_{11} = l_{21} = []$ and \resa{$'$}.
  \end{description}
  \noindent\tbf{Subcase} $\term_1 = \lambda y. \term_y$\\
  The derivations are
  \[
    \begin{gathered}
      \frac{}
      { \rho_1 \vdash \lambda y. \term_y \sem{[]}{[]}{1} \langle\lambda y. \term_y,\rho_1\rangle } (\textsc{Lam}) \\
      \frac{}
      { \rho_2 \vdash \lambda y. \term_y \sem{[]}{[]}{1} \langle\lambda y. \term_y,\rho_2\rangle }
      (\textsc{Lam})
    \end{gathered}
  \]
  We first establish \condalignedall{$'$}.
  \begin{description}
    \item[\condenvtwo{$'$}]
      By repeating the corresponding argument for \condenv{$'$} in Lemma~\ref{lemma:unaligned} for both $\rho_1'(x)$ and $\rho_2'(x)$.
    \item[\condenvrec{$'$}]
      Follows because $\rho_1$ and $\rho_2$ fulfills \condalignedall{}.
    \item[\condenvneq{$'$}]
      Follows because $\rho_1'(x) \eqv \rho_2'(x)$.
  \end{description}
  We now apply the induction hypothesis and get \resalignedall{$'$}.
  \begin{description}
    \item[\resa{}] The result follows from $l_{11} = l_{21} = []$ and \resa{$'$}.
  \end{description}
  \noindent\tbf{Subcase} $\term_1 = y \s z$\\
  The possible derivations are
  \[
    \begin{gathered}
      \frac{
        \begin{gathered}
          \rho_1 \vdash y \sem{[]}{[]}{1} \langle\lambda y_1. \term_{y_1},\rho_{y_1}\rangle \quad
          \rho_1 \vdash z \sem{[]}{[]}{1} \rho_1(z) \\[-1mm]
          \rho_{y_1}, y_1 \mapsto \rho_1(z) \vdash \term_{y_1} \sem{l_{11}}{s_{11}}{w_{11}} \termv'_{1}
        \end{gathered}
      }
      { \rho_1 \vdash y \s z \sem{l_{11}}{s_{11}}{w_{11}} \termv'_{1} }
      (\textsc{App}) \\
      \frac{
        \begin{gathered}
          \rho_2 \vdash y \sem{[]}{[]}{1} \langle\lambda y_2. \term_{y_2},\rho_{y_2}\rangle \quad
          \rho_2 \vdash z \sem{[]}{[]}{1} \rho_2(z) \\[-1mm]
          \rho_{y_2}, y_2 \mapsto \rho_2(z) \vdash \term_{y_2} \sem{l_{21}}{s_{21}}{w_{21}} \termv'_{2}
        \end{gathered}
      }
      { \rho_2 \vdash y \s z \sem{l_{21}}{s_{21}}{w_{21}} \termv'_{2} }
      (\textsc{App}) \\
      \frac{
          \rho_1 \vdash y \sem{[]}{[]}{1} c_1 \quad
          \rho_1 \vdash z \sem{[]}{[]}{1} \rho_1(z) \quad
          |c_1| > 0
      }
      { \rho_1 \vdash y \s z \sem{[]}{[]}{1} \delta(c_1,\rho_1(z)) }
      (\textsc{Const-App}) \\
      \frac{
          \rho_2 \vdash y \sem{[]}{[]}{1} c_2 \quad
          \rho_2 \vdash z \sem{[]}{[]}{1} \rho_2(z) \quad
          |c_2| > 0
      }
      { \rho_2 \vdash y \s z \sem{[]}{[]}{1} \delta(c_2,\rho_2(z)) }
      (\textsc{Const-App})
    \end{gathered}
  \]
  We first establish \condalignedall{$'$}.
  \begin{description}
    \item[\condenvtwo{$'$}]
      By repeating the corresponding argument for \condenv{$'$} in Lemma~\ref{lemma:unaligned} for both $\rho_1'(x)$ and $\rho_2'(x)$.
    \item[\condenvrec{$'$}]
      Assume $\ttt{stoch} \not\in S_x$.
      For (\tsc{Const-App}), $\rho_1'(x) \in C$ and $\rho_2'(x) \in C$, and the result follows immediately.
      Therefore, assume that both derivations are (\tsc{App}).
      By Lemma~\ref{lemma:cfa}, $\ttt{stoch} \in S_y \Rightarrow \ttt{stoch} \in S_x$, and consequently $\ttt{stoch} \not\in S_y$.
      By \condenvneq{}, this leads to $\rho_1(y) = \langle\lambda y_1. \term_{y_1},\rho_{y_1}\rangle \eqv \rho_2(y) = \langle\lambda y_2. \term_{y_2},\rho_{y_2}\rangle$.
      That is, $\lambda y_1. \term_{y_1} = \lambda y_2. \term_{y_2} = \lambda y'. \term_{y'}$.
      By \condenvrec{}, $\rho_{y_1}$ and $\rho_{y_2}$ fulfill \condalignedall{}.
      Let $\rho_1'' = \rho_{y_1}, y' \mapsto \rho_1(z)$ and $\rho_2'' = \rho_{y_2}, y' \mapsto \rho_2(z)$ and consider the derivations $\rho_1'' \vdash \term_{y'} \sem{l_{11}}{s_{11}}{w_{11}} \termv'_1$ and $\rho_2'' \vdash \term_{y'} \sem{l_{21}}{s_{21}}{w_{21}} \termv'_2$.
      It is straightforward to check that $\rho_1''$ and $\rho_2''$ fulfill \condenv{}, and we apply the induction hypothesis and get the results \resalignedall{$''$}.
      Now, by Lemma~\ref{lemma:cfa}, $S_{\tsc{name}(\term_{y'})} \subseteq S_x$.
      Combined with \resvrec{$''$}, the result follows.
    \item[\condenvneq{$'$}]
      Assume $\rho_1'(x) \not\eqv \rho_2'(x)$, and consider first the case where $\rho_1(y) \not\eqv \rho_2(y)$.
      Then, by \condenvneq{}, $\ttt{stoch} \in S_y$ and by $\ttt{stoch} \in S_y \Rightarrow \ttt{stoch} \in S_x$ from Lemma~\ref{lemma:cfa}, $\ttt{stoch} \in S_x$ and we are done.
      Therefore, assume $\rho_1(y) \eqv \rho_2(y)$.
      Consequently, both derivations are either (\tsc{Const-App}) or (\tsc{App}).
      If both derivations are (\tsc{Const-App}), $c_1 \eqv c_2 \eqv c$ and $\rho_1'(x) \not\eqv \rho_2'(x)$ implies $\rho_1(z) \not\eqv \rho_2(z)$.
      By \condenvneq{}, this implies $\ttt{stoch} \in S_z$.
      Lemma~\ref{lemma:cfa} gives $\ttt{const} \s \_ \in S_y \Rightarrow (\ttt{stoch} \in S_z \Rightarrow \ttt{stoch} \in S_x)$.
      Clearly, $\ttt{const} \s |c| \in S_y$ by \condenvtwo{} and $|c| > 1$.
      It follows that $\ttt{stoch} \in S_x$.
      If both derivations are instead (\tsc{App}), we repeat the argument for \condenvrec{$'$} and get \resalignedall{$''$}.
      Furthermore, we must have $\termv_1' = \rho_1'(x) \not\eqv \rho_2'(x) = \termv_2'$.
      By Lemma~\ref{lemma:cfa}, $S_{\tsc{name}(\term_{y'})} \subseteq S_x$.
      The result now follows from \resvneq{$''$}.
  \end{description}
  We now apply the induction hypothesis and get \resalignedall{$'$}.
  \begin{description}
    \item[\resa{}]
      First, by \resa{$'$}, we have $l_{12}|_\restt = l_{22}|_\restt$.
      We now show $l_{11}|_\restt = l_{21}|_\restt$.

      Assume that $\ttt{stoch} \in S_y$.
      Then, in all cases we have $l_{11}|_\restt = l_{21}|_\restt = []$ and the result follows.
      To see this, note first that for the (\tsc{Const-App}) derivations, the result holds immediately.
      Therefore, assume both derivations are (\tsc{App}).
      Now, by Lemma~\ref{lemma:cfa}, we have $\ttt{stoch} \in S_y \Rightarrow (\forall y' \s \lambda y'. \_ \in S_y \Rightarrow \unaligned_y')$.
      In other words, $\unaligned_{y_1}$ and $\unaligned_{y_2}$.
      Again by Lemma~\ref{lemma:cfa}, $\unaligned_{n_1}$ for all $n_1 \in \tsc{names}(t_{y_1})$ and $\unaligned_{n_2}$ for all $n_2 \in \tsc{names}(t_{y_2})$.
      Let $\rho_1'' = \rho_{y_1}, y_1 \mapsto \rho_1(z)$ and $\rho_2'' = \rho_{y_2}, y_2 \mapsto \rho_2(z)$ and consider the derivations $\rho_1'' \vdash \term_{y_1} \sem{l_{11}}{s_{11}}{w_{11}} \termv'_1$ and $\rho_2'' \vdash \term_{y_2} \sem{l_{21}}{s_{21}}{w_{21}} \termv'_2$.
      It is straightforward to check that $\rho_1''$ and $\rho_2''$ fulfill \condenv{}
      and double applications of Lemma~\ref{lemma:unaligned} give the required result $l_{11}|_\restt = l_{21}|_\restt = []$.

      Now, assume that $\ttt{stoch} \not\in S_y$.
      Clearly, both derivations are either (\tsc{Const-App}) or (\tsc{App}).
      The (\tsc{Const-App}) case is trivial, because $l_{11}|_\restt = l_{21}|_\restt = []$.
      Therefore, assume both derivations are (\tsc{App}).
      By repeating the reasoning in \condenvrec{$'$}, we get \resalignedall{$''$} by the induction hypothesis for the derivations $\rho_1'' \vdash \term_{y'} \sem{l_{11}}{s_{11}}{w_{11}} \termv'_1$ and $\rho_2'' \vdash \term_{y'} \sem{l_{21}}{s_{21}}{w_{21}} \termv'_2$.
      In other words, $l_{11}|_\restt = l_{21}|_\restt$ by \resa{$''$} and we are done.
  \end{description}
  \noindent\tbf{Subcase} $\term_1 = \ttt{if } y \ttt{ then } \term_t \ttt{ else } \term_e$\\
  The possible derivations are
  \[
    \begin{gathered}
      \frac{ \rho_1 \vdash y \sem{[]}{[]}{1} \true \quad \rho_1 \vdash \term_t \sem{l_{11}}{s_{11}}{w_{11}} \termv_{t_1} }
      {\rho_1 \vdash \ttt{if } y \ttt{ then } \term_t \ttt{ else } \term_e \sem{l_{11}}{s_{11}}{w_{11}} \termv_{t_1}}
      (\textsc{If-True}) \\
      \frac{ \rho_2 \vdash y \sem{[]}{[]}{1} \true \quad \rho_2 \vdash \term_t \sem{l_{21}}{s_{21}}{w_{21}} \termv_{t_2} }
      {\rho_2 \vdash \ttt{if } y \ttt{ then } \term_t \ttt{ else } \term_e \sem{l_{21}}{s_{21}}{w_{21}} \termv_{t_2}}
      (\textsc{If-True}) \\
      \frac{ \rho_1 \vdash y \sem{[]}{[]}{1} \false \quad \rho_1 \vdash \term_e \sem{l_{11}}{s_{11}}{w_{11}} \termv_{e_1} }
      {\rho_1 \vdash \ttt{if } y \ttt{ then } \term_t \ttt{ else } \term_e \sem{l_{11}}{s_{11}}{w_{11}} \termv_{e_1}}
      (\textsc{If-False}) \\
      \frac{ \rho_2 \vdash y \sem{[]}{[]}{1} \false \quad \rho_2 \vdash \term_e \sem{l_{21}}{s_{21}}{w_{21}} \termv_{e_2} }
      {\rho_2 \vdash \ttt{if } y \ttt{ then } \term_t \ttt{ else } \term_e \sem{l_{21}}{s_{21}}{w_{21}} \termv_{e_2}}
      (\textsc{If-False})
    \end{gathered}
  \]
  We first establish \condalignedall{$'$}.
  \begin{description}
    \item[\condenvtwo{$'$}]
      Holds in all four cases by repeating the corresponding argument for \condenv{$'$} in Lemma~\ref{lemma:unaligned}.
    \item[\condenvrec{$'$}]
      Assume $\ttt{stoch} \not\in S_x$.
      By Lemma~\ref{lemma:cfa}, clearly $\ttt{stoch} \not\in S_y$ and both derivations are either (\tsc{If-True}) or (\tsc{If-False}).
      Without loss of generality, assume both derivations are (\tsc{If-True}).
      The induction hypothesis directly applies to $\rho_1 \vdash \term_t \sem{l_{11}}{s_{11}}{w_{11}} \termv_{t_1}$ and $\rho_2 \vdash \term_t \sem{l_{21}}{s_{21}}{w_{21}} \termv_{t_2}$, and we get the result \resalignedall{$_t$}.
      By Lemma~\ref{lemma:cfa}, $\tsc{name}(\term_t) \subseteq S_x$.
      The result now follows from \resvrec{$_t$}.
    \item[\condenvneq{$'$}]
      Assume first that $\ttt{stoch} \in S_y$.
      Then $\ttt{stoch} \in S_x$ by Lemma~\ref{lemma:cfa}, and the result is immediate.
      Therefore, assume $\ttt{stoch} \not\in S_y$.
      Again, both derivations are either (\tsc{If-True}) or (\tsc{If-False}) and we assume, without loss of generality, that both are (\tsc{If-True}).
      The induction hypothesis directly applies to $\rho_1 \vdash \term_t \sem{l_{11}}{s_{11}}{w_{11}} \termv_{t_1}$ and $\rho_2 \vdash \term_t \sem{l_{21}}{s_{21}}{w_{21}} \termv_{t_2}$, and we get the result \resalignedall{$_t$}.
      By Lemma~\ref{lemma:cfa}, $\tsc{name}(\term_t) \subseteq S_x$.
      The result now follows from \resvneq{$_t$}.
  \end{description}
  We now apply the induction hypothesis and get \resalignedall{$'$}.
  \begin{description}
    \item[\resa{}]
      First, by \resa{$'$}, we have $l_{12}|_\restt = l_{22}|_\restt$.
      We now show $l_{11}|_\restt = l_{21}|_\restt$.

      If $\ttt{stoch} \in S_y$, then by Lemma~\ref{lemma:cfa}, $\unaligned_{n_t}$ for all $n_t \in \tsc{names}(\term_{t})$ and $\unaligned_{n_e}$ for all $n_e \in \tsc{names}(\term_e)$.
      By repeating Lemma~\ref{lemma:unaligned} twice, we get $l_{11}|_\restt = l_{21}|_\restt = []$ and the result follows.

      Assume $\ttt{stoch} \not\in S_y$.
      Again, both derivations are either (\tsc{If-True}) or (\tsc{If-False}) and we assume, without loss of generality, that both are (\tsc{If-True}).
      The induction hypothesis directly applies to $\rho_1 \vdash \term_t \sem{l_{11}}{s_{11}}{w_{11}} \termv_{t_1}$ and $\rho_2 \vdash \term_t \sem{l_{21}}{s_{21}}{w_{21}} \termv_{t_2}$, and we get the result \resalignedall{$_t$}.
      By \resa{$_t$}, $l_{11}|_\restt = l_{21}|_\restt$.
  \end{description}
  \noindent\tbf{Subcase} $\term_1 = \ttt{assume } y$\\
  The derivations are
  \[
    \begin{gathered}
      \frac{\rho_1 \vdash y \sem{[]}{[]}{1} d_1 \quad w_1 = f_{d_1}(c_1) }
      {\rho_1 \vdash \ttt{assume } y \sem{[]}{[c_1]}{w_1} c_1}
      (\textsc{Assume})
      \\
      \frac{\rho_2 \vdash y \sem{[]}{[]}{1} d_2 \quad w_2 = f_{d_2}(c_2) }
      {\rho_2 \vdash \ttt{assume } y \sem{[]}{[c_2]}{w_2} c_2}
      (\textsc{Assume})
    \end{gathered}
  \]
  We first establish \condalignedall{$'$}.
  \begin{description}
    \item[\condenvtwo{$'$}]
      By repeating the corresponding argument for \condenv{$'$} in Lemma~\ref{lemma:unaligned} for both $\rho_1'(x)$ and $\rho_2'(x)$.
    \item[\condenvrec{$'$}]
      Immediate as $\rho_1'(x) = c_1$ and $\rho_2'(x) = c_2$.
    \item[\condenvneq{$'$}]
      By Lemma~\ref{lemma:cfa}, $\ttt{stoch} \in S_x$.
  \end{description}
  We now apply the induction hypothesis and get \resalignedall{$'$}.
  \begin{description}
    \item[\resa{}] The result follows from $l_{11} = l_{21} = []$ and \resa{$'$}.
  \end{description}
  \noindent\tbf{Subcase} $\term_1 = \ttt{weight } y$\\
  The derivations are
  \[
    \begin{gathered}
      \frac{\rho_1 \vdash y \sem{[]}{[]}{1} w_1}
      {\rho_1 \vdash \ttt{weight } y \sem{[]}{[]}{w_1} ()}
      (\textsc{Weight})
      \\
      \frac{\rho_2 \vdash y \sem{[]}{[]}{1} w_2}
      {\rho_2 \vdash \ttt{weight } y \sem{[]}{[]}{w_2} ()}
      (\textsc{Weight})
    \end{gathered}
  \]
  We first establish \condalignedall{$'$}.
  \begin{description}
    \item[\condenvtwo{$'$}]
      By repeating the corresponding argument for \condenv{$'$} in Lemma~\ref{lemma:unaligned} for both $\rho_1'(x)$ and $\rho_2'(x)$.
    \item[\condenvrec{$'$}]
      Immediate as $\rho_1'(x) = \rho_2'(x) = ()$.
    \item[\condenvneq{$'$}]
      Immediate as $\rho_1'(x) \eqv \rho_2'(x)$.
  \end{description}
  We now apply the induction hypothesis and get \resalignedall{$'$}.
  \begin{description}
    \item[\resa{}] The result follows from $l_{11} = l_{21} = []$ and \resa{$'$}.
  \end{description}
  \qed
\end{proof}

\section{Unaligned SMC}\label{sec:smcunaligned}
Algorithm~\ref{alg:smcunaligned} presents the unaligned SMC algorithm.
It is in many ways similar to Algorithm~\ref{alg:smc}.

\begin{algorithm}[tb]
  \caption{%
    Unaligned SMC. The input is a program $\term \in \Termanf$ and the number of execution instances $n$.
  }\label{alg:smcunaligned}
  \vspace{-3mm}
  \begin{enumerate}
    \item Initiate $n$ execution instances $\{e_i \mid i \in \mathbb{N}, 1 \leq i \leq n\}$ of $\term$.
    \item\label{alg:smcunaligned:start}
      Execute all $e_i$ (for already terminated $e_i$, do nothing) and suspend execution upon reaching a weight (i.e., \ttt{let $x$ = weight $w$ in \term}) or when the execution terminates naturally.
      The result is a new set of execution instances $e_i'$ with weights $w_i'$ (from $w$, or $1$ if already terminated).
    \item\label{alg:smcunaligned:terminate}
      If all $e_i' = \termv_i'$ (i.e., all executions have terminated and returned a value), terminate inference and return the set of samples $\termv_i'$.
      The samples approximate the probability distribution encoded by $\term$.
    \item
      Resample the $e_i'$ according to their weights $w_i'$.
      The result is a new set of unweighted execution instances $e_i''$.
      Set $e_i \gets e_i''$.
      Go to \ref{alg:smcunaligned:start}.
  \end{enumerate}
  \vspace{-3mm}
\end{algorithm}%

\section{Lightweight MCMC}\label{sec:mcmcunaligned}
Algorithm~\ref{alg:mcmcunaligned} presents the lightweight MCMC algorithm.
The algorithm is in many ways similar to Algorithm~\ref{alg:mcmc}, but relies on databases represented with $D_i$ (random draws) and $p_i$ (probability densities/masses of the draws) to reuse random draws.
The \tsc{Run} function keeps track of the current stack trace $t$ at all times and uses it to index the databases.

\begin{algorithm}[tb]
  \caption{%
    Lightweight MCMC. The input is a program $\term \in \Termanf$, the number of steps $n$, and the global step probability $g > 0$.
  }\label{alg:mcmcunaligned}
  \vspace{-3mm}
  \begin{enumerate}
    \item Set $i \gets 0$. Call \textsc{Run}.
    \item\label{alg:mcmcunaligned:loop}
      Set $i \gets i + 1$.
      If $i = n$, terminate inference and return the samples $\{\termv_j \mid j \in \mathbb{N}, 0 \leq j < n\}$. They approximate the probability distribution encoded by \term.
    \item Uniformly draw a trace $t'$ from $\text{dom}(D_{i-1})$ at random. Set $\mi{global} \gets \true$ with probability $g$, and $\mi{global} \gets \false$ otherwise. Set $w'_{-1} \gets 1$, and $w' \gets 1$. Call \textsc{Run}.
    \item\label{alg:mcmcunaligned:metropolis}
      Compute the Metropolis--Hastings acceptance ratio
      \begin{equation}
        A = \min\left(1,\frac{w_i}{w_{i-1}}\frac{w'}{w'_{-1}}\frac{|\text{dom}(D_{i-1})|}{|\text{dom}(D_{i})|}\right).
      \end{equation}
    \item
      With probability $A$, accept $\termv_i$ and go to \ref{alg:mcmcunaligned:loop}.
      Otherwise, set $\termv_i \gets \termv_{i-1}$, $w_i \gets w_{i-1}$, $D_i \gets D_{i-1}$, and $p_i \gets p_{i-1}$. Go to \ref{alg:mcmcunaligned:loop}.
  \end{enumerate}
  \vspace{-3mm}
  \lstinline[style=alg]!function $\tsc{run}$() =! Let $t$ represent the current stack trace throughout execution. Run $\term$ and do the following:
  \vspace{-2mm}
  \begin{itemize}
    \item Record the total weight $w_i$ accumulated from calls to \ttt{weight}.
    \item Record the final value $\termv_i$.
    \item At terms \ttt{let $c$ = assume $d$ in \term}, do the following.
      \begin{enumerate}
        \item
          If $t = t'$, $\mi{global} = \true$, or if $t \not\in \text{dom}(D_{i-1})$, sample a value $x$ from $d$.
          Otherwise, reuse the sample $x = D_{i-1}(t)$ and set \mbox{$w'_{-1} \gets w'_{-1} \cdot p_{i-1}(t)$} and \mbox{$w' \gets w' \cdot f_d(c)$}.
        \item
          Set $D_i(t) \gets x$ and $p_{i}(t) \gets f_d(x)$.
        \item
          In the program, bind $c$ to the value $x$ and resume execution.
      \end{enumerate}
  \end{itemize}
  \vspace{-2mm}
\end{algorithm}

\section{Metropolis--Hastings Acceptance Ratio}\label{sec:mcmcaligncont}
This section derives the Metropolis--Hastings acceptance ratio used in Algorithm~\ref{alg:mcmc} and Algorithm~\ref{alg:mcmcunaligned}.
We assume basic familiarity with Bayesian statistics and the Metropolis--Hastings algorithm.

Bayes' theorem on probability density/mass functions is usually written as
\begin{equation}
  p(x|y) = \frac{p(y|x)p(x)}{p(y)}
\end{equation}
where $y$ is some fixed \emph{observed} random variable.
The standard Metropolis--Hastings ratio for a proposal distribution with probability density/mass $q(x'|x)$ is then
\begin{equation}
  A(x,x')
  =
  \min\left(1,
  \frac
  {p(x'|y)}
  {p(x|y)}
  \frac
  {q(x|x')}
  {q(x'|x)}
  \right)
  =
  \min\left(1,
  \frac
  {p(y|x')p(x')}
  {p(y|x)p(x)}
  \frac
  {q(x|x')}
  {q(x'|x)}
  \right).
\end{equation}
Assume a fixed program $\term \in T$ in the remainder of this section.
For such a program, Bayes' theorem takes a generalized form
\begin{equation}\label{eq:bayesppl}
  \hat{p}(s) = \frac{\mathcal L (s)p(s)}{Z}.
\end{equation}
Here, we have replaced $x$ with a trace $s$ (a sequence of random values during evaluation of a probabilistic program) and removed the dependence on $y$ entirely.
We use the notation $\hat{p}$ and $p$ to differentiate between the posterior and prior.
The likelihood function is denoted $\mathcal L$.
$Z$ is a normalizing constant that disappears in the Metropolis--Hastings ratio.

One can view \eqref{eq:bayesppl} in the context of the semantics in Fig.~\ref{fig:semantics}.
We (very informally) have $\hat{p}(s) = w$ up to normalization iff $\varnothing \vdash \term \sem{s}{l}{w} \termv$ for some $l$ and $\termv$.
$\mathcal L (s)$ is then the contribution to $w$ from (\tsc{Weight}), and $p(s)$ from (\tsc{Assume}).
The PPL version of the Metropolis--Hastings ratio is
\begin{equation}
  A(s,s')
  =
  \min\left(1,
  \frac
  {\hat{p}(s')}
  {\hat{p}(s)}
  \frac
  {q(s|s')}
  {q(s'|s)}
  \right)
  =
  \min\left(1,
  \frac
  {\mathcal L (s')p(s')}
  {\mathcal L (s)p(s)}
  \frac
  {q(s|s')}
  {q(s'|s)}
  \right).
\end{equation}
The most trivial proposal, amounting to not reusing any draws, is
\begin{equation}
  q(s'|s) = p(s')
\end{equation}
This directly gives the ratio
\begin{equation}
  A(s,s')
  =
  \min\left(1,
  \frac
  {\mathcal L (s')}
  {\mathcal L (s)}
  \right).
\end{equation}
To derive the ratio for aligned lightweight MCMC and lightweight MCMC, we need to first capture the proposal $q$ used in Algorithm~\ref{alg:mcmc} and Algorithm~\ref{alg:mcmcunaligned}.
We capture the reuse mechanisms (alignment and the stack trace database) in both algorithms through functions $D_1 : S \times S \to \mathcal P(\mathbb N)$ and $D_2 : S \times S \to \mathcal P(\mathbb N)$ such that $|D_1(s,s')| = |D_2(s,s')|$, $D_1(s,s') = D_2(s',s)$, and $D_2(s,s') = D_1(s',s)$.
Intuitively, $D_1(s,s')$ gives the indices in $s$ that match the indices $D_2(s,s')$ in $s'$.

We now define the proposal $q$ as
\begin{equation}
  q(s',i|s)
  =
  [s'|_{A'} = s|_{A}] p|_{{A'}^C}(s')p_i(i|s)
\end{equation}
and make the following definitions.
\begin{itemize}
  \item $A' = D_2(s,s') \setminus f(i,s')$ and $A = D_1(s,s') \setminus f(i,s)$.
  \item The function $f(i,s)$ transforms the index $i$ in the context of $s$ (explained further below).
  \item The function $p_i(i|s)$ is the density for selecting an $i$ given $s$.
  \item The trace $s|_A$ is the restriction of $s$ to $A$ (cf. Definition~\ref{def:seqrestr}).
  \item
    $[\cdots]$ is the Iverson bracket (i.e., evaluates to zero if the predicate $\cdots$ is false and to one if the predicate is true).
  \item
    We denote the contribution to $p(s)$ from the indices $A$ in $s$ with $p|_A(s)$.
    Importantly, $p|_A(s) \cdot p|_{A^C}(s) = p(s)$.
\end{itemize}
Note that $q$ also proposes an auxiliary variable $i$---the trace index that we choose to redraw in the proposal.
Due to the auxiliary variable $i$, the acceptance ratio is now a function of three arguments.
\begin{equation}\label{eq:mhratio}
  \begin{aligned}
    A(s,s',i)
    &=
    \min\left(1,
    \frac
    {\mathcal L (s')p(s')}
    {\mathcal L (s)p(s)}
    \frac
    {q(s,i|s')}
    {q(s',i|s)}
    \right)
    \\&=
    \min\left(1,
    \frac
    {\mathcal L (s')p(s')}
    {\mathcal L (s)p(s)}
    \frac
    {[s|_{A} = s'|_{A'}]}
    {[s'|_{A'} = s|_{A}]}
    \frac
    { p|_{{A}^C}(s)}
    { p|_{{A'}^C}(s')}
    \frac
    {p_i(i|s')}
    {p_i(i|s)}
    \right)
    \\&=
    \min\left(1,
    \frac
    {\mathcal L (s')}
    {\mathcal L (s)}
    \frac
    {p(s')}
    {p|_{{A'}^C}(s')}
    \frac
    {p|_{{A}^C}(s)}
    {p(s)}
    \frac
    {p_i(i|s')}
    {p_i(i|s)}
    \right)
    \\&=
    \min\left(1,
    \frac
    {\mathcal L (s')}
    {\mathcal L (s)}
    \frac
    {p|_{A'}(s')}
    {p|_{A}(s)}
    \frac
    {p_i(i|s')}
    {p_i(i|s)}
    \right)
  \end{aligned}
\end{equation}
This acceptance ratio is equivalent to the ratio derived by van de Meent et al.~\cite[Equation 4.21]{vandemeent2018introduction}.

We first view \eqref{eq:mhratio} in the context of aligned lightweight MCMC in Algorithm~\ref{alg:mcmc}.
Here, $f(i,s)$ returns the $i$-th \emph{aligned} index in $s$ (\emph{not} index $i$ in $s$).
In aligned lightweight MCMC, we only select what to redraw among the aligned draws.
As we know, the number of aligned draws is fixed across all possible executions.
$p_i(i|s)$ is thus a constant, and \eqref{eq:mhratio} reduces to
\begin{equation}
  A(s,s',i) =
  \min\left(1,
  \frac
  {\mathcal L (s')}
  {\mathcal L (s)}
  \frac
  {p|_{A'}(s')}
  {p|_{A}(s)}
  \right)
\end{equation}
This is the ratio computed in step~\ref{alg:mcmc:metropolis} of Algorithm~\ref{alg:mcmc}.

Next, we consider lightweight MCMC in Algorithm~\ref{alg:mcmcunaligned}.
Here, we simply choose $f(i,s) = i$ (the identity function), and select an element to redraw uniformly over the previous trace $s$.
Thus, $p_i(i|s) = 1/|s|$ and \eqref{eq:mhratio} reduces to
\begin{equation}
  A(s,s',i) =
  \min\left(1,
  \frac
  {\mathcal L (s')}
  {\mathcal L (s)}
  \frac
  {p|_{A'}(s')}
  {p|_{A}(s)}
  \frac
  {|s|}
  {|s'|}
  \right)
\end{equation}
This is the ratio computed in step~\ref{alg:mcmcunaligned:metropolis} of Algorithm~\ref{alg:mcmcunaligned}.

\fi

\end{document}